\newcommand {\exe} {\stackrel{\cdot} {=}}
\newcommand {\bze} {\mbox{\boldmath $0$}}
\newcommand {\ba} {\mbox{\boldmath $a$}}
\newcommand {\bb} {\mbox{\boldmath $b$}}
\newcommand {\bh} {\mbox{\boldmath $h$}}
\newcommand {\bet} {\mbox{\boldmath $e$}}
\newcommand {\br} {\mbox{\boldmath $r$}}
\newcommand {\bs} {\mbox{\boldmath $s$}}
\newcommand {\bst} {\mbox{\footnotesize\boldmath $s$}}
\newcommand {\but} {\mbox{\footnotesize\boldmath $u$}}
\newcommand {\brt} {\mbox{\footnotesize\boldmath $r$}}
\newcommand {\bu} {\mbox{\boldmath $u$}}
\newcommand {\bv} {\mbox{\boldmath $v$}}
\newcommand {\bx} {\mbox{\boldmath $x$}}
\newcommand {\by} {\mbox{\boldmath $y$}}
\newcommand {\bz} {\mbox{\boldmath $z$}}
\newcommand {\bzti} {\tilde{\bz}}
\newcommand {\bA} {\mbox{\boldmath $A$}}
\newcommand {\bB} {\mbox{\boldmath $B$}}
\newcommand {\bD} {\mbox{\boldmath $D$}}
\newcommand {\bE} {\mathbb{E}}
\newcommand {\pr} {\mathbb{P}}
\newcommand {\bThet} {\mbox{\boldmath $\Theta$}}
\newcommand {\bG} {\mbox{\boldmath $G$}}
\newcommand {\bH} {\mbox{\boldmath $H$}}
\newcommand {\bI} {\mbox{\boldmath $I$}}
\newcommand {\bN} {\mbox{\boldmath $N$}}
\newcommand {\bQ} {\mbox{\boldmath $Q$}}
\newcommand {\bR} {\mbox{\boldmath $R$}}
\newcommand {\bXt} {\mbox{\boldmath \footnotesize $X$}}
\newcommand {\bYt} {\mbox{\boldmath \footnotesize $Y$}}
\newcommand {\bS} {\mbox{\boldmath $S$}}
\newcommand {\bU} {\mbox{\boldmath $U$}}
\newcommand {\calHs} {\mbox{\boldmath $\calH^{\bst}$}}
\newcommand {\calHr} {\mbox{\boldmath $\calH^{\brt}$}}
\newcommand {\calHsi} {\mbox{\boldmath $\calH_i^{\bst}$}}
\newcommand {\calHri} {\mbox{\boldmath $\calH_i^{\brt}$}}
\newcommand {\calHsii} {\mbox{\boldmath $\calH_{i,j}^{\bst}$}}
\newcommand {\calHrii} {\mbox{\boldmath $\calH_{i,j}^{\brt}$}}
\newcommand {\bV} {\mbox{\boldmath $V$}}
\newcommand {\bW} {\mbox{\boldmath $W$}}
\newcommand {\bX} {\mbox{\boldmath $X$}}
\newcommand {\bY} {\mbox{\boldmath $Y$}}
\newcommand {\blam} {\mbox{\boldmath $\lambda$}}
\newcommand {\bHt} {\mbox{\boldmath \footnotesize $H$}}
\newcommand {\bAt} {\mbox{\boldmath \footnotesize $A$}}
\newcommand {\bSt} {\mbox{\boldmath \footnotesize $S$}}
\newcommand {\bRt} {\mbox{\boldmath \footnotesize $R$}}
\newcommand {\bSti} {\mbox{\boldmath \tiny $S$}}
\newcommand {\bRti} {\mbox{\boldmath \tiny $R$}}
\newcommand {\bThett} {\mbox{\boldmath \tiny $\Theta$}}
\newcommand {\blamt} {\mbox{\boldmath \footnotesize $\lambda$}}
\newcommand {\bxt} {\mbox{\footnotesize\boldmath $x$}}
\newcommand{\bxtt}{\bx_{\bst}}
\newcommand{\bHtt}{\bH_{\bst}}
\newcommand{\bHttt}{\bH_{\tilde\bSt}}
\newcommand{\bHtrt}{\bH_{\tilde\bRt}}
\newcommand{\bHtu}{\bH_{\but}}
\newcommand{\bHtr}{\bH_{\brt}}
\newcommand{\blamtt}{\blam_{\bst}}
\newcommand{\bItt}{\bI_{\bst}}
\newcommand{\bItr}{\bI_{\brt}}
\newcommand{\bQsr}{\bQ_{\bst\cap\brt}}
\newcommand{\bQs}{\bQ_{\bst}}
\newcommand{\bQr}{\bQ_{\brt}}
\newcommand {\calHst} {\mbox{\boldmath $\calH^{\tilde\bSti}$}}
\newcommand {\calHrt} {\mbox{\boldmath $\calH^{\tilde\bRti}$}}
\newcommand{\calA}{{\cal A}}
\newcommand{\calB}{{\cal B}}
\newcommand{\calD}{{\cal D}}
\newcommand{\calF}{{\cal F}}
\newcommand{\calG}{{\cal G}}
\newcommand{\calH}{{\cal H}}
\newcommand{\calN}{{\cal N}}
\newcommand{\calO}{{\cal O}}
\newcommand{\calR}{{\cal R}}
\newcommand{\calS}{{\cal S}}
\newcommand{\calT}{{\cal T}}
\newcommand{\calX}{{\cal X}}
\newcommand{\define}{\stackrel{\triangle}{=}}
\newcommand{\be}{\begin{equation}}
\newcommand{\ee}{\end{equation}}
\newcommand{\beqna}{\begin{eqnarray}}
\newcommand{\eeqna}{\end{eqnarray}}
\DeclareFontFamily{U}{mathx}{\hyphenchar\font45}
\DeclareFontShape{U}{mathx}{m}{n}{
      <5> <6> <7> <8> <9> <10>
      <10.95> <12> <14.4> <17.28> <20.74> <24.88>
      mathx10
      }{}
\DeclareSymbolFont{mathx}{U}{mathx}{m}{n}
\DeclareMathSymbol{\bigtimes}{1}{mathx}{"91}
\newcommand{\Ind}{{\mathds{1}}}
\newcommand{\abs}[1]{\left|#1\right|}
\newcommand{\diag}{\mathop{\mathrm{diag}}}
\DeclareMathOperator{\tr}{tr}
\newtheorem{definition}{Definition}
\newtheorem{theorem}{Theorem}
\newtheorem{proof}{Proof}
\newtheorem{example}{Example}
\newtheorem{lemma}{Lemma} 
\newtheorem{corollary}{Corollary}
\newtheorem{prop}{Proposition}
\newtheorem{remark}{Remark}
\newcommand{\p}[1]{\left(#1\right)}
\newcommand{\pp}[1]{\left[#1\right]}
\newcommand{\ppp}[1]{\left\{#1\right\}}
\newcommand{\norm}[1]{\left\|#1\right\|}
\begin{document}

\title{Asymptotic MMSE Analysis Under Sparse Representation Modeling$^\ast$}
\author{Wasim~Huleihel
        and~Neri~Merhav
				\\
        Department of Electrical Engineering \\
Technion - Israel Institute of Technology \\
Haifa 32000, ISRAEL\\
E-mail: \{wh@tx, merhav@ee\}.technion.ac.il
\thanks{$^\ast$This research was partially supported by The Israel Science Foundation (ISF), grant no. 412/12.}
}
\maketitle

\begin{abstract}
Compressed sensing is a signal processing technique in which data is acquired directly in a compressed form. There are two modeling approaches that can be considered: the worst-case (Hamming) approach and a statistical mechanism, in which the signals are modeled as random processes rather than as individual sequences. In this paper, the second approach is studied. In particular, we consider a model of the form $\bY = \bH\bX+\bW$, where each comportment of $\bX$ is given by $X_i = S_iU_i$, where $\ppp{U_i}$ are i.i.d. Gaussian random variables, and $\ppp{S_i}$ are binary random variables independent of $\ppp{U_i}$, and not necessarily independent and identically distributed (i.i.d.), $\bH\in\mathbb{R}^{k\times n}$ is a random matrix with i.i.d. entries, and $\bW$ is white Gaussian noise. Using a direct relationship between optimum estimation and certain partition functions, and by invoking methods from statistical mechanics and from random matrix theory (RMT), we derive an asymptotic formula for the minimum mean-square error (MMSE) of estimating the input vector $\bX$ given $\bY$ and $\bH$, as $k,n\to\infty$, keeping the measurement rate, $R = k/n$, fixed. In contrast to previous derivations, which are based on the replica method, the analysis carried out in this paper is rigorous. 
\end{abstract}

\begin{IEEEkeywords}
Compressed Sensing (CS), minimum mean-square error (MMSE), partition function, statistical-mechanics, replica method, conditional mean estimation, phase transitions, threshold effect, random matrix. 
\end{IEEEkeywords}

\IEEEpeerreviewmaketitle

\section{Introduction}

\IEEEPARstart{C}{ompressed} sensing \cite{Tao,Donho1} is a signal processing technique that compresses analog vectors by means of a linear transformation. Using some prior knowledge on the signal \emph{sparsity}, and by designing efficient ``encoders" and ``decoders", the goal is to achieve effective compression in the sense of taking a number of measurements much smaller than the dimension of the original signal.

A general setup of compressed sensing is shown in Fig. \ref{fig:Moisycompressed}. The mechanism is as follows: A real vector $\bX\in\mathbb{R}^n$ is mapped into $\bV\in\mathbb{R}^k$ by an encoder (or compressor) $f:\mathbb{R}^n\to\mathbb{R}^k$. The decoder (decompressor) $g:\mathbb{R}^k\to\mathbb{R}^n$ receives $\bY$, which is a noisy version of $\bV$, and outputs $\hat{\bX}$ as the estimation of $\bX$. The measurement rate, or compression ratio, $R$, satisfies $k = \left\lfloor Rn\right\rfloor$. Generally, there are two approaches to the choice of the encoder. The first approach is to constrain the encoder to be a \emph{linear} mapping, denoted by a matrix $\bH\in\mathbb{R}^{k\times n}$, usually called the \emph{sensing matrix} or \emph{measurement matrix}. Under this encoding linearity constraint, it is reasonable to consider optimal deterministic and random sensing matrices. The other approach is to consider \emph{non-linear} encoders. In this paper, we will focus on random linear encoders; $\bH$ is assumed to be a random matrix with i.i.d. entries of zero mean and variance $1/n$. At the decoder side, most of the compressed sensing literature focuses on low-complexity decoding algorithms, which are robust with respect to observation noise, for example, decoders based on convex optimization, greedy algorithms, etc. (see, for example \cite{cc6,cc7,cc8,Gastpar1}). In this paper, on the other hand, the decoder is assumed optimal, namely, it is given by the minimum mean-square error (MMSE) estimator. The input vector $\bX$ is assumed random, distributed according to some measure that is modeling the sparsity. Note that this  Bayesian formulation differs from the ``usual" compressive sensing models, in which the underlying signal is assumed deterministic and the performance is measured on a worst-case basis with respect to $\bX$ (Hamming theory). This statistical approach has been previously adopted in the literature (see, for example, \cite{cc8,Gastpar1,Tanner2,Wu2,Dono,Tulino,WuVerdu,GuoShamaiBaron,KabashimaTanWa}). Finally, the noise is assumed additive, white, and Gaussian. 

The main goal of this paper is to analyze rigorously the asymptotic behavior of the MMSE, namely, to find the MMSE for $k,n\to\infty$ with a fixed ratio $R$. Using the asymptotic MMSE, one can investigate the fundamental tradeoff between optimal reconstruction errors and measurement rates, as a function of the signal and noise statistics. For example, it will be seen that there exists a phase transition threshold of the measurement rate (which depends only on the input statistics). Above the threshold, the noise sensitivity (defined as the ratio between that MMSE and the noise variance) is bounded for all noise variances. Below the threshold, the noise sensitivity goes to infinity as the noise variance tends to zero.

\begin{figure}
\centering
\begin{pspicture}(0,-1.2592187)(11.682813,1.2992188)
\psframe[linewidth=0.04,dimen=outer](4.1209373,0.22078125)(1.5009375,-1.2592187)
\psframe[linewidth=0.04,dimen=outer](10.020938,0.22078125)(7.4209375,-1.2592187)
\psellipse[linewidth=0.04,dimen=outer](5.8509374,-0.43921876)(0.53,0.52)
\psline[linewidth=0.04cm,arrowsize=0.05291667cm 2.0,arrowlength=1.4,arrowinset=0.4]{->}(0.3009375,-0.47921875)(1.5609375,-0.49921876)
\psline[linewidth=0.04cm,arrowsize=0.05291667cm 2.0,arrowlength=1.4,arrowinset=0.4]{->}(6.4009376,-0.43921876)(7.3809376,-0.45921874)
\psline[linewidth=0.04cm,arrowsize=0.05291667cm 2.0,arrowlength=1.4,arrowinset=0.4]{->}(4.1209373,-0.45921874)(5.3409376,-0.47921875)
\psline[linewidth=0.04cm,arrowsize=0.05291667cm 2.0,arrowlength=1.4,arrowinset=0.4]{->}(10.060938,-0.43921876)(11.320937,-0.45921874)
\psline[linewidth=0.04cm,arrowsize=0.05291667cm 2.0,arrowlength=1.4,arrowinset=0.4]{->}(5.9009376,0.86078125)(5.9209375,0.08078125)
\usefont{T1}{ptm}{m}{n}
\rput(2.7982812,-0.34921876){Encoder}
\usefont{T1}{ptm}{m}{n}
\rput(2.8323438,-0.86921877){$f_n:\;\mathbb{R}^n\to\mathbb{R}^k$}
\usefont{T1}{ptm}{m}{n}
\rput(8.668906,-0.38921875){Decoder}
\usefont{T1}{ptm}{m}{n}
\rput(8.752344,-0.86921877){$g_n:\;\mathbb{R}^k\to\mathbb{R}^n$}
\usefont{T1}{ptm}{m}{n}
\rput(10.712344,-0.16921875){$\hat{\bX}$}
\usefont{T1}{ptm}{m}{n}
\rput(6.8723435,-0.18921874){$\bY$}
\usefont{T1}{ptm}{m}{n}
\rput(5.8523436,1.1107812){$\bW$}
\usefont{T1}{ptm}{m}{n}
\rput(4.742344,-0.18921874){$\bV$}
\usefont{T1}{ptm}{m}{n}
\rput(1.0023438,-0.24921875){$\bX$}
\psline[linewidth=0.04cm](5.8409376,-0.25921875)(5.8409376,-0.6592187)
\psline[linewidth=0.04cm](5.6409373,-0.45921874)(6.0409374,-0.45921874)
\end{pspicture} 
\centering
\caption{Noisy compressed sensing setup.}
\label{fig:Moisycompressed}
\end{figure}
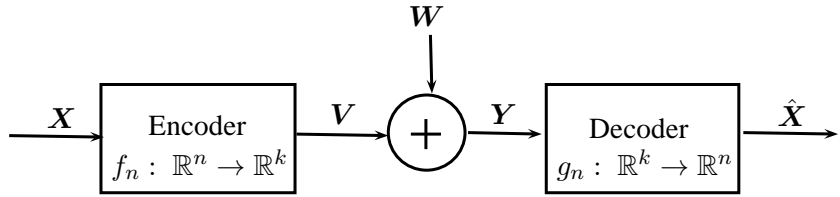

There are several previously reported results that are related to this work. Some of these results were derived rigorously and some of them were not, since they were based on the powerful, but non-rigorous, \emph{replica} method. In the following, we briefly state some of these results. In \cite{GuoShamaiBaron}, using the replica method, a decoupling principle of the posterior distribution was claimed, namely, the outcome of inferring about any fixed collection of signal elements becomes independent conditioned on the measurements. Also, it was shown that each signal-element-posterior becomes asymptotically identical to the posterior resulting from inferring the same element in scalar Gaussian noise. Accordingly, this principle allows to calculate the MMSE of estimating the signal input given the observations. In \cite{WuVerdu}, among other results, it was shown rigorously that for i.i.d. input processes, distributed according to any discrete-continuous mixture measure (where the discrete part has finite Shannon entropy), the phase transition threshold for optimal encoding is given by the input information dimension. This result serves as a rigorous verification of the replica calculations in \cite{GuoShamaiBaron}. In \cite{Dono,Krzakala1,Krzakala2}, the authors designed structured sensing matrices (not necessarily i.i.d.), and a corresponding reconstruction procedure, that allows compressed sensing to be performed at acquisition rates approaching to the theoretically optimal limits. A wide variety of previous works are concerned with low-complexity decoders, which are robust with respect to the noise, e.g., decoders based on convex optimizations (such as $\ell_1$-minimization and $\ell_1$-penalized least-squares) \cite{cc6,cc7}, graph-based iterative decoders such as linear MMSE estimation and approximate message passing (AMP) \cite{cc8}, etc. For example, in \cite{Gastpar1,Gastpar2,Bayati}, the linear MMSE and LASSO estimators were studied for the case of i.i.d. sensing matrices as special cases of the AMP algorithm, the performance of which was rigorously characterized for Gaussian sensing matrices \cite{MohsenBayati}, and generalized for a broad class of sensing matrices in \cite{Dono,Bayati2,Bayati3}. Another, somewhat related, subject, is the recovery of the sparsity pattern with vanishing and non-vanishing error probability, was studied in a number of recent works, e.g., \cite{Gastpar1,Tulino,Gastpar2,recovery1,recovery2,recovery3,recovery4,recovery5,recovery6}. For example, in \cite{Tulino}, using the replica method and the decoupling principle, the authors extend the scope of conventional noisy compressive sampling where the sensing matrix is assumed to have i.i.d. entries to allow it to satisfy a certain freeness condition (encompassing Haar matrices and other unitary invariant matrices).

In this paper, under the previously mentioned model assumptions, we rigorously derive the asymptotic MMSE in a single-letter form. The key idea in our analysis is the fact that by using some direct relationship between optimum estimation and certain partition functions \cite{Neri1}, the MMSE can be represented in some mathematically convenient form which (due to the previously mentioned input and noise Gaussian statistics assumptions) consists of functions of the \emph{Stieltjes} and \emph{Shannon} transforms. This observation allows us to use some powerful results from random matrix theory (RMT), concerning the asymptotic behavior (a.k.a. deterministic equivalents) of the Stieltjes and Shannon transforms (see e.g., \cite{baisilbook,coulbook} and many references therein). Our asymptotic MMSE formula seems to appear different than the one that is obtained from the replica method \cite{GuoShamaiBaron}. Nevertheless, numerical calculations indicate matching results with high accuracy and therefore suggest that the results are equivalent. Thus, similarly to other known cases in statistical mechanics, for which the replica predictions were proved to be correct, our results support the replica method predictions. Notwithstanding the apparent equivalence, we believe that our formula is more insightful compared to the replica method results. Also, in contrast to previous works, in which only memoryless sources were considered (an indispensable assumption in the analysis), we allow a certain structured dependency among the various components of the source. Finally, we mention that in a previous related paper \cite{Wasim}, the authors have used similar methodologies to obtain the asymptotic mismatched MSE of a codeword (from a randomly selected code), corrupted by a Gaussian vector channel. 

The remaining part of this paper is organized as follows. In Section \ref{sec:model}, the model is presented and the problem is formulated. In Section \ref{sec:body}, the main results are stated and discussed along with a numerical example that demonstrates the theoretical result. In Section \ref{sec:proof}, the main result is proved, and finally, in Section \ref{sec:Conclusion} our conclusions are drawn and summarized.

\section{Notation Conventions and Problem Formulation}\label{sec:model}
\allowdisplaybreaks

\subsection{Notation Conventions}\label{sec:notation}
Throughout this paper, scalar random variables (RV's) will be denoted by capital letters, their sample values will be denoted by the respective lower case letters and their alphabets will be denoted by the respective calligraphic letters. A similar convention will apply to random vectors and matrices and their sample values, which will be denoted with same symbols in the bold face font. We let $P_{\bSt}$ and $p_{\bXt,\bYt}$ be the probability mass function and the joint density function of the discrete random vector $\bS$ and the continuous random vectors $\bX$ and $\bY$, respectively. Accordingly, $p_{\bXt}$ will denote the marginal of $\bX$, $p_{\bYt\vert\bXt}$ will denote the conditional density of $\bY$ given $\bX$, and so on. Probability measures will be denoted generically by the letter $\pr$.

The expectation operator of a measurable function $f\p{\bX,\bY}$ with respect to (w.r.t.) $p_{\bXt,\bYt}$ will be denoted by $\bE\ppp{f\p{\bX,\bY}}$. The conditional expectation of the same function given a realization $\by$ of $\bY$, will be denoted by $\bE\ppp{f\p{\bX,\bY}\vert \bY = \by}$. When using vectors and matrices in a linear-algebraic format, $n$-dimensional vectors, like $\bx$, will be understood as column vectors, the operators $\p{\cdot}^T$ and $\p{\cdot}^H$ will denote vector or matrix transposition and vector or matrix conjugate transposition, respectively, and so, $\bX^T$ would be a row vector. For two positive sequences $\ppp{a_n}$ and $\ppp{b_n}$, the notation $a_n\exe b_n$ means equivalence in the exponential order, i.e., $\lim_{n\to\infty}\frac{1}{n}\log\p{a_n/b_n} = 0$, where in this paper, logarithms are defined w.r.t. the natural basis, that is, $\log(\cdot) = \ln(\cdot)$. For two sequences of random variables $\ppp{a_n}$ and $\ppp{b_n}$, we denote by $a_n\asymp b_n$ and $a_n\sim b_n$ the equivalence relations $a_n-b_n\stackrel{\text{a.s.}}{\rightarrow}0$ and $a_n/b_n\stackrel{\text{a.s.}}{\rightarrow}1$ almost surely (a.s.) for $n\to\infty$, respectively. Finally, the indicator function of an event $\calA$ will be denoted by $\Ind_{\calA}$.

\subsection{Model and Problem Formulation}\label{subsec:model}
As was mentioned earlier, we consider sparse signals, supported on a subspace with dimension smaller than $n$. In the literature, it is often assumed that the input process $\bX$ has i.i.d. components. In this work, however, we generalize this assumption by considering the following stochastic model: Each component, $X_i$, $1\leq i\leq n$, of $\bX$, is given by $X_i = S_iU_i$ where $\ppp{U_i}$ are i.i.d. Gaussian random variables with zero mean and variance $\sigma^2$, and $\ppp{S_i}$ are binary random variables taking values in $\ppp{0,1}$, independently of $\ppp{U_i}$. Now, instead of assuming that the \emph{pattern} sequence $\bS = \p{S_1,\ldots,S_n}$ is i.i.d., we will assume a more general distribution. In particular, defining the
%
``\emph{magnetization}"\footnote{The term ``magnetization" is borrowed from the field of statistical mechanics of spin array systems, in which $S_i$ is taking values in $\ppp{-1,1}$. Nevertheless, for the sake of convince, we will use this term also in our problem.}
\begin{align}
m_{\bst} \triangleq \frac{1}{n}\sum_{i=1}^ns_i,
\end{align}
we assume a distribution of the form
\begin{align}
P_{\bSt}\p{\bs} = C_n\cdot\exp\ppp{nf\p{m_{\bst}}},
\label{inputassmeas}
\end{align}
where $f\p{\cdot}$ is a certain function, independent of $n$, and $C_n$ is a normalization constant. Note that for the popular i.i.d. assumption, $f$ is a linear function. Let us assume that $f$ twice differentiable with a finite first derivative on $\pp{0,1}$. Then, by using the method of types \cite{Cizer}, we obtain
\begin{align}
C_n &= \p{\sum_{\bst\in\ppp{0,1}^n}\exp\ppp{nf\p{m_{\bst}}}}^{-1}\nonumber\\
&= \p{\sum_{m\in\ppp{0,1/n,\ldots,1}}\Omega\p{m}\exp\ppp{nf\p{m}}}^{-1}\nonumber\\
&\exe \exp\ppp{-n\sup_m\ppp{h_2\p{m}+f\p{m}}}\nonumber\\
& = \exp\ppp{-n\pp{h_2\p{m_a}+f\p{m_a}}}
\label{apriorimag}
\end{align} 
where $\Omega\p{m}$ designates the number of binary $n$-vectors with magnetization $m$, $h_2\p{\cdot}$ designates the binary entropy function, and $m_a$ is the maximizer of $h_2\p{m}+f\p{m}$ over $\pp{0,1}$. In other words, $m_a$ is the \emph{a-priori} magnetization, namely, the magnetization that \emph{dominates} $P_{\bSt}(\cdot)$. Note that the maximum of $h_2\p{m}+f\p{m}$ is achieved by an internal point in $\pp{0,1}$. This is because $h_2(\cdot)$ is concave with infinite derivatives at the boundaries, whereas the derivative of $f$ is finite. In case of multiple maximizers, the global supremum (assumed to be unique) is identified by comparing the corresponding values of $h_2\p{m}+f\p{m}$. 

To conclude, we summarize the structure of our sparsity model. The input $\bX$ is generated as follows: first, the support size of $\bX$ is drawn according to the distribution in \eqref{inputassmeas}. Then, the support set is drawn uniformly at random from all subsets of that cardinality. Finally, the non-zero elements are filled with i.i.d. standard Gaussian random variables.   
\begin{remark}
The structure of $P_{\bSt}(\cdot)$ in \eqref{inputassmeas} can be relaxed by allowing $f = f_n$, $f_n$ converge to some limit $f$ uniformly on $\pp{0,1}$. This relaxation allows our model to include, for example, the basic case of exact sparsity in which $P_{\bSt}(\bs) = 1/\binom {n} {nm_{s}}$. Due to fact that our analysis is not affected by this relaxation (attributed to the assumption that $\ppp{f_n}$ and $f$ depend only on $m_{\bst}$), and accordingly, the main result of this paper remains the same, we will assume that $f$ is fixed, as described in \eqref{inputassmeas}. 
\end{remark}
\begin{remark}
In the i.i.d. case, each $X_i$ is distributed according to following mixture distribution (a.k.a. Bernoulli-Gaussian measure)
\begin{align}
p_X(x) = \p{1-p}\cdot\delta\p{x} + p\cdot p_G\p{x}
\label{mes}
\end{align}
where $\delta\p{x}$ is the Dirac function, $p_G\p{x}$ is a Gaussian density function corresponding to a Gaussian random variable with zero mean and variance $\sigma^2$, and $0\leq p\leq1$. Consider a random vector $\bX$ in which each component is \emph{independently} drawn from $p_{X}$. Then, by the law of large numbers (LLN), $\frac{1}{n}\norm{\bX}_0\stackrel{\mathbb{P}}{\rightarrow}p$, where $\norm{\bX}_0$ designates the number of non-zero elements of the vector $\bX$. In other words, in this case, $m_a=p$. Thus, it is clear that the weight $p$ parametrizes the signal sparsity and $p_G$ is the prior distribution of the non-zero entries. Note that the fact that, $\frac{1}{n}\norm{\bX}_0\stackrel{\mathbb{P}}{\rightarrow}m_a$, is true regardless the i.i.d. assumption. Indeed, this follows from Chebyshev's inequality, and the fact that (using the saddle-point method \cite[Section 4.2]{NeriMono})
\begin{align}
\lim_{n\to\infty}\bE\pp{\abs{\frac{1}{n}\sum_{i=1}^nS_i-m_a}} &= \lim_{n\to\infty}\sum_{\bst\in\ppp{0,1}^n}\abs{m_{\bst}-m_a}P_{\bSt}(\bs) \\&
= \lim_{n\to\infty}\frac{\sum_{\bst\in\ppp{0,1}^n}\abs{m_{\bst}-m_a}\exp\pp{nf(m_{\bst})}}{\sum_{\bst\in\ppp{0,1}^n}\exp\pp{nf(m_{\bst})}} = 0.\label{expecmagaprior}
\end{align}
\end{remark}

Finally, we consider the following model
\begin{align}
\bY = \bH\bX+\bW,
\end{align}
where $\bH$ is a $k\times n$ random matrix, a.k.a. the \emph{sensing matrix}, with i.i.d. entries of zero mean and variance $1/n$. We assume that the entries of $\bH$, denoted by $\ppp{H_{i,j}}_{i,j}$, have bounded normalized moments, i.e., $\bE(\sqrt{n}H_{i,j})^l\leq\upsilon_l<\infty$, for $l\in\ppp{1,2,\ldots,8}$. The components of the noise $\bW$ are i.i.d. Gaussian random variables with zero mean and variance $1/\beta$. The MMSE of $\bX$ given $\bY$ and $\bH$ is defined as follows
\begin{align}
\text{mmse}\p{\bX\vert\bY,\bH} &\triangleq \bE\norm{\bX-\bE\ppp{\bX\vert\bY,\bH}}^2
\end{align}
where $\bE\ppp{\bX\vert\bY,\bH}$ is the conditional expectation w.r.t. $p_{\bXt\vert\bYt,\bHt}$. As was mentioned earlier, we are interested in the asymptotic regime, where $k,n\to\infty$ with a fixed ratio $R$, which we shall refer to as the \emph{measurement rate}. Accordingly, we define the \emph{asymptotic MMSE} as
\begin{align}
D\p{R,\beta}\triangleq \limsup_{n\to\infty}\frac{1}{n}\text{mmse}\p{\bX\vert\bY,\bH}.
\label{asympMMSEdef}
\end{align}
Our main goal is to rigorously derive a computable, single-letter expression for $D\p{R,\beta}$. 

\section{Main Result}\label{sec:body}
In this section, our main result is first presented and discussed. Then, we provide a numerical example in order to illustrate the theoretical results. The proof of the main theorem is provided in Section \ref{sec:proof}.

Before we state our main result, we define some auxiliary functions of a generic variable $x\in\pp{0,1}$:
\begin{align}
&b\p{x} \triangleq \frac{-\p{1+\beta\sigma^2\p{R-x}}+\sqrt{\pp{1+\beta\sigma^2\p{R-x}}^2+4\beta\sigma^2x}}{2\beta\sigma^2x},\label{firstt}\\
&g\p{x} \triangleq 1+\beta\sigma^2xb\p{x},\\
&\bar{I}\p{x} \triangleq \frac{R}{x}\log{g\p{x}}-\log{b\p{x}}-\frac{\beta\sigma^2Rb\p{x}}{g\p{x}},\label{firs2tt}\\
&V\p{x}\triangleq\frac{\beta^3\sigma^4b^2\p{x}x^2}{2g^2\p{x}},\\
&L\p{x}\triangleq\frac{\beta^2\sigma^2b\p{x}}{2g^2\p{x}},\\
&\nu_1\p{x} \triangleq \frac{\beta R}{g\p{x}}+\frac{1}{\sigma^2},
\end{align}
and
\begin{align}
&t\p{x}\triangleq f\p{x}-\frac{x}{2}\bar{I}\p{x}+V\p{x}\p{m_aR\sigma^2+\frac{R}{\beta}}.
\end{align}
Next, for $x,y\in\pp{0,1}$ define the functions
\begin{align}
&\nu_2\p{x,y} \triangleq \frac{\beta R}{g\p{x}}-\frac{\beta^2 R\sigma^2b\p{x}y}{g^2\p{x}}+\frac{1}{\sigma^2},
\end{align}
and
\begin{align}
&\alpha\p{x,y}\triangleq \frac{1}{\nu_1\p{x}\nu_2\p{x,y}}.\label{lastt}
\end{align}
The asymptotic MMSE is given in the following theorem.
\begin{theorem}[Asymptotic MMSE]\label{th:1}
Let $Q$ be a random variable distributed according to
\begin{align}
p_Q\p{q} = \frac{1-m_a}{\sqrt{2\pi P_y}}\exp\p{-\frac{q^2}{2P_y}} +\frac{m_a}{\sqrt{2\pi \p{P_y+R^2\sigma^2}}}\exp\p{-\frac{q^2}{2\p{P_y+R^2\sigma^2}}} 
\end{align}
where $m_a$ is defined as in \eqref{apriorimag} and $P_y \triangleq m_a\sigma^2R+R/\beta$. Let us define 
\begin{align}
K\p{Q,\alpha_1,\alpha_2} \triangleq \frac{1}{2}\pp{1+\tanh\p{\frac{L\p{\alpha_1}Q^2-\alpha_2}{2}}}
\label{KtermFluc}
\end{align}
where $\alpha_1\in\pp{0,1}$ and $\alpha_2\in\mathbb{R}$. Let $m^{\circ}$ and $\gamma^\circ$ be solutions of the system of equations
\begin{subequations}
\begin{align}
&\gamma^\circ \triangleq-\bE\ppp{K\p{Q,m^\circ,\gamma^\circ}Q^2L'(m^\circ)}-t'(m^\circ),\label{magnetDet1}\\
&m^\circ \triangleq \bE\ppp{K\p{Q,m^\circ,\gamma^\circ}}
\label{magnetDet}
\end{align}\label{magnetddd}%
\end{subequations}
where $L'(\cdot)$ and $t'(\cdot)$ are the derivatives of $L(\cdot)$ and $t(\cdot)$, respectively, and in case of more than one solution, $\p{m^\circ,\gamma^\circ}$ is the pair with the largest value of
\begin{align}
t\p{m^\circ}+\p{m^\circ-\frac{1}{2}}\gamma^\circ+\bE\ppp{\frac{1}{2}L\p{m^\circ}Q^2+\log \pp{2\cosh\p{\frac{L\p{m^\circ}Q^2-\gamma^\circ}{2}}}}.
\label{criticalPoint}
\end{align}
Finally, define 
\begin{align}
&\rho_1^\circ \triangleq\bE\ppp{K\p{Q,m^\circ,\gamma^\circ}Q^2},\label{rhoa1}\\
&\rho_2^\circ \triangleq \bE\ppp{K^2\p{Q,m^\circ,\gamma^\circ}},\label{rhoa2}\\
&\rho_3^\circ \triangleq \bE\ppp{K^2\p{Q,m^\circ,\gamma^\circ}Q^2}\label{rhoa3}.
\end{align}
Then, the limit supremum in \eqref{asympMMSEdef} is, in fact, an ordinary limit, and the asymptotic MMSE is given by
\begin{align}
D\p{R,\beta}&= \sigma^2m_a-\beta^2\frac{\alpha\p{m^\circ,\rho_2^\circ}}{g^2\p{m^\circ}}\rho_3^\circ+2\frac{\alpha\p{m^\circ,\rho_2^\circ}b\p{m^\circ}}{g^3\p{m^\circ}}\beta^3\sigma^2\rho_2^\circ\pp{\rho_1^\circ-m^\circ P_y}.
\label{DasymMMSE}
\end{align}
\end{theorem}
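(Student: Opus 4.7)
The plan is to express the MMSE through a Gibbs-type partition function and then evaluate it asymptotically by combining the Laplace principle with random matrix theory. The starting point is the identity, used in this form in \cite{Neri1}, that expresses the conditional mean $\bE\{\bX|\bY,\bH\}$ (and its second moment) as logarithmic derivatives of a suitably parameterized partition function $Z_n(\by,\bh)=\int p(\bx)p(\by|\bx,\bh)\,d\bx$. Because $X_i=S_iU_i$ with $U_i$ Gaussian, I would first integrate out the magnitudes $\bu$ to obtain, for every candidate pattern $\bs$,
\begin{align*}
p(\by|\bs,\bh)=\det(2\pi\bSig_{\bst,\bh})^{-1/2}\exp\bigl(-\tfrac12\by^T\bSig_{\bst,\bh}^{-1}\by\bigr),\qquad \bSig_{\bst,\bh}\triangleq\sigma^2\bH_{\bst}\bH_{\bst}^T+\bI_k/\beta,
\end{align*}
where $\bH_{\bst}$ is the $k\times nm_{\bst}$ sub-matrix of columns of $\bH$ for which $s_i=1$. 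The partition function then collapses to the discrete Gibbs sum $Z_n=\sum_{\bst}P_{\bSt}(\bs)p(\by|\bs,\bh)$, and because $P_{\bSt}(\bs)$ depends on $\bs$ only through $m_{\bst}$, the sum can be reorganized as a sum over magnetizations times inner sums over fixed-magnetization configurations.

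The second step is the asymptotic evaluation of the two random objects $n^{-1}\log\det(\bI+\beta\sigma^2\bH_{\bst}\bH_{\bst}^T)$ and $n^{-1}\by^T\bSig_{\bst,\bh}^{-1}\by$ appearing in $-\log p(\by|\bs,\bh)$. By the Marchenko--Pastur theorem and the Bai--Silverstein trace lemmas \cite{baisilbook,coulbook}, for each fixed magnetization $m_{\bst}=x$ these two quantities converge almost surely to deterministic limits expressed through the Stieltjes-type transform $b(x)$ of \eqref{firstt} and the Shannon transform $\bar{I}(x)$ of \eqref{firs2tt}. The quadratic form requires extra care because $\by=\bH_{\bSt}\bu_{\bSt}+\bw$ is correlated with $\bh$; splitting $\bh$ into columns that intersect versus miss the true support $\bS$ produces a scalar dependence on the overlap $n^{-1}\sum_i s_iS_i$, and because $P_{\bSt}$ ignores the actual locations of the ones this overlap concentrates around the deterministic value $m\cdot m_a$, which is exactly what permits the reduction to a scalar quantity.

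With these deterministic equivalents inserted, $Z_n$ takes the form $\sum_{\bst}\exp\{nf(m_{\bst})+n\Phi(m_{\bst};\bY,\bH)+o(n)\}$ for an explicit free-energy function $\Phi$ involving $\bar{I}$, $V$ and $L$. Introducing a Lagrange multiplier $\gamma$ to enforce the magnetization constraint decouples the sum over $\bs$ into a product of per-site contributions, each a sum of two exponentials whose ratio has the $\tanh$ form of \eqref{KtermFluc}; $K(Q_i,m^\circ,\gamma^\circ)$ is then the marginal posterior probability $\pr(S_i=1|\bY,\bH)$ and $Q_i$ is a scalar statistic constructed from $\bY$ and the $i$-th column of $\bH$ whose limiting distribution is the two-component Gaussian mixture $p_Q$ of the theorem. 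The stationarity conditions in $(m,\gamma)$ of the exponent yield the coupled fixed-point equations \eqref{magnetDet1}--\eqref{magnetDet}, and the maximum-free-energy criterion \eqref{criticalPoint} selects the dominant branch when several critical points coexist. Finally, writing $\bE\{X_i|\bY,\bH\}=\pr(S_i=1|\bY,\bH)\cdot\bE\{U_i|S_i=1,\bY,\bH\}$ and using the Gaussian formula for the second factor (which involves the resolvent $\bSig_{\bst,\bh}^{-1}$ and hence contributes the $\alpha(\cdot,\cdot)$ terms built from $\nu_1,\nu_2$), one assembles $n^{-1}\bE\|\bX-\bE\{\bX|\bY,\bH\}\|^2$ into the single-letter expression \eqref{DasymMMSE}.

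I expect the principal obstacle to be the \textbf{uniform RMT control needed to exchange the deterministic equivalents with the $2^n$-term sum over $\bs$}: pointwise convergence of the Stieltjes and Shannon transforms is not sufficient, and one must upgrade it to uniformity in $m\in[0,1]$, together with sufficiently fast concentration rates for the quadratic forms, in order to legitimize the Laplace/saddle-point step. A secondary difficulty is that $\|\bE\{\bX|\bY,\bH\}\|^2$ involves both diagonal and off-diagonal terms $\bE\{S_iS_jU_iU_j|\bY,\bH\}$, so bilinear-form deterministic equivalents are required in addition to quadratic-form ones; this is the origin of the two distinct arguments $\alpha(m^\circ,m^\circ)$ and $\alpha(m^\circ,\rho_2^\circ)$ appearing in the final formula.
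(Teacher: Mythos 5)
Your overall architecture (partition function, RMT deterministic equivalents for the Shannon/Stieltjes-type quantities, Laplace/saddle-point over the magnetization, per-site decoupling via a Lagrange multiplier yielding the $\tanh$ kernel, and a CLT for $\bh_i^T\bY$ producing the mixture law of $Q$) is the same route the paper takes, and you correctly flag uniformity of the RMT estimates over the $2^n$ patterns as the main technical burden. However, there are two genuine gaps. The first is your claim that the overlap $n^{-1}\sum_i s_iS_i$ between a candidate pattern $\bs$ and the true support concentrates at $m\cdot m_a$ ``because $P_{\bSt}$ ignores the actual locations of the ones,'' and that this permits reduction to a scalar. That concentration holds only under the prior; inside the Gibbs sum the weight $p(\by\vert\bs,\bh)$ depends exponentially (at scale $n$) on statistics such as $n^{-1}\sum_i\abs{\bh_i^T\by}^2 s_i$, whose coefficients are systematically larger on the true support, so the sum is dominated by a \emph{tilted} overlap determined by a saddle point, not by the prior-typical value. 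Substituting $m\cdot m_a$ would erase exactly the mechanism by which the posterior aligns with the truth and would yield an incorrect MMSE. The paper avoids this by keeping these linear statistics as free variables, counting configurations in the constrained sets $\mathcal{F}_\delta\p{\ppp{\rho_l},m}$ with exponential tilting, and then optimizing via Varadhan's theorem; this is where the fixed-point equations \eqref{magnetddd} and the selection rule \eqref{criticalPoint} actually come from.

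The second gap concerns the origin of $\rho_2^\circ$, $\rho_3^\circ$ and the two distinct arguments $\alpha\p{m^\circ,m^\circ}$ versus $\alpha\p{m^\circ,\rho_2^\circ}$. The MMSE decomposes coordinate-wise, so no cross terms $\bE\ppp{S_iS_jU_iU_j\vert\bY,\bH}$ with $i\neq j$ are needed; your attribution of the second $\alpha$ argument to such off-diagonal terms is not where it comes from. The relevant structure is that $\p{\bE\ppp{X_i\vert\bY,\bH}}^2$ is a product of two independent posterior averages, i.e., a two-replica object: the paper's Lemma \ref{lem:MMSe_div} expresses the MMSE as $\bE_{\mu_s}\pp{J_1}-\bE_{\mu_{s\times r}}\pp{J_2}$, where $J_2$ couples two patterns $\bs,\br$ through the overlap matrix $\bQsr$, and the deterministic equivalent of $J_2$ depends on the replica overlap $m_{s,r}$, which at the saddle point equals $\bE\ppp{K^2}=\rho_2^\circ$ (while the single-replica term evaluates $\alpha$ at $m^\circ$). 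Without this two-replica bookkeeping --- and the corresponding bilinear deterministic equivalent \eqref{itm4}, which has to be built from scratch rather than read off existing trace lemmas --- the quantities \eqref{rhoa2}--\eqref{rhoa3} and the final formula \eqref{DasymMMSE} cannot be assembled. A minor further point: collapsing the partition function to $\sum_{\bs}P_{\bSt}(\bs)p(\by\vert\bs,\bh)$ discards the handle needed to extract conditional moments of $\bX$; the paper instead keeps the $\blam$-perturbed partition function of Definition \ref{def:1} and differentiates twice, which is what produces $J_1$ and $J_2$ in the first place.
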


In the following, we explain the above result qualitatively, and in particular, the various quantities that have been defined in Theorem \ref{th:1}. The first important quantity is $m^\circ$, which is obtained as the solution of the system of equations in \eqref{magnetddd}, and which we will refer to as the \emph{posterior} magnetization. We use the term ``posterior" in order to distinguish it from the a-priori magnetization $m_a$; while $m_a$ is the magnetization that dominates the probability distribution function of the source, before observing $\bY$, the posterior magnetization is the one that dominates the posterior distribution, namely, after observing the measurements. The solution of the equation
\begin{align}
t\p{m^\circ}+\p{m^\circ-\frac{1}{2}}\gamma^\circ+\bE\ppp{\frac{1}{2}L\p{m^\circ}Q^2+\log \pp{2\cosh\p{\frac{L\p{m^\circ}Q^2-\gamma^\circ}{2}}}}=0,
\end{align}
is known as a \emph{critical point}, beyond which the solution to \eqref{magnetddd} ceases to be the dominant posterior magnetization, and accordingly, it must jump elsewhere. Furthermore, as we vary one of the other parameters of our model (including the source model), it might happen that the dominant magnetization jumps from one value to another. 

It is interesting to note that there are essentially two origins for possible phase transitions in our model: The first one is the channel $\bH$ that induces ``long-range interactions"\footnote{In the settings considered, the posterior is proportional to $\exp\ppp{-\beta\norm{\by-\bH\bX}^2/2}$, and after expansion of the norm, the exponent includes an ``external-field term", proportional to $\by^T\bH\bx$, and a ``pairwise spin-spin interaction term", proportional to $\norm{\bH\bX}^2$. These terms contain a linear subset of components (or ``particles") of $\bX$, which are known as long-range interactions.}. The second is the source, which may have possible dependency (or interaction) between its various components (see \eqref{inputassmeas}). Accordingly, in \cite[Example E]{Neri2} the problem of estimation of sparse signals, assuming that $\bH=\bI$, was considered. It was shown that, despite the fact that there are no long-range interactions induced by the channel, still there are phase transitions if the source is not i.i.d. Indeed, in the i.i.d. case, the problem is analogous to a system of non-interacting particles, where of course, no phase transitions can exist. Specifically, assume that $\bH=\bI$, and consider the special case where $f\p{m}$ is quadratic\footnote{As was noted in \cite{Neri2}, quadratic model (similar to the \emph{random-field Curie-Weiss model} of spin systems (see e.g., \cite[Sect. 4.2]{curriwiess})) can be thought of as consisting of the first two terms of the Taylor series expansion of a smooth function.}, i.e., $f\p{m} = am+bm^2/2$. We demonstrate that the dominant posterior magnetization might jump from one value to another. Note that this example was also considered in \cite[Example E]{Neri2}. For simplicity of the demonstration, we assume that $\sigma^2$ and $\beta$ are small, and then it can be shown that $m^\circ$ behaves as \cite[Example E]{Neri2}:
\begin{align}
m^\circ &\approx \frac{1}{2}\pp{1+\tanh\p{\frac{1}{2}t'(m^\circ)}}\\
& \approx \frac{1}{2}\pp{1+\tanh\p{\frac{bm^\circ+a}{2}}},
\end{align}
which can be regarded as the same equation of the \emph{spin}-magnetization (namely, after transforming $S_i$'s into spins, $\mu_i\in\ppp{-1,1}$, using the transformation $\mu_i = 1-2S_i$) as in the Curie-Weiss model of spin arrays (see e.g., \cite[Sect. 4.2]{curriwiess}). For example, for $a=0$ and $b>1$, this equation has two symmetric non-zero solutions $\pm m_0$, which both dominate the partition function. If $0<a\ll1$, it is evident that the symmetry is broken, and there is only one dominant solution which is about $\abs{m_0}$. Further discussion on the behavior of the above saddle point equation and various interesting approximations of the dominant magnetization can be found in \cite{Neri2,curriwiess,Mezard}. 

It is now tempting to compare Theorem \ref{th:1} with the prediction of the replica method \cite{GuoShamaiBaron}. Unfortunately, we were unable to show analytically that the two results are in agreement, despite the fact that there are some similarities. Nevertheless, numerical calculations suggest that this is the case. Fig. \ref{fig:1} shows the asymptotic MMSE obtained using Theorem \ref{th:1} and using the replica method, as a function of $\beta$, assuming an i.i.d. source with sparsity rate $p=0.1$, and measurement rate $R = 0.3$. Table \ref{fig:2} shows the relative error, defined as $\abs{\text{mmse}_{\text{our}}-\text{mmse}_{\text{replica}}}/\text{mmse}_{\text{our}}$, as a function of $\beta$. It can be seen that both results give approximately the same MMSE. The very small differences between the two results are just numerical, finite precision errors. More enlightening numerical examples can be found in \cite{Tulino,Krzakala1,Krzakala2,TsuBaron}.
\begin{figure}[!t]
\begin{minipage}[b]{1.0\linewidth}
  \centering
	\centerline{\includegraphics[width=15cm,height = 12cm]{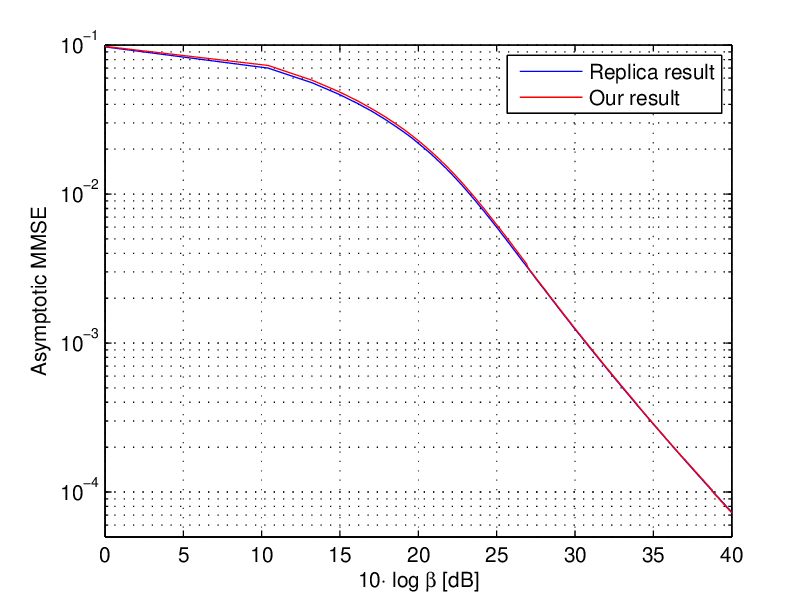}}
	\end{minipage}
\caption{Comparison of the asymptotic MMSE using Theorem \ref{th:1} and the replica method as a function of $\beta$, for sparsity rate $p=0.1$, and measurement rate $R = 0.3$.}
\label{fig:1}
\end{figure}

\begin{table}[ht] 
\caption{Comparison Between Theorem \ref{th:1} and the Replica Method}
\centering 
\begin{tabular}{c c} 
\hline\hline 
$10\log\beta$ & Relative Error \\ [0.5ex] 
\hline 
0 & $5.11\cdot10^{-3}$\\ 
10 & $8.09\cdot10^{-3}$\\ 
15 & $6.12\cdot10^{-3}$\\ 
20 & $6.51\cdot10^{-3}$\\ 
25 & $6.03\cdot10^{-3}$\\ 
30 & $4.65\cdot10^{-3}$\\  
35 & $4.49\cdot10^{-3}$\\ 
40 & $4.59\cdot10^{-3}$\\
\hline 
\end{tabular}
\label{fig:2} 
\end{table}

\section{Proof of Theorem 1}\label{sec:proof}

\subsection{Proof Outline}
In this subsection, before delving into the proof of Theorem \ref{th:1}, we discuss the techniques and the main steps used in the proof. The analysis is essentially composed of three main steps. The first step is finding a generic expression of the MMSE. This is done by using a direct relationship between the MMSE and some partition function, which can be found in Lemma \ref{lemma:partMM}. This expression contains terms that can be asymptotically assessed using the well-known Stieltjes and Shannon transforms. In the second step (appearing in Appendix \ref{app:1}), we derive the asymptotic behavior of these functions (which are extremely complex to analyze for finite $n$). In other words, we show that these functions can be replaced with some other random functions that are much easier to work with, and the loss/gap due to this replacemnt is bounded by a vanishing term. This is done by invoking recent powerful methods from RMT, such as the Bai-Silverstein method \cite{SilversteinBai}. The resulting functions are, in general, random, due to the fact that they depend on the observations $\by$ and the sensing matrix $\bH$. Accordingly, we show that for the calculation of the asymptotic MMSE, it is sufficient to take into account only combinations of typical vectors $\ppp{\by}$ and matrices $\ppp{\bH}$, where typicality is defined in accordance to the above-mentioned asymptotic results. Therefore, at the end of the second step, we obtain an approximation (which is exact as $n\to\infty$) for the MMSE. Finally, in the last step, using this approximation and large deviations theory, we obtain the result stated in Theorem \ref{th:1} (this step can be found in Appendix \ref{app:3}).

\subsection{Definitions}
An important function, which will be pivotal to our derivation, is the \emph{partition function}, which is defined as follows. 
\begin{definition}[Partition Function]\label{def:1}
Let $\bX$ and $\bY$ be random vectors with joint density function $p_{\bXt,\bYt}$. Let $\blam = \p{\lambda_1,\ldots,\lambda_n}^T$ be a deterministic column vector of $n$ real-valued parameters. The partition function w.r.t. $p_{\bXt,\bYt}$, denoted by $Z\p{\by;\blam}$, is defined as
\begin{align}
Z\p{\by;\blam}&\triangleq\int_{\calX^n}\mathrm{d}\bx\;p_{\bXt,\bYt}(\bx,\by)\exp\ppp{\blam^T\bx}.
\label{PartFunc}
\end{align} 
\end{definition}

The motivation of the above definition is the following simple result \cite{Neri1}.
\begin{lemma}[MMSE-partition function relation]\label{lemma:partMM}
Consider the model presented in Subsection \ref{subsec:model}. Then, the following relation between $Z\p{\bY;\blam}$ and the MMSE of $\bX$ given $\bY$, holds true
\begin{align}
\text{mmse}\p{\bX\vert\bY} &\triangleq \sum_{i=1}^n \bE\ppp{\p{X_i-\bE\ppp{X_i\vert\bY}}^2}\nonumber\\
&= \sum_{i=1}^n\pp{\bE\ppp{X_i^2}-\bE\ppp{\left.\pp{\frac{\partial\log Z\p{\bY;\blam}}{\partial\lambda_i}}^2\right|_{\blamt=0}}}.
\end{align}
\end{lemma}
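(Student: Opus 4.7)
The plan is to proceed by direct differentiation of the partition function, exploiting the fact that, up to a normalization factor, $Z(\by;\blam)$ is precisely the moment generating function of $\bX$ under the posterior measure $p_{\bXt|\bYt}(\cdot|\by)$. First I would observe from Definition \ref{def:1} that evaluating at $\blamt = 0$ yields
$$Z(\by;0) = \int_{\calX^n} p_{\bXt,\bYt}(\bx,\by)\,d\bx = p_{\bYt}(\by),$$
so that $Z(\by;\blam)/p_{\bYt}(\by)$ is exactly the posterior MGF of $\bX$ given $\bY = \by$.

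Next, under a routine dominated-convergence argument, which is legitimate here because the Gaussian likelihood $p_{\bYt|\bXt}$ gives $p_{\bXt,\bYt}(\bx,\by)\exp\{\blam^T\bx\}$ rapidly decaying tails in $\bx$ uniformly on a small neighborhood of $\blamt = 0$, I would interchange differentiation and integration to obtain
\begin{align*}
\left.\frac{\partial Z(\by;\blam)}{\partial \lambda_i}\right|_{\blamt=0} &= p_{\bYt}(\by)\,\bE\ppp{X_i\,\big|\,\bY=\by},\\
\left.\frac{\partial^2 Z(\by;\blam)}{\partial \lambda_i^2}\right|_{\blamt=0} &= p_{\bYt}(\by)\,\bE\ppp{X_i^2\,\big|\,\bY=\by}.
\end{align*}
Applying the elementary identity
$$\frac{\partial^2 \log Z}{\partial \lambda_i^2} \;=\; \frac{1}{Z}\frac{\partial^2 Z}{\partial \lambda_i^2} \;-\; \p{\frac{1}{Z}\frac{\partial Z}{\partial \lambda_i}}^2,$$
I would then read off at $\blamt = 0$,
$$\left.\frac{\partial \log Z(\by;\blam)}{\partial \lambda_i}\right|_{\blamt=0} = \bE\ppp{X_i\,\big|\,\bY=\by}, \qquad \left.\frac{\partial^2 \log Z(\by;\blam)}{\partial \lambda_i^2}\right|_{\blamt=0} = \bE\ppp{X_i^2\,\big|\,\bY=\by} - \p{\bE\ppp{X_i\,\big|\,\bY=\by}}^2,$$
so the first logarithmic derivative is the conditional mean and the second is the conditional variance.

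Finally, taking expectation with respect to $\bY$ on both sides and invoking the tower property, I would conclude: the second identity summed over $i$ produces $\sum_i \bE\ppp{\text{Var}(X_i|\bY)} = \sum_i \bE\{(X_i - \bE\{X_i|\bY\})^2\}$, which is the third line of the lemma; while the first identity, combined with the classical bias-variance split $\bE\{(X_i - \bE\{X_i|\bY\})^2\} = \bE\{X_i^2\} - \bE\{(\bE\{X_i|\bY\})^2\}$, delivers the second line. No genuine obstacle is anticipated; the only point deserving even a brief justification is the interchange of differentiation and integration, which is immediate from the Gaussian form of $p_{\bYt|\bXt}$ together with the integrability of $p_{\bXt}$ under $\bX = \bS\odot\bU$ with bounded-variance Gaussian $\bU$.
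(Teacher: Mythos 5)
Your proposal is correct and follows essentially the same route as the paper: differentiate under the integral sign and identify the first and second logarithmic derivatives of $Z$ at $\blam=\bze$ with the posterior mean and posterior variance of $X_i$. The only cosmetic difference is that the paper justifies the interchange of differentiation and integration by citing an explicit differentiation lemma with the dominating function $M(x_i;\lambda_{i,0}) = \abs{x_i}\exp\pp{(\lambda_{i,0}+\epsilon)\abs{x_i}}$ and only writes out the first-derivative identity in detail, whereas you spell out the second-derivative and bias--variance steps explicitly.
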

\begin{proof}
The main observation here is that
\begin{align}
\bE\ppp{X_i\vert\bY=\by} = \left.\frac{\partial\log Z\p{\by;\blam}}{\partial\lambda_i}\right|_{\blamt=0}.
\end{align}
Indeed, we note that
\begin{align}
\left.\frac{\partial\log Z\p{\by;\blam}}{\partial\lambda_i}\right|_{\blamt=0} &= \frac{1}{\int_{\calX^n}\mathrm{d}\bx\;p_{\bXt,\bYt}(\bx,\by)}\left.\frac{\partial}{\partial\lambda_i}\int_{\calX^n}\mathrm{d}\bx\;p_{\bXt,\bYt}(\bx,\by)\exp\ppp{\blam^T\bx}\right|_{\blamt=0}\\
& = \frac{\int_{\calX^n}\mathrm{d}\bx\;x_ip_{\bXt,\bYt}(\bx,\by)}{\int_{\calX^n}\mathrm{d}\bx\;p_{\bXt,\bYt}(\bx,\by)}= \bE\ppp{X_i\vert\bY= \by}
\end{align}
where the second equality follows from the following lemma \cite[Lemma 2]{Palomar1}.
\begin{lemma}\label{lem:divIntegral}
Consider a function $f(\bx,\theta)$ and a nonnegative function $g(\bx)$. The relation
\begin{align}
\frac{\partial}{\partial\theta}\int g(\bx)f(\bx;\theta)\mathrm{d}\bx = \int g(\bx)\frac{\partial}{\partial\theta}f(\bx;\theta)\mathrm{d}\bx
\end{align}
holds if for each $\theta_0$ there exists a neighborhood of $\theta_0$, $\calN_{\theta_0}$, and a function $M(\bx;\theta_0)$, such that
\begin{align}
\sup_{\theta\in\calN_{\theta_0}}\abs{\frac{\partial}{\partial\theta}f(\bx;\theta)}\leq M(\bx;\theta_0),\ \text{a.e.}\label{a.s.Connd}
\end{align}
with $\int g(\bx)M(\bx;\theta_0)\mathrm{d}\bx<\infty$. 
\end{lemma}

In our case, we have (substituting $\blam = (0,0,\ldots,\lambda_i,0,\ldots,0)$)
\begin{align}
\abs{\frac{\partial}{\partial\lambda_i}\exp\ppp{\lambda_ix_i}} =\abs{x_i}\exp\p{\lambda_ix_i},\label{partDivLemm}
\end{align}
To apply Lemma \ref{lem:divIntegral}, we need to check that for each $\lambda_{i,0}$ there exists a neighborhood around $\lambda_{i,0}$ such that \eqref{a.s.Connd} holds. Accordingly, to make the right-hand side (r.h.s.) of \eqref{partDivLemm} valid for a neighborhood around each such $\lambda_{i,0}$, we take $M(x_i;\lambda_{i,0}) = \abs{x_i}\exp\pp{(\lambda_{i,0}+\epsilon)\abs{x_i}}$ for some $\epsilon>0$. Finally, under the model presented in Subsection \ref{subsec:model}, the expectation of $M(x_i;\lambda_i)$ is clearly finite since the underlying joint probability distribution of $\bX$ and $\bY$ decays faster than the increase of $M(x_i;\lambda_i)$, and thus Lemma \ref{lem:divIntegral} can be invoked.
\end{proof}

Our analysis will rely heavily on methods and results from RMT. Two efficient tools commonly used in RMT are the \emph{Stieltjes} and \emph{Shannon} transforms, which are defined as follows.
\begin{definition}[Stieltjes Transform]
Let $\mu$ be a finite nonnegative measure with support $\text{supp}\p{\mu}\subset\mathbb{R}$, i.e., $\mu\p{\mathbb{R}}<\infty$. The Stieltjes transform $S_\mu\p{z}$ of $\mu$ is defined for $z\in\mathbb{C}-\text{supp}\p{\mu}$ as
$$
S_\mu\p{z} = \int_{\mathbb{R}}\frac{\mathrm{d}\mu\p{\lambda}}{\lambda-z}.
$$

Let $F_{\bAt}\p{\cdot}$ be the empirical spectral distribution (ESD) of the eigenvalues of a non-negative definite matrix $\bA\in\mathbb{R}^{N\times N}$, namely,
\begin{align}
F_{\bAt}\p{x} \triangleq \frac{1}{N}\ppp{\#\text{ of eigenvalues of $\bA$}\leq x}.
\end{align}
The Stieltjes transform of $F_{\bAt}\p{x}$ is defined as
\begin{align}
S_{\bAt}\p{z} \triangleq \int_{\mathbb{R}^+}\frac{dF_{\bAt}\p{x}}{x-z} = \frac{1}{N}\tr\p{\bA-z\bI}^{-1}
\end{align} 
for $z\in\mathbb{C}\setminus\mathbb{R}^+$. 
\end{definition}
The last equality readily follows by using the spectral decomposition of $\bA$, and the fact that the trace of a matrix equals to the sum of its eigenvalues. For brevity, we will refer to $S_{\bAt}\p{z}$ as the Stieltjes transform of $\bA$, rather than the Stieltjes transform of $F_{\bAt}\p{x}$.
\begin{definition}[Shannon Transform]
The Shannon of transform of a non-negative definite matrix $\bA\in\mathbb{C}^{N\times N}$ is defined as
\begin{align}
\nu_{\bAt}\p{z} \triangleq \frac{1}{N}\log\det\p{\frac{1}{z}\bA+\bI},
\end{align}
for $z>0$.
\end{definition}

The relation between our partition function and the Stieltjes and Shannon transforms will become clear in the sequel. Finally, we define the notion of deterministic equivalence.
\begin{definition}[Deterministic Equivalence]
Let $\p{\Omega,\calF,P}$ be a probability space and let $\ppp{f_n}$ be a series of measurable complex-valued functions, $f_n:\Omega\times\mathbb{C}\to\mathbb{C}$. Let $\ppp{g_n}$ be a series of complex-valued functions, $g_n:\mathbb{C}\to\mathbb{C}$. Then, $\ppp{g_n}$ is said to be a deterministic equivalent of $\ppp{f_n}$ on $D\subset\mathbb{C}$, if there exists a set $A\subset\Omega$ with $P\p{A}=1$, such that
\begin{align}
f_n\p{\omega,z}-g_n\p{z}\to0
\end{align}
as $n\to\infty$ for all $\omega\in A$ and for all $z\in D$. 
\end{definition}

Loosely speaking, $\ppp{g_n}$ is a deterministic equivalent of a sequence of random variables $\ppp{f_n}$ if $g_n\p{z}$ approximates $f_n\p{\omega,z}$ arbitrarily closely as $n$ grows, for every $z\in D$ and every $\omega\in A$.

\subsection{Auxiliary Results}
In our derivations, the following asymptotic results will be used.


\begin{lemma}\label{aux:assympLLN}
Consider a sequence of random variables $\ppp{X_{i,n}}_{i=1}^n$. Assume that
\begin{align}
\max_{1\leq i\leq n}\ppp{\bE\abs{X_{i,n}}^p}\leq\frac{C}{n^{1+\nu}}
\label{LLNASSU}
\end{align}
where $C$, $\nu>0$, and $p\geq1$ are some constants. Then, for any $\delta>0$,
\begin{align}
\pr\ppp{\frac{1}{n}\sum_{i=1}^n\abs{X_{i,n}}>\delta}\leq\frac{C}{\delta^pn^{1+\nu}},
\end{align}
and $n^{-1}\sum_{i=1}^n|X_{i,n}|$ converges to zero in the a.s. sense.
\end{lemma}
\begin{proof}
Using Chebyshev's inequality and then Jensen's inequality, for a given $\delta>0$, we have
\begin{align}
\pr\ppp{\frac{1}{n}\sum_{i=1}^n\abs{X_{i,n}}>\delta}&\leq\frac{1}{\delta^p}\bE\ppp{\p{\frac{1}{n}\sum_{i=1}^n\abs{X_{i,n}}}^p}\\
&\leq\frac{1}{n\delta^p}\sum_{i=1}^n\bE\abs{X_{i,n}}^p\\
&\leq \frac{1}{\delta^p}\max_{1\leq i\leq n}\ppp{\bE\abs{X_{i,n}}^p}\\
&\leq\frac{C}{\delta^pn^{1+\nu}}
\label{LLNineq}
\end{align}
where the last inequality follows by \eqref{LLNASSU}. With \eqref{LLNineq}, the a.s. convergence follows from the Borel-Cantelli lemma. Indeed, as the r.h.s. of \eqref{LLNineq} is summable, by the Borel-Cantelli lemma, we have
\begin{align}
\pr\p{\ppp{\omega\in\Omega:\;\frac{1}{n}\sum_{i=1}^n\abs{X_{i,n}\p{\omega}}\geq\delta\;\text{infinitely often}}}=0.
\end{align}
But since $\delta>0$ is arbitrary, the above holds for all rational $\delta>0$. Since any countable union of sets of zero probability is still a set of zero probability, we conclude that
\begin{align}
\pr\p{\bigcup_{q\in\mathbb{N}}\ppp{\omega\in\Omega:\;\frac{1}{n}\sum_{i=1}^n\abs{X_{i,n}\p{\omega}}\geq\frac{1}{q}\;\text{infinitely often}}}=0.
\end{align}
\end{proof}

The following lemmas deal with the asymptotic behavior of scalar functions of random matrices, in the form of Stieltjes and Shannon transforms, defined earlier. The proofs of the following results are based on a powerful approach by Bai and Silverstein \cite{SilversteinBai}, a.k.a. the Stieltjes transform method in the spectral analysis of large-dimensional random matrices.
\begin{lemma}[\cite{Couillet}]\label{lem:mat1}
Let $\bX_m\in\mathbb{C}^{m\times l}$ be a sequence of random matrices with i.i.d. entries, $\bE\abs{X_{i,j}-\bE X_{i,j}}^2=1/l$, and let $\bG_l = \diag\p{g_1,\ldots,g_l}\in\mathbb{R}^{l\times l}$ be a sequence of deterministic matrices, satisfying $g_j\geq0$ for all $1\leq j\leq l$ and $\sup_jg_j<\infty$. Denote $\bB_m = \bX_m\bG_l\bX_m^H$, and let $l,m\to\infty$ with fixed $0<c \triangleq m/l<\infty$. Then, for every $\gamma>0$
\begin{align}
\frac{1}{m}\log\det\p{\frac{1}{\gamma}\bB_m+\bI_m}-\eta\p{\gamma}\to 0,\ \ \text{a.s.}
\end{align} 
where
\begin{align}
\eta\p{\gamma} &\triangleq \frac{1}{m}\sum_{j=1}^l\log\pp{1+cg_j\bar{S}\p{-\gamma}}-\log\pp{\gamma\bar{S}\p{-\gamma}}-\frac{1}{l}\sum_{j=1}^l\frac{g_j\bar{S}\p{-\gamma}}{1+cg_j\bar{S}\p{-\gamma}}
\label{InfShnann}
\end{align}
and $\bar{S}\p{z}$ is defined by the unique positive solution of the equation
\begin{align}
\bar{S}\p{z} = \p{\frac{1}{l}\sum_{j=1}^l\frac{g_j}{1+cg_j\bar{S}\p{z}}-z}^{-1}.
\label{stelas}
\end{align}
\end{lemma}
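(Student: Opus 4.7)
The plan is to reduce the statement to a more standard Stieltjes-transform convergence together with a one-dimensional integration. I would start from the identity
$$
\frac{1}{m}\log\det\p{\frac{1}{\gamma}\bB_m+\bI_m} = \int_\gamma^{\infty}\pp{\frac{1}{t}-\frac{1}{m}\tr\p{\bB_m+t\bI_m}^{-1}}\mathrm{d}t,
$$
obtained by differentiating both sides in $\gamma$ and noting that both vanish as $\gamma\to\infty$ (the integrand decays like $O\p{1/t^2}$ since $\bB_m$ has bounded normalized trace by the hypothesis $\sup_jg_j<\infty$). Thus it suffices to (a) identify the almost-sure deterministic equivalent of the Stieltjes transform $\frac{1}{m}\tr\p{\bB_m+t\bI_m}^{-1}$ as $\bar{S}\p{-t}$ satisfying \eqref{stelas}, (b) verify that the indefinite integral of $1/t-\bar{S}\p{-t}$ equals the function $\eta\p{\gamma}$ in \eqref{InfShnann}, and (c) justify interchanging the almost-sure limit with the $t$-integration.

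For step (a) I would invoke the Bai-Silverstein Stieltjes method. Writing $\bB_m = \sum_{j=1}^l g_j\bx_j\bx_j^H$ with $\bx_j$ the $j$th column of $\bX_m$, the Sherman-Morrison identity expresses $(\bB_m+t\bI_m)^{-1}\bx_j$ in terms of the resolvent of $\bB_m-g_j\bx_j\bx_j^H$, which is independent of $\bx_j$. Combined with the standard trace lemma --- for a vector $\bx$ with i.i.d. entries of variance $1/l$ and bounded higher moments, independent of a matrix $\bA$ with bounded spectral norm, $\bx^H\bA\bx - \frac{1}{l}\tr\bA\to 0$ almost surely --- one arrives, after summing the $l$ rank-one contributions and passing to the limit, at the fixed-point relation \eqref{stelas}. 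Existence and uniqueness of $\bar{S}\p{-t}>0$ for $t>0$ follow from monotonicity of the right-hand side in $\bar{S}$. The upgrade from convergence in expectation to almost-sure convergence is obtained through a martingale-difference decomposition with respect to the filtration generated by the columns of $\bX_m$: Burkholder's inequality together with rank-one perturbation bounds on the resolvent provide fourth-moment control of order $O\p{1/m^2}$, and Borel-Cantelli (in the spirit of Lemma \ref{aux:assympLLN}) delivers the claim.

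For step (b) set $s\triangleq\bar{S}\p{-\gamma}$ and view \eqref{stelas} as an implicit definition of $s\p{\gamma}$. Differentiating that relation yields the identity $s'\pp{\gamma+\frac{1}{l}\sum_j g_j\p{1+cg_js}^{-2}}=-s$. A direct differentiation of \eqref{InfShnann}, combined with the rearranged fixed point $\frac{1}{l}\sum_j g_j\p{1+cg_js}^{-1}=\frac{1}{s}-\gamma$, collapses $\eta'\p{\gamma}$ to exactly $-s'\pp{\gamma+\frac{1}{l}\sum_j g_j\p{1+cg_js}^{-2}}-\frac{1}{\gamma}$, which by the differentiated fixed-point identity equals $s-1/\gamma$, matching the pointwise limit of the integrand. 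Matching boundary values at $\gamma\to\infty$ (where $\eta\p{\gamma}\to 0$ and $s\sim 1/\gamma$) fixes the constant of integration, so $\eta\p{\gamma}=\int_\gamma^\infty\pp{1/t-\bar{S}\p{-t}}\mathrm{d}t$.

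The principal obstacle is step (c): upgrading pointwise almost-sure convergence of the Stieltjes transform to convergence suitable for exchanging limit and integral. This is the technical heart of the argument and would be addressed via Vitali's convergence theorem, since the family $\ppp{z\mapsto\frac{1}{m}\tr\p{\bB_m-z\bI_m}^{-1}}$ consists of analytic functions on $\ppp{\re z<0}$ that are uniformly bounded on compacta; pointwise almost-sure convergence then automatically upgrades to locally uniform almost-sure convergence in $t$. Combining this with a dominated-convergence bound on the tail integral (using the uniform $O\p{1/t^2}$ decay of the integrand, which holds almost surely for $m$ large enough) completes the proof.
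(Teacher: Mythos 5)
You have reconstructed a proof for a result that the paper does not prove at all: Lemma \ref{lem:mat1} is imported verbatim from \cite{Couillet}, so there is no in-paper argument to compare against. Judged on its own merits, your sketch is essentially the standard (and correct) route, and in fact mirrors the strategy of the cited reference: the integral identity $\frac{1}{m}\log\det\p{\frac{1}{\gamma}\bB_m+\bI_m}=\int_\gamma^\infty\pp{\frac{1}{t}-\frac{1}{m}\tr\p{\bB_m+t\bI_m}^{-1}}\mathrm{d}t$ is exact (diagonalize $\bB_m$), the Sherman--Morrison/trace-lemma derivation of the fixed point \eqref{stelas} is the Bai--Silverstein argument, and your step (b) checks out: writing $A=\frac{1}{l}\sum_j g_j\p{1+cg_js}^{-1}=\frac{1}{s}-\gamma$ and $B=\frac{1}{l}\sum_j g_j\p{1+cg_js}^{-2}=A-csD$ with $D=\frac{1}{l}\sum_j g_j^2\p{1+cg_js}^{-2}$, the differentiated fixed point gives $s'=s^2/(cs^2D-1)$ and hence $s'\p{\gamma+B}=-s$, so $\eta'\p{\gamma}=s-1/\gamma$ as you claim, with the correct boundary value at $\gamma\to\infty$. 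Two caveats are worth flagging. First, the lemma as stated assumes only finite second moments of the entries, whereas your trace-lemma/Burkholder machinery needs bounded higher-order moments; a full proof at this generality requires a truncation-and-centralization step, though this is moot for the paper's application since the entries of $\bH$ are assumed to have bounded normalized moments of all orders. Second, in step (c) the deterministic equivalent $\bar{S}$ itself depends on $m$ (through the $g_j$), so Vitali's theorem cannot be applied to a single convergent sequence; you need the normal-families version applied to the difference $\frac{1}{m}\tr\p{\bB_m+t\bI_m}^{-1}-\bar{S}\p{-t}$ (uniformly bounded by $2/t$ on compacta, every subsequence has a locally uniformly convergent sub-subsequence, and the pointwise limit of the difference is zero), after which your $O\p{1/t^2}$ tail domination — valid a.s. since $\frac{1}{m}\tr\bB_m$ is a.s. bounded under $\sup_jg_j<\infty$ — closes the argument. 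With those two repairs the proposal is a complete and correct proof.
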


The next lemma deals with the asymptotic behavior of the Stieltjes transform. 
\begin{lemma}[\cite{Sebastian}]\label{lem:mat2}
Let $\bX_m$, $\bG_l$, and $\bB_m$ be defined as in Lemma \ref{lem:mat1}. Let $\bThet_m\in\mathbb{C}^{m\times m}$ be a deterministic sequence of matrices having uniformly bounded spectral norms (with respect to $m$)\footnote{Actually we only need to demand the distribution $F_{\bThett_m}$ to be tight, namely, for all $\epsilon>0$ there exists $M>0$ such that $F_{\bThett_m}\p{M}>1-\epsilon$ for all $m$.}. Then, we a.s. have that
\begin{align}
\frac{1}{m}\tr\p{\bThet_m\p{\bB_m-z\bI_m}^{-1}}-\frac{1}{m}\tr\p{\bThet_m}\bar{S}\p{z}\to 0,\ \text{for all}\ z\in\mathbb{C}\setminus \mathbb{R}_+,
\label{infste}
\end{align}
as $m,l\to\infty$.
\end{lemma}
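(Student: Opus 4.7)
The plan is to prove Lemma \ref{lem:mat2} by the classical Bai--Silverstein Stieltjes transform method, splitting the claim into an almost-sure concentration statement and an identification of the deterministic limit. Throughout, set $\bQ_m(z)\triangleq(\bB_m-z\bI_m)^{-1}$ and, for each column $\bx_j$ of $\bX_m$, the leave-one-out resolvent $\bQ_m^{(j)}(z)\triangleq(\bB_m-g_j\bx_j\bx_j^H-z\bI_m)^{-1}$. For $z\in\mathbb{C}\setminus\mathbb{R}_+$ both resolvents are uniformly bounded in spectral norm by $1/\mathrm{dist}(z,\mathbb{R}_+)$, which is the quantity that controls every estimate below.

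First I would establish the concentration
\begin{align}
\frac{1}{m}\tr\!\p{\bThet_m\bQ_m(z)}-\bE\!\left[\frac{1}{m}\tr\!\p{\bThet_m\bQ_m(z)}\right]\stackrel{\text{a.s.}}{\longrightarrow}0.
\end{align}
This is done via a Doob martingale: let $\calF_k$ be generated by $\bx_1,\ldots,\bx_k$ and put $M_k=\bE[m^{-1}\tr(\bThet_m\bQ_m(z))\mid\calF_k]$. Using the rank-one perturbation identity $\bQ_m-\bQ_m^{(j)}=-g_j\bQ_m^{(j)}\bx_j\bx_j^H\bQ_m$ together with the bound $|\tr(\bThet_m(\bQ_m-\bQ_m^{(j)}))|\le\|\bThet_m\|/\mathrm{dist}(z,\mathbb{R}_+)$, each increment satisfies $|M_k-M_{k-1}|=O(1/m)$. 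Azuma--Hoeffding then yields $\pr\{|\cdot|>\delta\}\le 2\exp(-c\delta^2 m)$, and Borel--Cantelli (or Lemma \ref{aux:assympLLN}) delivers the almost-sure statement.

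Next I would identify the expectation limit. Starting from the resolvent identity $\bQ_m(z)=-z^{-1}\bI_m-z^{-1}\bB_m\bQ_m(z)$ and expanding $\bB_m=\sum_{j=1}^l g_j\bx_j\bx_j^H$, Sherman--Morrison gives $\bQ_m\bx_j=\bQ_m^{(j)}\bx_j/(1+g_j\bx_j^H\bQ_m^{(j)}\bx_j)$, so that
\begin{align}
\frac{1}{m}\bE\tr\!\p{\bThet_m\bQ_m}+\frac{\tr\bThet_m}{zm}=-\frac{1}{zm}\sum_{j=1}^l g_j\,\bE\!\left[\frac{\bx_j^H\bQ_m^{(j)}\bThet_m\bx_j}{1+g_j\bx_j^H\bQ_m^{(j)}\bx_j}\right].
\end{align}
The trace lemma for i.i.d.~vectors with bounded normalized moments (guaranteed by the assumption $\bE(\sqrt{n}H_{ij})^\ell\le\upsilon_\ell$) gives $\bx_j^H\bA\bx_j-l^{-1}\tr\bA\to 0$ a.s.\ for any deterministic (or $\bx_j$-independent) $\bA$ of bounded norm. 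Apply this with $\bA=\bQ_m^{(j)}\bThet_m$ in the numerator and with $\bA=\bQ_m^{(j)}$ in the denominator, then use $|\tr(\bQ_m-\bQ_m^{(j)})|=O(1)$ to replace $\bQ_m^{(j)}$ by $\bQ_m$ at cost $O(1/l)$. Writing $s_m(z)\triangleq\bE[m^{-1}\tr\bQ_m(z)]$ and $\tau_m\triangleq m^{-1}\tr\bThet_m$, this manipulation reduces the display to
\begin{align}
s_m(z)\,\tau_m + z^{-1}\tau_m = -z^{-1}\tau_m\cdot\frac{1}{l}\sum_{j=1}^l\frac{g_j s_m(z)}{1+cg_j s_m(z)}+o(1),
\end{align}
which, after rearrangement, is precisely equation \eqref{stelas}. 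Uniqueness of the Stieltjes-class solution to that equation (it is a Nevanlinna function on $\mathbb{C}^+$, extended by analytic continuation) forces $s_m(z)\to\bar S(z)$, and then the same substitution argument with the weight $\bThet_m$ shows $\bE[m^{-1}\tr(\bThet_m\bQ_m(z))]\to \bar S(z)\cdot\lim_m\tau_m$, or more precisely that the difference $m^{-1}\tr(\bThet_m\bQ_m(z))-\bar S(z)m^{-1}\tr(\bThet_m)$ tends to zero almost surely.

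The main obstacle is Step 3, i.e.\ controlling the replacement of the random quadratic forms $\bx_j^H\bQ_m^{(j)}\bThet_m\bx_j$ by their deterministic surrogates uniformly in $j$ and for a general (non-identity) weighting $\bThet_m$. What is needed is the $L^p$ trace lemma $\bE|\bx^H\bA\bx-l^{-1}\tr\bA|^p\le C_p\|\bA\|^p l^{-p/2}$ (for $p$ high enough that the moments $\upsilon_{2p}$ exist), together with the fact that the denominator $1+g_j\bx_j^H\bQ_m^{(j)}\bx_j$ stays bounded away from $0$ thanks to $\Img z\neq 0$ (or $z<0$) and $g_j\ge 0$. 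Once the trace lemma combines with Lemma \ref{aux:assympLLN} to give $l^{-1}\sum_j|\cdot|\to 0$ almost surely, concentration from Step~1 is used to deduce that the fluctuations around the mean are negligible, yielding the claimed a.s.\ convergence for every fixed $z\in\mathbb{C}\setminus\mathbb{R}_+$; a standard Vitali/Montel argument then promotes pointwise a.s.\ convergence to the locally uniform version needed in later applications.
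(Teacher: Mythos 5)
The paper offers no proof of this lemma to compare against: it is imported verbatim from \cite{Sebastian}, with only the remark that such results are proved by the Bai--Silverstein Stieltjes-transform method. Your sketch is exactly that method, so in substance you are reconstructing the cited proof rather than diverging from anything in the paper. The architecture is sound: martingale concentration of $\frac{1}{m}\tr\p{\bThet_m\p{\bB_m-z\bI_m}^{-1}}$ about its mean via bounded differences and the rank-one perturbation bound (the paper's Lemma~\ref{aplem:tracedif}), identification of the mean through the resolvent identity, the matrix inversion lemma, and the $L^p$ quadratic-form trace lemma, then uniqueness of the Nevanlinna-class solution of \eqref{stelas} and a Vitali/Montel argument for local uniformity. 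You also correctly isolate the two delicate points --- uniform-in-$j$ control of $\bx_j^H\bQ_m^{(j)}\bThet_m\bx_j$ for a non-identity weight, and the denominators $1+g_j\bx_j^H\bQ_m^{(j)}\bx_j$ staying away from zero --- and correctly observe that the weighted statement follows by first settling the case $\bThet_m=\bI_m$ and then re-substituting.

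Two repairs are needed before this stands as a proof. First, an algebraic slip: the resolvent identity is $\p{\bB_m-z\bI_m}^{-1}=-z^{-1}\bI_m+z^{-1}\bB_m\p{\bB_m-z\bI_m}^{-1}$, with a \emph{plus} sign on the second term. Carrying your minus sign through, your displayed self-consistent equation rearranges to $s_m(z)=-\p{z+\frac{1}{l}\sum_{j}\frac{g_j}{1+cg_js_m(z)}}^{-1}$, which is \emph{not} \eqref{stelas}, so the identification of the limit fails as written; with the sign corrected the computation does land on \eqref{stelas}. Second, concluding $s_m(z)\to\bar{S}(z)$ from the fact that $s_m$ satisfies the fixed-point equation only up to an $o(1)$ perturbation requires a stability estimate for that equation (e.g., showing the map is a contraction on the relevant domain of $\mathbb{C}^+$, or bounding $\abs{s_m-\bar{S}}$ by the perturbation), not merely uniqueness of the exact solution. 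Both points are standard in the Bai--Silverstein literature, but as submitted the first one makes the key display incorrect.
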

\begin{remark}
In \cite{Palomar}, the authors propose a somewhat more restrictive (but useful) version of Lemma \ref{lem:mat2}. Assuming that $\bThet_m$ has a uniformly bounded Frobenius norm (for all $m$), they show similarly that
\begin{align}
\abs{\tr\p{\bThet_m\p{\bB_m-z\bI_m}^{-1}}-\tr\p{\bThet_m}\bar{S}\p{z}}\to 0,\ \text{for}\ z\in\mathbb{C}\setminus \mathbb{R}_+.
\label{Palomar}
\end{align}
a.s. as $m,l\to\infty$.
\end{remark}

In order to apply the above results in our analysis, a somewhat more general version will be needed. First, the matrix $\bThet_m$ in the Lemma \ref{lem:mat2} is assumed to be deterministic and bounded (in the spectral or Frobenius senses). In our case, however, we will need to deal with a random matrix $\bThet_m$ which is independent of the other random variables. The following proposition accounts for this problem. The proof readily follows by first conditioning on $\bThet_m$ (which is now random and a.s. bounded) and then applying Lemma \ref{lem:mat2}. 
\begin{prop}\label{lem:mat3} The assertion of Lemma \ref{lem:mat2} holds true also for a random $\bThet_m\in\mathbb{C}^{m\times m}$, which is independent of $\bX_m$, and has a uniformly bounded spectral norm (with respect to $m$) in the a.s. sense. 
\end{prop}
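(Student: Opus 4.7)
\textbf{Proof plan for Proposition \ref{lem:mat3}.} The plan is to reduce the random-$\bThet_m$ case to the deterministic-$\bThet_m$ case of Lemma \ref{lem:mat2} by a straightforward conditioning argument that exploits the independence of $\bThet_m$ and $\bX_m$. Let $\p{\Omega_{\Theta},\calF_{\Theta},P_{\Theta}}$ and $\p{\Omega_X,\calF_X,P_X}$ denote the probability spaces on which the sequences $\ppp{\bThet_m}$ and $\ppp{\bX_m}$ are defined respectively; by independence, the joint law of $\p{\ppp{\bThet_m},\ppp{\bX_m}}$ lives on the product space, with product measure $P_{\Theta}\otimes P_X$.

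First, I would isolate the a.s. boundedness hypothesis by defining the event
\begin{align}
A_{\Theta}\triangleq\ppp{\omega_{\Theta}\in\Omega_{\Theta}:\ \sup_{m\geq1}\norm{\bThet_m(\omega_{\Theta})}_{\text{spec}}<\infty},
\end{align}
which by hypothesis satisfies $P_{\Theta}\p{A_{\Theta}}=1$. For any fixed $\omega_{\Theta}\in A_{\Theta}$, the sequence $\ppp{\bThet_m(\omega_{\Theta})}$ is now a \emph{deterministic} sequence of matrices with uniformly bounded spectral norm, and hence with tight empirical spectral distribution. This is exactly the setting of Lemma \ref{lem:mat2}, so applying that lemma on the probability space $\p{\Omega_X,\calF_X,P_X}$ of $\bX_m$ yields a set $B(\omega_{\Theta})\subset\Omega_X$ with $P_X\p{B(\omega_{\Theta})}=1$ on which
\begin{align}
\frac{1}{m}\tr\p{\bThet_m(\omega_{\Theta})\p{\bB_m(\omega_X)-z\bI_m}^{-1}}-\frac{1}{m}\tr\p{\bThet_m(\omega_{\Theta})}\bar{S}\p{z}\rightarrow0
\end{align}
for every $z\in\mathbb{C}\setminus\mathbb{R}_+$ (in the standard fashion, one first establishes the convergence simultaneously over a countable dense subset of $\mathbb{C}\setminus\mathbb{R}_+$ and extends to the whole domain by analyticity and locally uniform boundedness of Stieltjes transforms on the upper half-plane).

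Next, I would invoke Fubini's theorem to package these conditional a.s. statements into a joint a.s. statement. The event $C\triangleq\ppp{(\omega_{\Theta},\omega_X):\;\omega_{\Theta}\in A_{\Theta},\ \omega_X\in B(\omega_{\Theta})}$ has product measure
\begin{align}
(P_{\Theta}\otimes P_X)(C)=\int_{A_{\Theta}}P_X(B(\omega_{\Theta}))\,\mathrm{d}P_{\Theta}(\omega_{\Theta})=\int_{A_{\Theta}}1\,\mathrm{d}P_{\Theta}(\omega_{\Theta})=1,
\end{align}
and on $C$ the desired convergence holds for every $z\in\mathbb{C}\setminus\mathbb{R}_+$. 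This establishes the a.s.\ convergence claimed in the proposition.

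The only subtlety — and the main place where care is warranted — is the measurability of $\omega_{\Theta}\mapsto P_X(B(\omega_{\Theta}))$, i.e., ensuring that the pointwise application of Lemma \ref{lem:mat2} yields a jointly measurable exceptional set. This is handled by noting that the convergence on a fixed countable dense set $\calD\subset\mathbb{C}\setminus\mathbb{R}_+$ can be phrased as a countable intersection of events of the form $\ppp{\limsup_m|\cdot|\leq 1/q}$ for $q\in\mathbb{N}$ and $z\in\calD$, each of which is jointly measurable because the integrand is a continuous function of the matrix entries. Once the a.s.\ convergence is pinned down on $\calD$, the extension to all of $\mathbb{C}\setminus\mathbb{R}_+$ follows from Vitali's convergence theorem for analytic functions together with the trivial bound $|\tfrac{1}{m}\tr(\bThet_m(\bB_m-z\bI_m)^{-1})|\leq\norm{\bThet_m}_{\text{spec}}/|\Img z|$ on any compact subset of the upper half-plane. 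With this measurability taken care of, the Fubini step goes through, and the proposition is proved.
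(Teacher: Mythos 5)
Your proposal is correct and follows essentially the same route as the paper, which disposes of the proposition in one sentence by conditioning on $\bThet_m$ (deterministic and bounded once fixed) and applying Lemma \ref{lem:mat2}; your Fubini step and the measurability/countable-dense-set discussion simply make explicit the details the paper leaves implicit. No gap.
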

\begin{remark}
In Proposition \ref{lem:mat3}, it is assumed that $\bThet_m$ has uniformly bounded spectral norm (uniformly in $m$) in the a.s. sense, namely,
\begin{align}
\limsup_{m\to\infty}\norm{\bThet_m}<\infty
\end{align}
with probability one. In other words, for every $\epsilon>0$, there exists some positive $M_0$ such that for all $m>M_0$ we have that $\norm{\bThet_m}<D+\epsilon$ for some finite constant $D$. 
\end{remark}

The second issue is regarding the assumption that the ratio $c = m/l$, in the previous lemmas, tends to a strictly positive limit. In our case, however, this limit may be zero. Fortunately, it turns out that the previous results still hold true also in this case, namely, a continuity property w.r.t. $c$. Technically speaking, this fact can be verified by repeating the original proofs \cite{Couillet,Sebastian} of the above results and noticing that the positivity assumption is superfluous\footnote{Specifically, one just need to replace every instance of $m$ by $c\cdot l$ in the original proofs and things go along without any issue.}. To give some sense, consider the following two special cases. First, in case that $m$ is fixed while $l$ goes to infinity (and then $c$ vanishes), using the strong law of large numbers (SLLN), it is easy to see that the previous lemmas indeed hold true. Also, if $m\ll\sqrt{l}$, then a simple approach is to show that the diagonal elements of the matrix $\bX_m\bX_m^T$ concentrate around a fixed value, and that the row sum of off-diagonal terms converges to zero. Then using Gershgorin's circle theorem \cite{Thomas} one obtains the deterministic equivalent. Obviously, these two special cases do not cover the whole range of $m=o(l)$, which, as said, can be shown by repeating the original proofs in \cite{Couillet,Sebastian}.

In the following subsection, we prove Theorem 1. The proof contains several tedious calculations and lemmas, which will relegated to appendices for the sake of convenience.


\subsection{Main Steps in the Proof of Theorem \ref{th:1}}\label{sec:mainsteps}
Let $\bs$ and $\br$ be two binary sequences of length $n$, and let $\calS\triangleq\text{supp}\p{\bs}$ and $\calR\triangleq\text{supp}\p{\br}$ designate their respective supports, defined as $\text{supp}\p{\bs} \triangleq \ppp{i\in\ppp{1,2,\ldots,n}:\;S_i\neq 0}$, and similarly for $\br$. Also, define
\begin{align}
\bQsr\triangleq \sum_{j\in\calS\cap\calR}\bet_{m^s_j}\tilde{\bet}_{m^r_j}^T,\label{Qmatdefdef}
\end{align}
where $\bet_{m^s_j}$ and $\tilde{\bet}_{m^r_j}$ denote unit vectors of size\footnote{For a set $\calA$, we use $\abs{\calA}$ to designate its cardinality.} $\abs{\calS}\times 1$ and $\abs{\calR}\times 1$, having ``$1$" at the indexes $m^s_j\triangleq \sum_{l=1}^js_l$ and $m^r_j\triangleq \sum_{l=1}^jr_l$, respectively. Note that $\bQsr$ can be written as $\bQsr = \bQs\bQr^T$, where $\bQs$ is an $\abs{\calS}\times\abs{\calS\cap\calR}$ matrix with column vectors $\ppp{\bet_{m^s_j}}$, and $\bQr$ is an $\abs{\calR}\times\abs{\calS\cap\calR}$ matrix with column vectors $\ppp{\tilde{\bet}_{m^r_j}}$. Also, it is easy to check that $\bQs\bQs^T$ and $\bQr\bQr^T$ are binary diagonal matrices, with unit values at positions correspond to the supports of $\bs$ and $\br$, respectively. It is then obvious that $\bQs\bQs^T\preceq\bItt$ and $\bQr\bQr^T\preceq\bItr$, where $\bItt$ and $\bItr$ are $\abs\calS\times\abs\calS$ and $\abs\calR\times\abs\calR$ unit matrices, respectively. 
\begin{example}
Let $n=6$, and consider $\bs = \p{1,1,0,0,1,1}$ and $\br = \p{0,1,1,0,0,1}$. Then, $\calS = \ppp{1,2,5,6}$, $\calR = \ppp{2,3,6}$, and thus $\calS\cap\calR = \ppp{2,6}$. Whence $m_2^s = 2$, $m_2^r = 1$, $m_6^s = 4$, and $m_6^r = 3$. Accordingly, the matrix $\bQsr$ is given by
\begin{align}
\bQsr^T = \p{\bet_2\tilde{\bet}_1^T+\bet_4\tilde{\bet}_3^T}^T=
\begin{pmatrix}
  0 & 1 & 0 & 0 \\
  0 & 0 & 0 & 0\\
	0 & 0 & 0 & 1
 \end{pmatrix}.\nonumber
\end{align}
\end{example}

For a vector $\bv$ and a matrix $\bV$, with real-valued entries, we define $\bv_{\bst} \triangleq \left.\bv\right|_{\calS}\in\mathbb{R}^{\abs{\calS}\times1}$ and $\bV_{\bst} \triangleq \left.\bV\right|_{\calS}\in\mathbb{R}^{k\times\abs{\calS}}$, which is the restriction of the entries of $\bv$ and the columns of $\bV$ on the support $\calS$, respectively. For brevity, we define the following quantities:
\begin{align}
&\calHs \triangleq \p{\beta\bHtt^T\bHtt+\frac{1}{\sigma^2}\bItt}^{-1},\\
&\calHsi \triangleq \p{\beta\pp{\bHtt^T\bHtt}_{i}+\frac{1}{\sigma^2}\bItt}^{-1},\\
&\calHsii \triangleq \p{\beta\pp{\bHtt^T\bHtt}_{i,j}+\frac{1}{\sigma^2}\bItt}^{-1},
\end{align}
where $\pp{\bHtt^T\bHtt}_{i}\triangleq \bHtt^T\bHtt-\bz_i\bz_i^T$, and $\pp{\bHtt^T\bHtt}_{i,j}\triangleq \pp{\bHtt^T\bHtt}_{i}-\bz_j\bz_j^T$, in which $\bz_i$ is the $i$th row of $\bHtt$. Finally, we define:
\begin{align}
\xi\p{\by,\bHtt} &\triangleq \exp\left\{\frac{\beta^2}{2}\by^T\bHtt\calHs\bHtt^T\by-\frac{1}{2}\log\det\p{\beta\sigma^2\bHtt^T\bHtt+\bI_{\bst}}\right\},\label{JJJ1}\\
J\p{\by,\bHtt,\bHtr}&\triangleq \frac{\beta^2}{n}\by^T\bHtt\calHs\bQsr\calHr\bHtr^T\by.\label{JJJ3}
\end{align}

Under the model described in Section \ref{sec:model},
\begin{align}
&p_{\bYt\vert\bXt,\bHt}(\by\vert\bH,\bx) = \frac{1}{\p{2\pi/\beta}^{k/2}}\exp\p{-\frac{\beta}{2}\norm{\by-\bH\bx}^2},
\end{align}
and
\begin{align}
&p_{\bXt}(\bx) = \sum_{\bs\in\ppp{0,1}^n}P_{\bSt}(\bs)\prod_{i:\;s_i=0}\delta\p{x_i}\prod_{i:\;s_i=1}\frac{1}{\sqrt{2\pi\sigma^2}}e^{-\frac{1}{2\sigma^2}x_i^2}.
\end{align}
Therefore, the partition function in \eqref{PartFunc} is given by
\begin{align}
&Z\p{\by,\bH;\blam} =  \sum_{\bs\in\ppp{0,1}^n}P_{\bSt}(\bs)\int_{\mathbb{R}^n}e^{\blamt^T\bxt}p_{\bYt\vert\bXt,\bHt}(\by\vert\bH,\bx)\prod_{i:\;s_i=0}\delta\p{x_i}\prod_{i:\;s_i=1}\frac{1}{\sqrt{2\pi\sigma^2}}e^{-\frac{1}{2\sigma^2}x_i^2}\mathrm{d}\bx.
\end{align}
In the following, using Lemma \ref{lemma:partMM}, we provide a generic expression for the MMSE. The proof appears in Appendix \ref{app:0}.
\begin{lemma}\label{lem:MMSe_div}
The normalized MMSE is given by:
\begin{align}
\frac{\text{mmse}\p{\bX\vert\bY,\bH}}{n} = \frac{\sigma^2}{n}\sum_{i=1}^n\bE\ppp{S_i}-\bE\ppp{\bE_{\mu_{s\times r}}\pp{J\p{\bY,\bHtt,\bHtr}}}
\label{MMSEequationLL}
\end{align}
where $\bE_{\mu_{s\times r}}$ denotes the expectation w.r.t. the discrete probability distribution
\begin{align}
\mu\p{\bs\vert\bY,\bH}\times\mu\p{\br\vert\bY,\bH}\triangleq\frac{P_{\bSt}(\bs)P_{\bSt}(\br)\xi\p{\bY,\bHtt}\xi\p{\bY,\bHtr}}{\pp{\sum_{\but\in\ppp{0,1}^n}P_{\bSt}\p{\bu}\xi\p{\bY,\bHtu}}^2}.\label{disdiser}
\end{align}
\end{lemma}

At this stage, the relation to the Stieltjes and Shannon transforms is clear: The structure of the various terms in $\xi\p{\by,\bHtt}$ and $J\p{\by,\bHtt,\bHtr}$ suggest an application of an extended version of the Stieltjes and Shannon transforms of the matrix $\bHtt^T\bHtt$. The following proposition is essentially the core of our analysis; it provides approximations (which are asymptotically exact in the a.s. sense) of $\xi\p{\by,\bHtt}$ and $J\p{\by,\bHtt,\bHtr}$. 
Recall the auxiliary variables defined in \eqref{firstt}-\eqref{lastt}. The following result is proved in Appendix \ref{app:1}.
\begin{prop}[Asymptotic approximations]\label{prop:2}
For every $\epsilon,p>0$
\begin{align}
&\max_{\bst\in\ppp{0,1}^n}\pr\ppp{\abs{\frac{1}{n}\tr\calHs-\sigma^2m_{\bst}b\p{m_{\bst}}}>\epsilon}\leq\frac{1}{\epsilon^p}\calO(n^{-p/2}),\label{itm01}\\
&\max_{\bst\in\ppp{0,1}^n}\pr\ppp{\abs{\frac{1}{n}\bY^T\bHtt\calHs\bHtt^T\bY-f_{n}}>\epsilon}\leq\frac{1}{\epsilon^p}\calO(n^{-p/2}),\label{itm03}\\
&\max_{\bst,\brt\in\ppp{0,1}^n}\pr\ppp{\abs{\frac{1}{n}\bY^T\bHtt\calHs\bQsr\calHr\bHtr^T\bY-q_{n}}>\epsilon}\leq\frac{1}{\epsilon^p}\calO(n^{-p/2})\label{itm04},\\
&\max_{\bst\in\ppp{0,1}^n}\pr\ppp{\abs{\frac{1}{n}\log\det\p{\beta\sigma^2\bHtt^T\bHtt+\bItt}-m_{\bst}\bar{I}\p{m_{\bst}}}>\epsilon}\leq\frac{1}{\epsilon^p}\calO(n^{-p/2}),\label{itm02}
\end{align}
where
\begin{align}
f_{n}\triangleq \beta \frac{\sigma^4b^2\p{m_{\bst}}m_{\bst}^2}{g^2\p{m_{\bst}}}\frac{\norm{\bY}^2}{n}+\frac{\sigma^2b\p{m_{\bst}}}{g^2\p{m_{\bst}}}\frac{\norm{\bHtt^T\bY}^2}{n}\label{101eq},
\end{align}
and
\begin{align}
q_{n}\triangleq& \frac{\tilde\alpha\p{m_{\bst},m_{\brt},m_{\bst,\brt}}}{g\p{m_{\bst}}g\p{m_{\brt}}}\frac{\bY^T\bHtt\bQsr\bHtr^T\bY}{n}\nonumber\\
&-\frac{\tilde\alpha\p{m_{\bst},m_{\brt},m_{\bst,\brt}}}{g\p{m_{\bst}}g\p{m_{\brt}}}\beta\sigma^2m_{\bst,\brt}\p{\frac{b\p{m_{\brt}}}{g\p{m_{\brt}}}\frac{\norm{\bHtr^T\bY}^2}{n}+\frac{b\p{m_{\bst}}}{g\p{m_{\bst}}}\frac{\norm{\bHtt^T\bY}^2}{n}}\nonumber\\
&+\frac{\tilde\alpha\p{m_{\bst},m_{\brt},m_{\bst,\brt}}}{g\p{m_{\bst}}g\p{m_r}}\beta\sigma^2m_{\bst,\brt}\p{\frac{b\p{m_{\brt}}}{g\p{m_{\brt}}}m_{\brt}+\frac{b\p{m_{\bst}}}{g\p{m_{\bst}}}m_{\bst}}\frac{\norm{\bY}^2}{n}
\label{g_Nterm}
\end{align}
where
\begin{align}
\tilde\alpha\p{m_{\bst},m_{\brt},m_{\bst,\brt}}\triangleq\frac{1}{\p{\eta_0(m_{\brt})+\sigma^{-2}}\p{\psi_0\p{m_{\bst},m_{\brt},m_{\bst,\brt}}+\sigma^{-2}}},\label{tildealphadef}
\end{align}
in which
\begin{align}
\eta_0(m_{\brt})\triangleq \frac{\beta R}{1+\beta\sigma^2m_{\brt}b\p{m_{\brt}}},
\end{align}
and
\begin{align}
\psi_0\p{m_{\bst},m_{\brt},m_{\bst,\brt}} \triangleq\eta_0(m_{\brt})- \frac{\beta^2\sigma^2Rb\p{m_{\bst}}m_{\bst,\brt}}{\p{1+\beta\sigma^2m_{\bst}b\p{m_{\bst}}}\p{1+\beta\sigma^2m_{\brt}b\p{m_{\brt}}}}.
\end{align} 
\end{prop}
The next step is to apply Proposition \ref{prop:2} to the obtained MMSE. Let $\epsilon>0$, and define
\begin{align}
&\calT^{\bst,\brt}_\epsilon \triangleq \left\{\by\in\mathbb{R}^{k\times 1},\bH\in\mathbb{R}^{k\times n}:\ \abs{\frac{1}{n}\tr\calHs-\sigma^2m_{\bst}b\p{m_{\bst}}}<\epsilon,\right.\abs{\frac{1}{n}\by^T\bHtt\calHs\bHtt^T\by-f_{n}}<\epsilon,\nonumber\\
&\ \ \ \ \ \ \ \ \ \left.\abs{\frac{1}{n}\by^T\bHtt\calHs\bQsr\calHr\bHtr^T\by-q_{n}}<\epsilon,\;\abs{\frac{1}{n}\log\det\p{\beta\sigma^2\bHtt^T\bHtt+\bItt} - m_{\bst}\bar{I}\p{m_{\bst}}}<\epsilon\right\}.\label{typicalSR}
\end{align}
By Proposition \ref{prop:2}, this set has probability tending to one as $k,n\to\infty$, and so, we shall call it a ``typical" set containing $\ppp{\by,\bH}$-pairs of typical observation vectors and sensing matrices. This is summarized in the forthcoming corollary.
\begin{corollary}\label{cor:2}
\begin{align}
&\lim_{n\to\infty}\max_{\bst,\brt\in\ppp{0,1}^n}\pr\ppp{\p{\calT^{\bst,\brt}_\epsilon}^c} =0.
\end{align}
\end{corollary}
\begin{proof}
The result follows directly by using the union bound and Proposition \ref{prop:2}.
\end{proof}

The following main observation is that for the asymptotic evaluation of \eqref{MMSEequationLL}, only typical events (i.e., those defined in \eqref{typicalSR}) are the dominant. Specifically, the MMSE can be decomposed as follows
\begin{align}
\frac{\text{mmse}\p{\bX\vert\bY,\bH}}{n} &= \sigma^2\frac{1}{n}\sum_{i=1}^n\bE\ppp{S_i}-\bE\ppp{\bE_{\mu_{s\times r}}\pp{J\p{\bY,\bHtt,\bHtr}}}\\
& = \sigma^2\frac{1}{n}\sum_{i=1}^n\bE\ppp{S_i}-\bE\ppp{J\p{\bY,\bH_{\tilde\bSti},\bH_{\tilde\bRti}}}\label{tildeexpect}\\
&= \sigma^2\frac{1}{n}\sum_{i=1}^n\bE\ppp{S_i}-\bE\ppp{J\p{\bY,\bH_{\tilde\bSti},\bH_{\tilde\bRti}}\Ind_{\calT_\epsilon^{\tilde s,\tilde r}}}\nonumber\\
&\ \ \ \ \ \ \ \ \ \ \ \ \ \ \ \ \ \ \ \ \ \ -\bE\ppp{J\p{\bY,\bH_{\tilde\bSti},\bH_{\tilde\bRti}}\Ind_{\p{\calT_\epsilon^{\tilde s,\tilde r}}^c}},
%
\label{lastee}
\end{align}
where $\Ind_{\calT_\epsilon^{s,r}}$ is as a shorthand notation for $\Ind_{\ppp{(\bYt,\bHt)\in\calT_\epsilon^{s,r}}}$, and the expectations in \eqref{lastee} are taken w.r.t. the joint distribution of $(\tilde\bS,\tilde\bR,\bY,\bH)$, where $P_{\tilde\bSt,\tilde\bRt\vert\bYt,\bHt} = \mu(\tilde\bS\vert\bY,\bH)\times\mu(\tilde\bR\vert\bY,\bH)$. We claim that the last term at the r.h.s. of \eqref{lastee} is asymptotically negligible. First, by Cauchy-Schwartz inequality, we have
\begin{align}
\abs{\bE\ppp{J\p{\bY,\bH_{\tilde\bSti},\bH_{\tilde\bRti}}\Ind_{\p{\calT_\epsilon^{\tilde s,\tilde r}}^c}}}^2\leq \bE\abs{J\p{\bY,\bH_{\tilde\bSti},\bH_{\tilde\bRti}}}^2\Pr\ppp{\p{\calT_\epsilon^{\bSti,\bRti}}^c}.\label{probchas}
\end{align}
Due to Corollary \ref{cor:2}, we have
\begin{align}
\lim_{n\to\infty}\pr\ppp{\p{\calT^{\bSti,\bRti}_\epsilon}^c}\leq\lim_{n\to\infty}\max_{\bst,\brt\in\ppp{0,1}^n}\pr\ppp{\p{\calT^{\bst,\brt}_\epsilon}^c}=0.\label{tul2}
\end{align}
Therefore, according to \eqref{probchas}, in order to show that the last term at the r.h.s. of \eqref{lastee} is asymptotically negligible, we will show that the expectation at the r.h.s. of \eqref{probchas} is finite and independent of $n$, that is,
\begin{align}
\bE\abs{J\p{\bY,\bH_{\tilde\bSti},\bH_{\tilde\bRti}}}^2\leq M<\infty.
\end{align}
To this end, recall that (see, \eqref{Qmatdefdef}, and the discussion that follows) the matrix $\bQsr$ can be represented as $\bQsr = \bQs\bQr^T$. We get
\begin{align}
\abs{J\p{\by,\bHtt,\bHtr}}^2 &= \frac{\beta^4}{n^2}\abs{\by^T\bHtt\calHs\bQsr\calHr\bHtr^T\by}^2\\
& = \frac{\beta^4}{n^2}\abs{\by^T\bHtt\calHs\bQs\bQr^T\calHr\bHtr^T\by}^2\\
&\leq \frac{\beta^4}{n^2}\norm{\by^T\bHtt\calHs\bQs}_2^2\norm{\by^T\bHtr\calHr\bQr}_2^2\\
&\leq \frac{\beta^4}{n^2}\norm{\by^T\bHtt\calHs}_2^2\norm{\by^T\bHtr\calHr}_2^2\\
&\leq \frac{\beta^4}{2n^2}\pp{\norm{\by^T\bHtt\calHs}_2^4+\norm{\by^T\bHtr\calHr}_2^4}
\end{align}
where the first inequality is due to Cauchy-Schwartz inequality, the second inequality is due to the fact that $\bQs\bQs^T\preceq\bItt$ for all $\bs$, and the last inequality follows from the simple inequality $2ab\leq a^2+b^2$, for $a,b\in\mathbb{R}$. Whence,
\begin{align}
\bE\abs{J\p{\bY,\bH_{\tilde\bSti},\bH_{\tilde\bRti}}}^2&\leq \frac{\beta^4}{2n^2}\bE\ppp{\norm{\bY^T\bH_{\tilde\bSti}\calHst}_2^4+\norm{\bY^T\bH_{\tilde\bRti}\calHrt}_2^4}\\
&= \frac{\beta^4}{n^2}\bE\norm{\bY^T\bH_{\tilde\bSti}\calHst}_2^4\\
& \leq \frac{\sigma^8\beta^2}{n^2}\bE\norm{\bY^T\bH_{\tilde\bSti}}_2^4\\
&\leq \frac{8\sigma^8\beta^2}{n^2}\bE\ppp{\norm{\bX^T\bH^T\bH_{\tilde\bSti}}_2^4+\norm{\bN^T\bH_{\tilde\bSti}}_2^4}
\label{tul1}
\end{align}
where in the second inequality we have used the fact that $\calHs\preceq\sigma^2\bItt$, and the last inequality follows from the fact that $\bY = \bH\bX+\bN$, and the inequality $\norm{\ba+\bb}_2^4\leq 8\cdot(\norm{\ba}_2^4+\norm{\bb}_2^4)$, for any $\ba,\bb\in\mathbb{R}^n$. We claim that the last two terms at the r.h.s. of \eqref{tul1} are finite. Indeed, for example,
\begin{align}
\frac{8\sigma^8\beta^2}{n^2}\bE\norm{\bN^T\bH_{\tilde\bSti}}_2^4 &=\frac{8\sigma^8\beta^2}{n^2}\bE\p{\sum_{i\in\tilde\bSti}\p{\bN^T\bh_i}^2}^2\\
&\leq \frac{8\sigma^8\beta^2}{n^2}\bE\p{\sum_{i=1}^n\p{\bN^T\bh_i}^2}^2\\
&\leq \frac{8\sigma^8\beta^2}{n}\sum_{i=1}^n\bE\ppp{\bN^T\bh_i}^4\\
& = 8\sigma^8\beta^2\bE\ppp{\bN^T\bh_1}^4,\label{boundedvarnoise}
\end{align}
where $\bh_i$ is the $i$th column of $\bH$, the first inequality follows from the fact that $\sum_{i=1}^n\tilde S_i\leq n$ w.p. 1, and the second inequality is due to the fact that $(\sum_{i=1}^na_i)^r\leq n^{r-1}\sum_{i=1}^na_i^r$, for any $r\in\mathbb{N}$. Given $\bh_1$, the random variable $\bN^T\bh_1$ is Gaussian, with zero mean, and variance $\beta^{-1}\norm{\bh_1}_2^2$. Thus, $\bE\pp{\p{\bN^T\bh_1}^4\vert\bh_1} = 3\beta^{-2}\norm{\bh_1}_2^4$. Therefore,
\begin{align}
8\sigma^8\beta^2\bE\ppp{\bN^T\bh_1}^4 &= 24\sigma^8\bE\norm{\bh_1}_2^4\\
&= 24\sigma^8\bE\p{\sum_{i=1}^kH_{1,i}^2}^2\\
&\leq 24k\sigma^8\bE\p{\sum_{i=1}^kH_{1,i}^4}\\
& = 24k^2\sigma^8\bE(H_{1,1}^4) = 24\sigma^8\bE(\sqrt{k}H_{1,1})^4<\infty
\end{align}
where the first inequality follows from $(\sum_{i=1}^na_i)^r\leq n^{r-1}\sum_{i=1}^na_i^r$, and the last inequality is due to the assumption $\bE(\sqrt{n}H_{1,1})^4<\infty$. In the same way, it can be shown that
\begin{align}
\frac{8\sigma^8\beta^2}{n^2}\bE\norm{\bX^T\bH^T\bH_{\tilde\bSti}}^4<\infty,
\end{align}
and thus the term at the r.h.s. of \eqref{tul1} is finite, that is,
\begin{align}
\bE\abs{J\p{\bY,\bH_{\tilde\bSti},\bH_{\tilde\bRti}}}^2&<\infty.\label{finiteexpectte}
\end{align}
To conclude, using \eqref{probchas}, \eqref{tul2}, and \eqref{finiteexpectte}, we get
\begin{align}
\bE\ppp{J\p{\bY,\bH_{\tilde\bSti},\bH_{\tilde\bRti}}\Ind_{\p{\calT_\epsilon^{\tilde s,\tilde r}}^c}}\to0,\label{negdomin}
\end{align} 
as $k,n\to\infty$.
Accordingly, for the asymptotic calculation of the MMSE, only the first two terms at the r.h.s. of \eqref{lastee} prevail, and the calculation of the asymptotic MMSE boils down to the calculation of:
\begin{align}
\limsup_{n\to\infty}\frac{\text{mmse}\p{\bX\vert\bY,\bH}}{n} &= \limsup_{n\to\infty}\left[\sigma^2\frac{1}{n}\sum_{i=1}^n\bE\ppp{S_i}-\bE\ppp{J\p{\bY,\bHttt,\bHtrt}\Ind_{\calT_\epsilon^{\tilde s,\tilde r}}}\right]\\
&=\limsup_{n\to\infty}\left[\sigma^2\frac{1}{n}\sum_{i=1}^n\bE\ppp{S_i}-\bE\ppp{\bE_{\mu_{s\times r}}\pp{J\p{\bY,\bHtt,\bHtr}\Ind_{\calT_\epsilon^{s,r}}}}\right]\\
& = \sigma^2m_a - \limsup_{n\to\infty}\bE\ppp{\bE_{\mu_{s\times r}}\pp{J\p{\bY,\bHtt,\bHtr}\Ind_{\calT_\epsilon^{s,r}}}}
\label{lastee2}
\end{align}
where the last equality is due to \eqref{expecmagaprior}. Using Proposition \ref{prop:2}, and large deviations theory, the asymptotic MMSE, given in Theorem 1, is derived in Appendix \ref{app:3}. 

\section{Conclusion}
\label{sec:Conclusion}
In this paper, we considered the calculation of the asymptotic MMSE under sparse representation modeling. As opposed to the popular worst-case approach, we adopt a statistical framework for compressed sensing by modeling the input signal as a random process rather than as an individual sequence. In contrast to previous derivations, which were based on the (non-rigorous) replica method, the analysis carried out in this paper is rigorous. The derivation builds upon a simple relation between the MMSE and a
certain function, which can be viewed as a partition function, and hence can be analyzed using methods of statistical mechanics. It was shown that the MMSE can be represented in a special form that contains functions of the Stieltjes and Shannon transforms. This observation allowed us to invoke some powerful results from RMT concerning the asymptotic behavior of these transforms. Although our asymptotic MMSE formula seems to be different from the one that is obtained by the replica method, numerical calculations suggest that they are actually the same. This supports the results of the replica method. 

Finally, we believe that the tools developed in this paper, for handling the MMSE, can be used in order to obtain the MMSE estimator itself. An example for such calculation can be found in a recent paper \cite{Wasim}, where the MMSE (or, more generally, the mismatched MSE), along with the estimator itself, were derived for a model of a codeword (from a randomly selected code), corrupted by a Gaussian vector channel. Also, we believe that our results, can be generalized to the case of mismatch, namely, mismatched compressed sensing. An example for an interesting mismatch model could be a channel mismatch, namely, the receiver has a wrong assumption on the channel $\bH$, which can be modeled as $\hat{\bH} = \tau\bH+\sqrt{1-\tau^2}\bQ$, where $\bQ$ is some random matrix, independent of $\bH$, and $0\leq\tau\leq1$ quantifies the proximity between $\hat{\bH}$ and $\bH$. Another mismatch configuration could be noise-variance mismatch, namely, the receiver has wrong knowledge about the noise variance. It is then interesting to investigate the resulted MSE in these cases, and in particular, to check whether there are new phase transitions caused by the mismatch.


\appendices
\numberwithin{equation}{section}

\section{}\label{app:0}
\begin{proof}[Proof of Lemma \ref{lem:MMSe_div}]
Under our model, the partition function in \eqref{PartFunc} is given by
\begin{align}
&Z\p{\by,\bH;\blam} =  \sum_{\bs\in\ppp{0,1}^n}P_{\bSt}(\bs)\int_{\mathbb{R}^n}\mathrm{d}\bx\frac{\exp\p{-\beta\norm{\by-\bH\bx}^2/2+\blam^T\bx}}{\p{2\pi/\beta}^{k/2}}\nonumber\\
& \ \ \  \ \ \ \ \ \ \ \ \ \ \ \ \ \ \ \ \ \ \ \ \ \ \ \ \ \ \ \ \ \ \ \ \ \ \ \ \times\prod_{i:\;s_i=0}\delta\p{x_i}\prod_{i:\;s_i=1}\frac{1}{\sqrt{2\pi\sigma^2}}e^{-\frac{1}{2\sigma^2}x_i^2}.
\end{align}
First, note that
\begin{align}
\norm{\by-\bH\bx}^2\prod_{s_i=0}\delta\p{x_i} &= \pp{\norm{\by}^2-2\sum_{i\in\calS}\bh_i^T\by x_i+\sum_{i,j\in\calS}x_ix_j\bh_i^T\bh_j}\prod_{s_i=0}\delta\p{x_i}\\
&  =\pp{\norm{\by}^2-2\bx_{\bst}^T\bH_{\bst}^T\by+\bx_{\bst}^T\bH^T_{\bst}\bH_{\bst}\bx_{\bst}}\prod_{s_i=0}\delta\p{x_i}
\end{align}
where $\bh_i$ denotes the $i$th column of $\bH$, and similarly,
\begin{align}
\blam^T\bx\prod_{s_i=0}\delta\p{x_i} &= \p{\sum_{i\in\calS}x_i\lambda_i}\prod_{s_i=0}\delta\p{x_i}\\
& = \blam_{\bst}^T\bx_{\bst}\prod_{s_i=0}\delta\p{x_i}.
\end{align}
Using the fact that $\delta\p{\cdot}$ is a measure on $\mathbb{R}$, one may conclude that 
\begin{align}
Z\p{\by,\bH;\blam} &=  \sum_{\bs\in\ppp{0,1}^n}P_{\bSt}(\bs)\frac{1}{\p{2\pi/\beta}^{k/2}}\frac{1}{\p{\sqrt{2\pi\sigma^2}}^{\abs{\calS}}}\exp\p{-\frac{\beta}{2}\norm{\by}^2}\nonumber\\
&\ \ \ \ \ \ \ \ \times\int_{\mathbb{R}^{\abs{\calS}}}\exp\p{-\bxtt^T\p{\frac{\beta}{2}\bH_{\bst}^T\bH_{\bst}+\frac{1}{2\sigma^2}\bI_{\bst}}\bx_{\bst}+\bx_{\bst}^T\p{\blam_{\bst}+\beta\bH_{\bst}^T\by}}
\mathrm{d}\bx_{\bst}\\
&=  \sum_{\bs\in\ppp{0,1}^n}\frac{P_{\bSt}(\bs)\exp\p{-\frac{\beta}{2}\norm{\by}^2}}{\p{2\pi/\beta}^{k/2}\p{\sqrt{2\pi\sigma^2}}^{\abs{\calS}}\det^{1/2}\pp{\frac{1}{2\pi}\p{\beta\bHtt^T\bHtt+\frac{1}{\sigma^2}\bI_{\bst}}}}\nonumber\\
&\ \ \ \ \ \ \ \ \ \times\exp\ppp{\frac{1}{2}\p{\beta\bHtt^T\by+\blamtt}^T\p{\beta\bHtt^T\bHtt+\frac{1}{\sigma^2}\bI_{\bst}}^{-1}\p{\beta\bHtt^T\by+\blamtt}}\\
&= C\cdot \sum_{\bs\in\ppp{0,1}^n}P_{\bSt}(\bs)\frac{\exp\ppp{\frac{1}{2}\p{\beta\bHtt^T\by+\blamtt}^T\calHs\p{\beta\bHtt^T\by+\blamtt}}}{\sqrt{\det\p{\beta\sigma^2\bHtt^T\bHtt+\bI_{\bst}}}}
\label{partitionProof}
\end{align}
where $C$ is independent of $\blam$, but it depends on $\beta$ and $\by$. We are now in a position to find a preliminary expression of the MMSE, using Lemma \ref{lemma:partMM}. Let 
\begin{align}
\xi\p{\by,\bHtt,\blamtt} \triangleq &\exp\left\{\frac{1}{2}\p{\beta\bHtt^T\by+\blamtt}^T\calHs\p{\beta\bHtt^T\by+\blamtt}-\frac{1}{2}\log\det\p{\beta\sigma^2\bHtt^T\bHtt+\bI_{\bst}}\right\},
\label{JJJ1ap}
\end{align}
and therefore
\begin{align}
Z\p{\by,\bH;\blam} = C\cdot \sum_{\bs\in\ppp{0,1}^n}P_{\bSt}(\bs)\xi\p{\by,\bHtt,\blamtt}.
\label{partitionProof2}
\end{align}
Now,
\begin{align}
\frac{\partial}{\partial\lambda_i}\ppp{\frac{1}{2}\p{\beta\bHtt^T\by+\blamtt}^T\calHs\p{\beta\bHtt^T\by+\blamtt}} = \bet^T_i\calHs\p{\beta\bHtt^T\by+\blamtt}\Ind_{i\in\calS},
\end{align}
and thus
\begin{align}
\frac{\partial}{\partial\lambda_i}\xi\p{\by,\bHtt,\blamtt} = \bet^T_i\calHs\p{\beta\bHtt^T\by+\blamtt}\Ind_{i\in\calS}\xi\p{\by,\bHtt,\blamtt}.
\label{partitionProof3}
\end{align}
Recall that for a positive, twice differential function $f$,
\begin{align}
&\frac{\mathrm{d}}{\mathrm{d} x}\log f\p{x} = \frac{1}{f\p{x}}\p{\frac{\mathrm{d}}{\mathrm{d}x}f\p{x}}.\label{divsim1}
\end{align}
Thus, using \eqref{partitionProof2} and \eqref{partitionProof3}, we have that (for $1\leq i\leq n$), 
\begin{align}
\frac{\partial}{\partial\lambda_i}\log Z\p{\by,\bH;\blam} = \frac{\sum_{\bs\in\ppp{0,1}^n}P_{\bSt}(\bs)\bet^T_i\calHs\p{\beta\bHtt^T\by+\blamtt}\Ind_{i\in\calS}\xi\p{\by,\bHtt,\blamtt}}{Z\p{\by,\bH;\blam}}.\label{MMSEmidmis0}
\end{align}
Let $\xi\p{\by,\bHtt}\triangleq\xi\p{\by,\bHtt,\bze}$. We have:
\begin{align}
\left.\frac{\partial}{\partial\lambda_i}\log Z\p{\by,\bH;\blam}\right|_{\blamt=0} &= \frac{\sum_{\bs\in\ppp{0,1}^n}P_{\bSt}(\bs)\bet^T_i\calHs\beta\bHtt^T\by\Ind_{i\in\calS}\xi\p{\by,\bHtt}}{\sum_{\bs\in\ppp{0,1}^n}P_{\bSt}(\bs)\xi\p{\by,\bHtt}}.
\label{MMSEmidmis}
\end{align}
By Lemma \ref{lemma:partMM}, the MMSE is given by
\begin{align}
\frac{\text{mmse}\p{\bX\vert\bY,\bH}}{n} &= \frac{1}{n}\sum_{i=1}^n\pp{\bE\ppp{X_i^2}-\bE\ppp{\left.\pp{\frac{\partial}{\partial\lambda_i}\log Z\p{\by,\bH;\blam}}^2\right|_{\blamt=0}}}\\
&=\frac{\sigma^2}{n}\sum_{i=1}^n\bE\ppp{S_i}-\frac{1}{n}\sum_{i=1}^n\bE\ppp{\left.\pp{\frac{\partial}{\partial\lambda_i}\log Z\p{\by,\bH;\blam}}^2\right|_{\blamt=0}}.
\end{align}
Recall that for an $n\times n$ matrix $\bA$, the trace operator can be represented as $\tr\p{\bA} = \sum_{i=1}^n\hat{\bet}^T_i\bA\hat{\bet}_i$ where $\hat{\bet}_i$ is the $i$th column of the $n\times n$ identity matrix. Thus, we have that
\begin{align}
&\pp{\sum_{\bs\in\ppp{0,1}^n}P_{\bSt}(\bs)\bet^T_i\calHs\beta\bHtt^T\by\Ind_{i\in\calS}\xi\p{\by,\bHtt}}^2 &\nonumber\\
&= \sum_{\bs\in\ppp{0,1}^n}\sum_{\br\in\ppp{0,1}^n}P_{\bSt}(\bs)P_{\bSt}(\br)\bet^T_i\calHs\beta^2\bHtt^T\by\by^T\bHtr\calHr\tilde{\bet}_i\Ind_{i\in\calS\cap\calR}\xi\p{\by,\bHtt}\xi\p{\by,\bHtr}.\label{idepend}
\end{align}
Note that $\bs$ and $\br$ may not have the same support, and in particular, they may not have even the same support size. This explains the appearance of $\tilde{\bet}_i$ which is of size $\abs{\calR}\times 1$. Next, summing the terms that depend on $i$ in \eqref{idepend}, over $1\leq i\leq n$, we get
\begin{align}
&\sum_{i=1}^n\bet^T_i\calHs\beta^2\bHtt^T\by\by^T\bHtr\calHr\tilde{\bet}_i\Ind_{i\in\calS\cap\calR} = \sum_{i=1}^n\tr\p{\bet^T_i\calHs\beta^2\bHtt^T\by\by^T\bHtr\calHr\tilde{\bet}_i\Ind_{i\in\calS\cap\calR}}\\
& = \tr\p{\calHs\beta^2\bHtt^T\by\by^T\bHtr\calHr\sum_{i=1}^n\tilde{\bet}_i\bet^T_i\Ind_{i\in\calS\cap\calR}}\\
& = \beta^2\by^T\bHtt\calHs\bQsr\calHr\bHtr^T\by,
\end{align}
where we have used the fact that
\begin{align}
\bQsr^T = \sum_{i=1}^n\tilde{\bet}_i\bet^T_i\Ind_{i\in\calS\cap\calR}.
\end{align}
Let $J\p{\by,\bHtt,\bHtr}$ be defined as in \eqref{JJJ3}. Then, we obtain
\begin{align}
&\frac{1}{n}\sum_{i=1}^n\bE\ppp{\left.\pp{\frac{\partial}{\partial\lambda_i}\log Z\p{\by,\bH;\blam}}^2\right|_{\blamt=0}} = \frac{1}{n}\sum_{i=1}^n\frac{\pp{\sum_{\bs\in\ppp{0,1}^n}P_{\bSt}(\bs)\bet^T_i\calHs\beta\bHtt^T\by\Ind_{i\in\calS}\xi\p{\by,\bHtt}}^2}{\pp{\sum_{\bs\in\ppp{0,1}^n}P_{\bSt}(\bs)\xi\p{\by,\bHtt}}^2}\nonumber\\
& \ \ \ \ \ \ \ \ \ \ \ \ \ = \frac{\sum_{\bst,\brt\in\ppp{0,1}^n}P_{\bSt}(\bs)P_{\bSt}(\br)J\p{\by,\bHtt,\bHtr}\xi\p{\by,\bHtt}\xi\p{\by,\bHtr}}{\p{\sum_{\bs\in\ppp{0,1}^n}P_{\bSt}(\bs)\xi\p{\by,\bHtt}}^2}.
\end{align}
Thus, the MMSE can be represented as
\begin{align}
\frac{\text{mmse}\p{\bX\vert\bY,\bH}}{n} = \frac{\sigma^2}{n}\sum_{i=1}^n\bE\ppp{S_i}-\bE\ppp{\bE_{\mu_{s\times r}}\pp{J\p{\bY,\bHtt,\bHtr}}}
\label{MMSEequationLLap}
\end{align}
where $\bE_{\mu_{s\times r}}$ denotes the expectation taken w.r.t. the discrete probability distribution $\mu\p{\bs\vert\bY,\bH}\times\mu\p{\br\vert\bY,\bH}$, defined in \eqref{disdiser}.
\end{proof}


\section{Proof of Proposition \ref{prop:2}}\label{app:1}
\subsection{A note on Stieltjes transform}\label{app2a}
Before delving into the proofs of \eqref{itm01}-\eqref{itm02}, we make a short comment on the Stieltjes transform of the matrix
\begin{align}\calHs=\p{\beta\bHtt^T\bHtt+\frac{1}{\sigma^2}\bItt}^{-1} = \sigma^2\p{\beta\sigma^2\bHtt^T\bHtt+\bItt}^{-1},\label{centralQuant}\end{align}
which appears in Proposition \ref{prop:2}. Lemma \ref{lem:mat2} provides the asymptotic behavior of the Stieltjes transform of \eqref{centralQuant}. In order to use this lemma, one needs to calculate $\bar S\p{z}$ given in \eqref{stelas}. For our problem, we substitute: $\bX = \bHtt^T$, $\bG = \beta\sigma^2R\bItt$, $\bThet_m = \bI_m$, $c=\abs{\calS}/k = m_{\bst}/R$, which yields $\bB = \bX\bG\bX^T = \beta\sigma^2\bHtt^T\bHtt$. Then, using \eqref{stelas} for $z = -1$, we find that $\bar S\p{-1}$ is given by the solution of the equation
$$
\bar S\p{-1} = \p{\frac{1}{\abs{\calS}}\sum_{l=1}^{\abs{\calS}}\frac{g_l}{1+cg_l\bar S\p{-1}}+1}^{-1}.
$$
Substituting $g_l = \beta\sigma^2R$ (independently of $l$) and $c = m_{\bst}/R$,  we obtain
\begin{align}
\bar S\p{-1} &= \p{\frac{\beta\sigma^2R}{1+\beta\sigma^2R\frac{m_{\bst}}{R}\bar S\p{-1}}+1}^{-1}\\
& = \frac{1+\sigma^2\beta m_{\bst}\bar S\p{-1}}{1+\beta\sigma^2R+\beta\sigma^2m_{\bst}\bar S\p{-1}},
\label{beqqq1}
\end{align}
whose solution is
\begin{align}
\bar S\p{-1} = \frac{-\pp{1+\beta\sigma^2\p{R-m_{\bst}}}+\sqrt{\pp{1+\beta\sigma^2\p{R-m_{\bst}}}^2+4\beta\sigma^2m_{\bst}}}{2\beta\sigma^2m_{\bst}}.
\label{bTerm1}
\end{align}
Note that $\bar S\p{-1}$ is recognized as $b\p{m_{\bst}}$ defined in \eqref{firstt}, which will be used from now on. It follows, by Lemma \ref{lem:mat2}, that
\begin{align}
\frac{1}{n}\tr\calHs-m_{\bst}\sigma^2b(m_{\bst})\to0,
\end{align}
a.s. as $n\to\infty$.


\subsection{Proof of \eqref{itm01} and \eqref{itm03}}
We start with \eqref{itm03}. Eq. \eqref{itm01} will follow from \eqref{itm03}, as will be shown in the sequel. Let $\bz_i$ denote the $i$th row of the matrix $\bHtt$, and hence
\begin{align}
\bHtt^T\bY = \sum_{i=1}^kY_i\bz_i.
\end{align}
Thus,
\begin{align}
\frac{1}{n}\bY^T\bHtt\calHs\bHtt^T\bY &= \frac{1}{n}\sum_{i=1}^kY_i^2\bz_i^T\calHs\bz_i+\frac{1}{n}\sum_{i\neq j}^kY_iY_j\bz_i^T\calHs\bz_j.
\label{twotermsasym}
\end{align}
We next analyze the two terms at the r.h.s. of \eqref{twotermsasym} separately. First, recall that for any triplet of random variables $(X,Y,Z)$, and $\epsilon>0$, the following holds
\begin{align}
\pr\ppp{\abs{X-Z}>\epsilon}\leq\pr\ppp{\abs{X-Y}>\frac{\epsilon}{2}}+\pr\ppp{\abs{Y-Z}>\frac{\epsilon}{2}}.\label{simpfact}
\end{align}
Define:
\begin{align}
&f_{n,1} \triangleq \frac{m_{\bst}\sigma^2b\p{m_{\bst}}}{1+\beta\sigma^2m_{\bst}b\p{m_{\bst}}}\frac{\norm{\bY}^2}{n},\\
&f_{n,2} \triangleq \frac{\sigma^2b\p{m_{\bst}}}{\p{1+\beta\sigma^2m_{\bst}b\p{m_{\bst}}}^2}\pp{\frac{\norm{\bHtt^T\bY}^2}{n}-m_{\bst}\frac{\norm{\bY}^2}{n}}.
\end{align}
It is easy to verify that $f_n = f_{n,1}+f_{n,2}$, where $f_n$ is defined in \eqref{101eq}. Thus, according to \eqref{simpfact}, to prove \eqref{itm03}, it is sufficient to show that
\begin{align}
\max_{\bst\in\ppp{0,1}^n}\pr\ppp{\abs{\frac{1}{n}\sum_{i=1}^kY_i^2\bz_i^T\calHs\bz_i-f_{n,1}}>\epsilon}\leq\frac{1}{\epsilon^p}\calO(n^{-p/2})\label{uniformconFirst}
\end{align}
and
\begin{align}
&\max_{\bst\in\ppp{0,1}^n}\pr\left\{\left|\frac{1}{n}\sum_{i\neq j}^kY_iY_j\bz_i^T\calHs\bz_j-f_{n,2}\right|>\epsilon\right\}\leq\frac{1}{\epsilon^p}\calO(n^{-p/2}).\label{uniformconFirst2}
\end{align}
Since the same arguments that will be used to prove \eqref{uniformconFirst} can be used to prove \eqref{uniformconFirst2}, for the sake of brevity, in the following, we will prove only \eqref{uniformconFirst}.

First, note that
\begin{align}
\bHtt^T\bHtt = \sum_{i=1}^k\bz_i\bz_i^T.
\end{align}
Using the matrix inversion lemma (Lemma \ref{aplem:1}), we get
\begin{align}
\frac{1}{n}\sum_{i=1}^kY_i^2\bz_i^T\calHs\bz_i = \frac{1}{n}\sum_{i=1}^kY_i^2\frac{\bz_i^T\calHsi\bz_i}{1+\beta\bz_i^T\calHsi\bz_i}.\label{matinvlem}
\end{align}
Clearly, $\calHsi$ is statistically independent of $\bz_i$. Consider the following lemmas.
\begin{lemma}\label{lem:subsub1}
For any $p,\epsilon>0$,
\begin{align}
&\max_{\bst\in\ppp{0,1}^n}\pr\ppp{\abs{\frac{1}{n}\sum_{i=1}^kY_i^2\p{\frac{\bz_i^T\calHsi\bz_i}{1+\beta\bz_i^T\calHsi\bz_i}-\frac{\frac{1}{n}\tr\calHs}{1+\beta\frac{1}{n}\tr\calHs}}}>\epsilon}\leq\frac{1}{\epsilon^p}\calO(n^{-p/2}).\label{showthisinProb}
\end{align} 
\end{lemma}
\begin{lemma}\label{lem:subsub2}
For any $p,\epsilon>0$,
\begin{align}
&\max_{\bst\in\ppp{0,1}^n}\pr\ppp{\abs{\frac{\frac{1}{n}\tr\calHs}{1+\beta\frac{1}{n}\tr\calHs}\frac{\norm{\bY}^2}{n}-f_{n,1}}>\epsilon}\leq\frac{1}{\epsilon^p}\calO(n^{-p/2}).\label{showthisinProb2}
\end{align}
\end{lemma}

Using Lemmas \ref{lem:subsub1} and \ref{lem:subsub2}, it is easy to see that \eqref{uniformconFirst} follows by using \eqref{simpfact} once again. We end this subsection by proving these lemmas.

\begin{proof}[\underline{Proof of Lemma \ref{lem:subsub1}}]
To obtain \eqref{showthisinProb}, by Lemma \ref{aux:assympLLN}, it is sufficient to prove that
\begin{align}
&\max_{1\leq i\leq k}\bE\ppp{Y_i^{2p}\abs{\frac{\bz_i^T\calHsi\bz_i}{1+\beta\bz_i^T\calHsi\bz_i}-\frac{\frac{1}{n}\tr\calHs}{1+\beta\frac{1}{n}\tr\calHs}}^p}\leq\calO(n^{-(1+\delta)}),
\label{showthis0m}
\end{align} 
for some $\delta>0$. Using the Cauchy-Schwartz inequality and the fact that $\bE Y_i^{4p}$ is bounded (similarly as in \eqref{boundedvarnoise}), it is enough to prove\footnote{The exponentiation in \eqref{showthis0} should be $\tilde p=2p$, due to Cauchy-Schwartz inequality. However, since $p$ is arbitrary, we use $p$ instead of $\tilde p$.}
\begin{align}
&\max_{1\leq i\leq k}\bE\ppp{\abs{\frac{\bz_i^T\calHsi\bz_i}{1+\beta\bz_i^T\calHsi\bz_i}-\frac{\frac{1}{n}\tr\calHs}{1+\beta\frac{1}{n}\tr\calHs}}^p}\leq\calO(n^{-(1+\delta)}).
\label{showthis0}
\end{align} 
Finally, instead of showing \eqref{showthis0}, we equivalently show\footnote{The equivalence readily follows by adding and subtracting a common term and then using the triangle inequality.}
\begin{align}
\max_{1\leq i\leq k}\bE\ppp{\abs{\frac{\bz_i^T\calHsi\bz_i-\frac{1}{n}\tr\calHs}{1+\beta\bz_i^T\calHsi\bz_i}}^p}\leq\calO(n^{-(1+\delta)}),
\label{showthis}
\end{align}
and
\begin{align}
\max_{1\leq i\leq k}\bE\ppp{\abs{\frac{\frac{1}{n}\tr\calHs}{1+\beta\bz_i^T\calHsi\bz_i}-\frac{\frac{1}{n}\tr\calHs}{1+\beta\frac{1}{n}\tr\calHs}}^p}\leq\calO(n^{-(1+\delta)}).
\label{showthis2}
\end{align}
Fig. \ref{fig:tree0} gives a schematic representation of the various consolidation steps used to prove \eqref{showthisinProb}.
\begin{figure}
\begin{pspicture}(0,-0.6976563)(10.862812,0.6976563)
\usefont{T1}{ptm}{m}{n}
\rput(0.70234376,0.50921875){\eqref{showthis}}
\usefont{T1}{ptm}{m}{n}
\rput(0.70234376,-0.47078124){\eqref{showthis2}}
\usefont{T1}{ptm}{m}{n}
\rput(3.0823438,0.10921875){\eqref{showthis0}}
\usefont{T1}{ptm}{m}{n}
\rput(7.8523436,0.04921875){\eqref{showthisinProb}}
\psline[linewidth=0.04cm,arrowsize=0.05291667cm 2.0,arrowlength=1.4,arrowinset=0.4]{->}(1.3809375,0.49921876)(2.3009374,0.05921875)
\psline[linewidth=0.04cm,arrowsize=0.05291667cm 2.0,arrowlength=1.4,arrowinset=0.4]{->}(1.3809375,-0.42078125)(2.3009374,0.0)
\usefont{T1}{ptm}{m}{n}
\rput(5.4423437,0.06921875){\eqref{showthis0m}}
\psline[linewidth=0.04cm,arrowsize=0.05291667cm 2.0,arrowlength=1.4,arrowinset=0.4]{->}(6.2809377,0.05921875)(7.1809373,0.03921875)
\psline[linewidth=0.04cm,arrowsize=0.05291667cm 2.0,arrowlength=1.4,arrowinset=0.4]{->}(3.8809376,0.09921875)(4.7809377,0.07921875)
\end{pspicture} 
\centering
\caption{Consolidation steps.}
\label{fig:tree0}
\end{figure}

\underline{Proof of \eqref{showthis}:}
First, note that
\begin{align}
\abs{e_i}&\triangleq\abs{\frac{\bz_i^T\calHsi\bz_i-\frac{1}{n}\tr\calHs}{1+\beta\bz_i^T\calHsi\bz_i}}\\
&\stackrel{(a)}{\leq}\abs{\bz_i^T\calHsi\bz_i-\frac{1}{n}\tr\calHs}\\
&\stackrel{(b)}{\leq}\abs{\bz_i^T\calHsi\bz_i-\frac{1}{n}\tr\calHsi}+\abs{\frac{1}{n}\tr\calHsi-\frac{1}{n}\tr\calHs}
\label{twoterms2}
\end{align}
where $(a)$ follows from the fact that $\bz_i^T\calHsi\bz_i$ is non-negative, and $(b)$ follows by adding and subtracting the term $n^{-1}\tr\calHsi$, and then using the triangle inequality. Applying Lemma \ref{aplem:tracedif} (Appendix \ref{sec:mathtoll}) to the second term at the r.h.s. of \eqref{twoterms2}, one readily obtains
\begin{align}
\abs{\frac{1}{n}\tr\calHsi-\frac{1}{n}\tr\calHs}\leq \frac{\sigma^2\norm{\bItt}}{n} = \frac{\sigma^2}{n},
\end{align}
uniformly in $\bs$. Applying Lemma \ref{aplem:trace} (Appendix \ref{sec:mathtoll}) to the first term at the r.h.s. of \eqref{twoterms2},
\begin{align}
&\bE\ppp{\abs{\bz_i^T\calHsi\bz_i-\frac{1}{n}\tr\calHsi}^p}\leq \frac{\tilde{C}}{n^{p/2}},
\label{upptilde}
\end{align}
where according to Lemma \ref{aplem:trace}, $\tilde{C}$ is given by
\begin{align}
\tilde{C} &= C_p\cdot\bE\p{\frac{1}{\abs{\calS}}\tr(\calHsi)^{-1}}^{p/2}\\
&\leq C_p\sigma^{2p},
\end{align}
and where in the last inequality, we have used the fact that $\pp{\bHtt^T\bHtt}_i$ is non-negative, and thus
\begin{align}
\calHsi\preceq \p{\frac{1}{\sigma^2}\bItt}^{-1} = \sigma^2\bItt\label{upperboundinversetr}
\end{align}
where for two matrices $\bA\in\mathbb{R}^{N\times N}$ and $\bB\in\mathbb{R}^{N\times N}$ the notation $\bA\preceq\bB$ means that the difference $\bB-\bA$ is non-negative definite. Thus, the bound in \eqref{upptilde} is uniform in $\bs$. Therefore,
\begin{align}
\bE\ppp{\abs{e_i}^p}\leq\calO(n^{-p/2}).
\end{align}
Consequently, taking any $p>2$, we obtain \eqref{showthis}. 

\underline{Proof of \eqref{showthis2}:} As before,
\begin{align}
\abs{\tilde{e}_i}&\triangleq\abs{\frac{\frac{1}{n}\tr\calHs}{1+\beta\bz_i^T\calHsi\bz_i}-\frac{\frac{1}{n}\tr\calHs}{1+\beta\frac{1}{n}\tr\calHs}}\nonumber\\
& = \frac{\frac{\beta}{n}\tr\calHs\abs{\bz_i^T\calHsi\bz_i-\frac{1}{n}\tr\calHs}}{\p{1+\beta\bz_i^T\calHsi\bz_i}\p{1+\beta\frac{1}{n}\tr\calHs}}\nonumber\\
&\leq \frac{\beta}{n}\tr\calHs\abs{\bz_i^T\calHsi\bz_i-\frac{1}{n}\tr\calHs}\nonumber\\
&\leq \beta\sigma^2\abs{\bz_i^T\calHsi\bz_i-\frac{1}{n}\tr\calHs}
\end{align}
where the last inequality follows from
\begin{align}
\frac{1}{n}\tr\calHs\leq \frac{1}{n}\tr\p{\frac{1}{\sigma^2}\bItt}^{-1}= \sigma^2.\label{B32}
\end{align}
Therefore, as before, by Lemma \ref{aplem:trace}, $\bE\abs{\tilde{e}_i}^p\leq\calO\p{n^{-p/2}}$ as required. 
\end{proof}


\begin{proof}[\underline{Proof of Lemma \ref{lem:subsub2}}]
Let
\begin{align}
\hat{e}\triangleq\frac{\frac{1}{n}\tr\calHs}{1+\beta\frac{1}{n}\tr\calHs}-\frac{m_{\bst}\sigma^2b\p{m_{\bst}}}{1+\beta\sigma^2m_{\bst}b\p{m_{\bst}}}.
\label{showthis0mi2}
\end{align}
By Lemma \ref{aux:assympLLN}, it is sufficient to show that $\bE\abs{\hat{e}}^p\leq\calO\p{{n^{-(1+\delta)}}}$. First, we see that 
\begin{align}
\abs{\hat{e}} &= \frac{\frac{1}{n}\tr\calHs}{1+\beta\frac{1}{n}\tr\calHs}-\frac{m_{\bst}\sigma^2b\p{m_{\bst}}}{1+\beta\sigma^2m_{\bst}b\p{m_{\bst}}}\\
&=\frac{\abs{\frac{1}{n}\tr\calHs-m_{\bst}\sigma^2b\p{m_{\bst}}}}{\p{1+\beta\sigma^2m_{\bst}b\p{m_{\bst}}}\p{1+\beta\frac{1}{n}\tr\calHs}}\\
&\leq \abs{\frac{1}{n}\tr\calHs-m_{\bst}\sigma^2b\p{m_{\bst}}},
\label{A35}
\end{align}
where the last inequality follows from the facts $1+\beta n^{-1}\tr\calHs\geq1$ and $1+\beta\sigma^2m_{\bst}b\p{m_{\bst}}\geq1$. Recall that $b\p{m_{\bst}}$ is the solution of equation \eqref{beqqq1}, i.e.,
\begin{align}
b\p{m_{\bst}} = \p{\frac{\beta\sigma^2R}{1+\beta\sigma^2m_{\bst}b\p{m_{\bst}}}+1}^{-1}.
\label{bdefmid}
\end{align}
Let us define
\begin{align}
w&\triangleq\frac{1}{n}\tr\calHs-\frac{\sigma^2}{n}\tr\pp{\p{\frac{R\beta\sigma^2}{1+\beta\frac{1}{n}\tr\calHs}+1}^{-1}\bItt}.
\label{wdefnif}
\end{align}
Then, note that
\begin{align}
&\calHs - \sigma^2\p{\frac{R\beta\sigma^2}{1+\beta\frac{1}{n}\tr\calHs}+1}^{-1}\bItt\nonumber\\
&\stackrel{(a)}{=}\calHs\pp{\frac{R\beta\sigma^2}{1+\beta\frac{1}{n}\tr\calHs}\bItt+\bItt-\beta\sigma^2\bHtt^T\bHtt-\bItt}\p{\frac{R\beta\sigma^2}{1+\beta\frac{1}{n}\calHs}+1}^{-1}\bItt\\
& = \calHs\pp{\frac{R\beta\sigma^2}{1+\beta\frac{1}{n}\tr\calHs}\bItt-\beta\sigma^2\bHtt^T\bHtt}\p{\frac{R\beta\sigma^2}{1+\beta\frac{1}{n}\tr\calHs}+1}^{-1}\bItt\\
& = -\vartheta\calHs\beta\sigma^2\bHtt^T\bHtt+\vartheta\calHs\frac{R\beta\sigma^2}{1+\beta\frac{1}{n}\tr\calHs}
\label{nexuse}
\end{align}
where  $(a)$ is due to Lemma \ref{aplem:3}, and in the last equalities we canceled out and rearranged the various terms, and defined
\begin{align}
\vartheta&\triangleq\p{\frac{R\beta\sigma^2}{1+\beta\frac{1}{n}\tr\calHs}+1}^{-1}\\
&\stackrel{\eqref{upperboundinversetr}}{\leq} \p{\frac{R\beta\sigma^2}{1+\beta\sigma^2\frac{1}{n}\tr\p{\bItt}^{-1}}+1}^{-1}\\
&\leq \p{\frac{R\beta\sigma^2}{1+\beta\sigma^2}+1}^{-1}\\ 
&= \frac{1+\beta\sigma^2}{1+\beta\sigma^2+R\beta\sigma^2}\triangleq \tilde\vartheta,\label{uppertildevartheta}
\end{align}
namely, $\vartheta$ can be upper bounded by $\tilde\vartheta$, which is independent on $\bs$. Therefore, using \eqref{nexuse},
\begin{align}
w&=\frac{1}{n}\tr\calHs-\frac{\sigma^2}{n}\tr\pp{\p{\frac{R\beta\sigma^2}{1+\beta\frac{1}{n}\tr\calHs}+1}^{-1}\bItt}\nonumber\\
& \stackrel{\eqref{nexuse}}{=} -\vartheta\beta\sigma^2\frac{1}{n}\tr\p{\calHs\bHtt^T\bHtt}+\vartheta\frac{R\sigma^2\beta\frac{1}{n}\tr\calHs}{1+\beta\frac{1}{n}\tr\calHs}\\
& = -\vartheta\frac{1}{n}\sum_{i=1}^k\beta\sigma^2\bz_i^T\calHs\bz_i+\vartheta\frac{1}{n}\sum_{i=1}^k\frac{\beta\sigma^2\frac{1}{n}\tr\calHs}{1+\beta\frac{1}{n}\tr\calHs}
\label{qqqqww}
\end{align}
where in the last equality we have used the fact that $R = k/n$, and that
\begin{align}
\tr\p{\calHs\bHtt^T\bHtt} &= \tr\p{\calHs\sum_{i=1}^k\bz_i\bz_i^T}= \sum_{i=1}^k\bz_i^T\calHs\bz_i.
\end{align}
Therefore,
\begin{align}
\abs{w}&=\abs{\frac{1}{n}\tr\calHs-\frac{\sigma^2}{n}\tr\pp{\p{\frac{R\beta\sigma^2}{1+\beta\frac{1}{n}\tr\calHs}+1}^{-1}\bItt}}\nonumber\\
& \stackrel{\eqref{qqqqww}}{=} \abs{\vartheta\frac{1}{n}\sum_{i=1}^k\pp{\beta\sigma^2\bz_i^T\calHs\bz_i-\frac{\beta\sigma^2\frac{1}{n}\tr\calHs}{1+\beta\frac{1}{n}\tr\calHs}}}\nonumber\\
& \stackrel{(a)}{\leq} \tilde\vartheta\frac{1}{n}\abs{\sum_{i=1}^k\pp{\frac{\beta\sigma^2\bz_i^T\calHsi\bz_i}{1+\beta\bz_i^T\calHsi\bz_i}-\frac{\beta\sigma^2\frac{1}{n}\tr\calHs}{1+\beta\frac{1}{n}\tr\calHs}}},\label{wdef}
\end{align}
where $(a)$ follows by the matrix inversion lemma (Lemma \ref{aplem:1}) and \eqref{uppertildevartheta}. Comparing the upper bound on $w$ in \eqref{wdef} with \eqref{showthisinProb}, we readily conclude that $\bE\abs{w}^p\leq \calO(n^{-p/2})$, and uniformly in $\bs$. Now, note that 
\begin{align}
&\abs{\frac{1}{n}\tr\calHs-m_{\bst}\sigma^2b\p{m_{\bst}}} \nonumber\\
&\stackrel{(a)}{=} \left|\frac{1}{n}\tr\calHs-m_{\bst}\sigma^2\p{\frac{R\beta\sigma^2}{1+\beta\frac{1}{n}\tr\calHs}+1}^{-1}+m_{\bst}\sigma^2\p{\frac{R\beta\sigma^2}{1+\beta\frac{1}{n}\tr\calHs}+1}^{-1}-m_{\bst}\sigma^2b\p{m_{\bst}}\right|\nonumber\\
& \stackrel{(b)}{\leq} \abs{\frac{1}{n}\tr\calHs-m_{\bst}\sigma^2\p{\frac{R\beta\sigma^2}{1+\beta\frac{1}{n}\tr\calHs}+1}^{-1}} + m_{\bst}\sigma^2\abs{\p{\frac{R\beta\sigma^2}{1+\beta\frac{1}{n}\tr\calHs}+1}^{-1}-b\p{m_{\bst}}}\\
&\stackrel{(c)}{=}\abs{w}+m_{\bst}\sigma^2\abs{\p{\frac{R\beta\sigma^2}{1+\beta\frac{1}{n}\tr\calHs}+1}^{-1}-b\p{m_{\bst}}}
\label{ineqproofit1}
\end{align}
where in $(a)$ we added and subtracted a common term, in $(b)$ we used the triangle inequality, and in $(c)$ we noticed that the first term is $w$ given in \eqref{wdefnif}. But using \eqref{bdefmid},
\begin{align}
\abs{\p{\frac{R\beta\sigma^2}{1+\beta\frac{1}{n}\tr\calHs}+1}^{-1}-b\p{m_{\bst}}}&= \abs{\p{\frac{R\beta\sigma^2}{1+\beta\frac{1}{n}\tr\calHs}+1}^{-1}-\p{\frac{\beta\sigma^2R}{1+\beta\sigma^2m_{\bst}b\p{m_{\bst}}}+1}^{-1}}\nonumber\\
& =  \abs{\frac{1+\beta\frac{1}{n}\tr\calHs}{1+\beta\sigma^2R+\beta\frac{1}{n}\tr\calHs}-\frac{1+\beta\sigma^2m_{\bst}b\p{m_{\bst}}}{1+\beta\sigma^2R+\beta\sigma^2m_{\bst}b\p{m_{\bst}}}}\\
&= \frac{\beta^2\sigma^2R\abs{\frac{1}{n}\calHs-m_{\bst}\sigma^2b\p{m_{\bst}}}}{\p{1+\beta\sigma^2R+\beta\frac{1}{n}\tr\calHs}\p{1+\beta\sigma^2R+\beta\sigma^2m_{\bst}b\p{m_{\bst}}}}\\
&\triangleq\kappa\abs{\frac{1}{n}\tr\calHs-m_{\bst}\sigma^2b\p{m_{\bst}}}
\label{ineqproofit2}
\end{align}
where
\begin{align}
\kappa\triangleq \frac{\beta^2\sigma^2R}{\p{1+\beta\sigma^2R+\beta\frac{1}{n}\tr\calHs}\p{1+\beta\sigma^2R+\beta\sigma^2m_{\bst}b\p{m_{\bst}}}}.
\end{align}
Thus, using \eqref{ineqproofit1} and \eqref{ineqproofit2},
\begin{align}
\abs{\frac{1}{n}\tr\calHs-m_{\bst}\sigma^2b\p{m_{\bst}}}\leq\abs{w}+\kappa m_{\bst}\sigma^2\abs{\frac{1}{n}\tr\calHs-m_{\bst}\sigma^2b\p{m_{\bst}}}.
\label{ineqconvv}
\end{align}
In the following, we show that $0<\kappa m_{\bst}\sigma^2<1$. First, for $m_{\bst}\leq R$ we see that
\begin{align}
\kappa m_{\bst}\sigma^2 &= \frac{\beta^2\sigma^4Rm_{\bst}}{\p{1+\beta\sigma^2R+\beta\frac{1}{n}\tr\calHs}\p{1+\beta\sigma^2R+\beta\sigma^2m_{\bst}b\p{m_{\bst}}}}\\
&\stackrel{(a)}{\leq}\frac{\beta^2\sigma^4R^2}{\p{1+\beta\sigma^2R}^2}\leq 1.
\end{align}
where $(a)$ follows from the facts that $\tr\p{\beta\sigma^2\bHtt^T\bHtt+\bItt}^{-1}\geq0$ and that $b\p{m_{\bst}}\geq0$. For $m_{\bst}>R$, we first note that $b\p{m_{\bst}}\geq \p{m_{\bst}-R}/m_{\bst}$, which follows from the facts that $b\p{m_{\bst}}$ is monotonically decreasing in $\beta$ (by definition), and that
\begin{align}
\lim_{\beta\to\infty}b\p{m_{\bst}} = \frac{m_{\bst}-R}{m_{\bst}}.
\end{align} 
Whence,
\begin{align}
\kappa m_{\bst}\sigma^2 &= \frac{\beta^2\sigma^4Rm_{\bst}}{\p{1+\beta\sigma^2R+\beta\frac{1}{n}\tr\calHs}\p{1+\beta\sigma^2R+\beta\sigma^2m_{\bst}b\p{m_{\bst}}}}\\
&\leq\frac{\beta^2\sigma^4Rm_{\bst}}{\p{1+\beta\sigma^2R}\p{1+\beta\sigma^2R+\beta\sigma^2m_{\bst}\frac{m_{\bst}-R}{m_{\bst}}}}\\
&=\frac{\beta^2\sigma^4Rm_{\bst}}{\p{1+\beta\sigma^2R}\p{1+\beta\sigma^2m_{\bst}}}\\
&\leq\frac{\beta^2\sigma^4R}{\p{1+\beta\sigma^2R}\p{1+\beta\sigma^2}}\leq 1.
\end{align}
Thus, using \eqref{ineqconvv}, we get
\begin{align}
\abs{\frac{1}{n}\tr\calHs-m_{\bst}\sigma^2b\p{m_{\bst}}}&\leq\frac{1}{1-m_{\bst}\kappa\sigma^2}\abs{w}\\
&\leq\tilde{\kappa}\abs{w}
\label{A36}
\end{align}
where $\tilde{\kappa}>0$ can be upper bounded by a term that depends solely on $\beta,\sigma^2$ and $R$. Accordingly, based on \eqref{A35}, the fact that $\bE\abs{w}^p\leq \calO(n^{-p/2})$, and \eqref{A36}, we can conclude that
\begin{align}
\bE\abs{\hat{e}}^p&\leq\tilde{\kappa}^p\bE\abs{w}^p\\
&\leq \calO(n^{-p/2}),
\end{align}
which proves \eqref{showthisinProb2}. 
Finally, note that \eqref{A36} and the fact that $\bE\abs{w}^p\leq \calO(n^{-p/2})$, proves also \eqref{itm01}.
\end{proof}
\begin{remark}\label{rem:str}
Note that it is easier to prove the a.s. convergence of the terms in \eqref{uniformconFirst} compared to the above uniform convergence. Indeed, recall \eqref{matinvlem}, and note that the matrix $\calHsi$ is statistically independent on $\bz_i$. Then,
\begin{align}
\frac{1}{n}\sum_{i=1}^k\frac{Y_i^2\bz_i^T\calHsi\bz_i}{1+\beta\bz_i^T\calHsi\bz_i}&\asymp\frac{1}{n}\sum_{i=1}^k\frac{Y_i^2\frac{1}{n}\tr\calHsi}{1+\beta\frac{1}{n}\tr\calHsi}\label{reds01}\\
&\asymp\frac{1}{n}\sum_{i=1}^k\frac{Y_i^2\frac{1}{n}\tr\calHs}{1+\beta\frac{1}{n}\tr\calHs}\label{secpassa}\\
&\asymp\frac{1}{n}\sum_{i=1}^k\frac{Y_i^2m_{\bst}\sigma^2b\p{m_{\bst}}}{1+\beta\sigma^2m_{\bst}b\p{m_{\bst}}}\label{thirdpassa}\\
&= \frac{m_{\bst}\sigma^2b\p{m_{\bst}}}{1+\beta\sigma^2m_{\bst}b\p{m_{\bst}}}\frac{\norm{\bY}^2}{n}
\label{reds1}
\end{align}
where in the first passage, we applied the trace lemma (Lemma \ref{aplem:4}) and Lemma \ref{aplem:5}, in the second passage we have used the rank-1 perturbation lemma (Lemma \ref{aplem:6}), and the third passage is due to Lemma \ref{lem:mat2} (see, Appendix \ref{app2a}). This proves the a.s. convergence of the first term at the r.h.s. of \eqref{twotermsasym}.
\end{remark}

\subsection{Proof of \eqref{itm04}}
%
Let $\bz_i$ and $\bzti_i$ denote the $i$th rows of the matrices $\bHtt$ and $\bHtr$, respectively. Then, using
\begin{align}
\bY^T\bHtt &= \sum_{i=1}^kY_i\bz_i^T,
\end{align}
and
\begin{align}
\bHtr^T\bY &= \sum_{i=1}^kY_i\bzti_i,
\end{align}
we have that
\begin{align}
&\frac{1}{n}\bY^T\bHtt\calHs\bQsr\calHr\bHtr^T\bY= \frac{1}{n}\sum_{i=1}^kY_i^2\bz_i^T\calHs\bQsr\calHr\bzti_i+ \frac{1}{n}\sum_{i\neq j}^kY_iY_j\bz_i^T\calHs\bQsr\calHr\bzti_j.
\label{righrf}
\end{align}
Define
\begin{align}
&q_{n,1} \triangleq \frac{\tilde\alpha m_{\bst,\brt}}{\p{1+\beta\sigma^2m_{\bst}b\p{m_{\bst}}}\p{1+\beta\sigma^2m_{\brt}b\p{m_{\brt}}}}\frac{\norm{\bY}^2}{n},\\
&q_{n,2} \triangleq \frac{\tilde\alpha\pp{\bY^T\bHtt\bQsr\bHtr^T\bY-m_{\bst,\brt}\norm{\bY}^2}}{n\p{1+\beta\sigma^2m_{\bst}b\p{m_{\bst}}}\p{1+\beta\sigma^2m_{\brt}b\p{m_{\brt}}}}\nonumber\\
&\ \ \ \ \ \ \ \ \ \ -\frac{\tilde\alpha m_{\bst,\brt}\beta\sigma^2b\p{m_{\brt}}\pp{\norm{\bY^T\bHtr}^2-m_{\brt}\norm{\bY}^2}}{n\p{1+\beta\sigma^2m_{\bst}b\p{m_{\bst}}}\p{1+\beta\sigma^2m_{\brt}b\p{m_{\brt}}}^2}\nonumber\\
&\ \ \ \ \ \ \ \ \ \ -\frac{\tilde\alpha m_{\bst,\brt}\beta\sigma^2b\p{m_{\bst}}\pp{\norm{\bY^T\bHtt}^2-m_{\bst}\norm{\bY}^2}}{n\p{1+\beta\sigma^2m_{\bst}b\p{m_{\bst}}}^2\p{1+\beta\sigma^2m_{\brt}b\p{m_{\brt}}}},\label{qn2}
\end{align}
and observe that $q_n = q_{n,1}+q_{n,2}$. Thus, to prove \eqref{itm04}, it is sufficient to show that
\begin{align}
\max_{\bst,\brt\in\ppp{0,1}^n}\pr\ppp{\abs{\frac{1}{n}\sum_{i=1}^kY_i^2\bz_i^T\calHs\bQsr\calHr\bzti_i-q_{n,1}}>\epsilon}\leq\frac{1}{\epsilon^p}\calO(n^{-p/2}),\label{uniformconFirst3}
\end{align}
and
\begin{align}
&\max_{\bst,\brt\in\ppp{0,1}^n}\pr\left\{\left|\frac{1}{n}\sum_{i\neq j}^kY_iY_j\bz_i^T\calHs\bQsr\calHr\bzti_j-q_{n,2}\right|>\epsilon\right\}\leq\frac{1}{\epsilon^p}\calO(n^{-p/2}).\label{uniformconFirst4}
\end{align}
Since the same arguments that will be used to prove \eqref{uniformconFirst3} can be used to prove \eqref{uniformconFirst4}, for the sake of brevity, in the following, we focus on \eqref{uniformconFirst3}.

Applying the matrix inversion lemma (Lemma \ref{aplem:1}) we obtain
\begin{align}
&\frac{1}{n}\sum_{i=1}^kY_i^2\bz_i^T\calHs\bQsr\calHr\bzti_i= \frac{1}{n}\sum_{i=1}^k\frac{Y_i^2\bz_i^T\calHsi\bQsr\calHri\bzti_i}{\p{1+\beta\bz_i^T\calHsi\bz_i}\p{1+\beta\bzti_i^T\calHri\bzti_i}}.
\label{righrf1}
\end{align}
Note that contrary to the previous case \eqref{itm03}, where already at this stage, we were able to continue the asymptotic analysis (see, \eqref{reds01}-\eqref{reds1}), in this case we cannot, because currently, we do not know how the numerator behaves. Let
\begin{align}
\eta_{n}\triangleq &\frac{\beta R}{1+\beta \frac{1}{n}\tr\calHr},\label{etadefi}\\
\psi_{n}\triangleq &\frac{\beta R}{1+\beta \frac{1}{n}\tr\calHr}- \frac{\beta^2 R\frac{1}{n}\tr\p{\bQsr\calHr\bQsr^T}}{\p{1+\beta \frac{1}{n}\tr\calHs}\p{1+\beta \frac{1}{n}\tr\calHr}},
\label{psidefi}
\end{align}
and $\tilde\alpha_n\triangleq (\psi_{n}+\sigma^{-2})^{-1}(\eta_{n}+\sigma^{-2})^{-1}$. The following lemma provides the asymptotic behavior of the numerator in \eqref{righrf1}.
\begin{lemma}\label{lem:subsecc}
For any $\epsilon,p>0$,
\begin{align}
\max_{\bst,\brt\in\ppp{0,1}^n}\pr\ppp{\abs{\bz_i^T\calHsi\bQsr\calHri\bzti_i-\tilde\alpha_nm_{\bst,\brt}}>\epsilon}\leq\frac{1}{\epsilon^p}\calO(n^{-p/2}),\label{conbef1}
\end{align}
and
\begin{align}
\max_{\bst,\brt\in\ppp{0,1}^n}\pr\ppp{\abs{\tilde\alpha_nm_{\bst,\brt}-\tilde\alpha(m_{\bst},m_{\brt},m_{\bst,\brt}) m_{\bst,\brt}}>\epsilon}\leq\frac{1}{\epsilon^p}\calO(n^{-p/2}),\label{conbef2}
\end{align}
where $\tilde{\alpha}(m_{\bst},m_{\brt},m_{\bst,\brt})$ is defined in \eqref{tildealphadef}.
\end{lemma}

Given Lemma \ref{lem:subsecc}, by using exactly the same arguments as in \eqref{showthisinProb} and \eqref{showthisinProb2}, it can be shown that for any $p,\epsilon>0$,
\begin{align}
&\max_{\bst,\brt\in\ppp{0,1}^n}\pr\ppp{\abs{\sum_{i=1}^k\frac{Y_i^2}{n}\p{\frac{\bz_i^T\calHsi\bQsr\calHri\bzti_i}{\p{1+\beta\bz_i^T\calHsi\bz_i}\p{1+\beta\bzti_i^T\calHri\bzti_i}}-\frac{\tilde\alpha_n m_{\bst,\brt}}{\p{1+\frac{\beta}{n}\tr\calHs}\p{1+\frac{\beta}{n}\tr\calHr}}}}>\epsilon}\nonumber\\
&\ \ \ \ \ \ \ \ \ \ \ \ \leq\frac{1}{\epsilon^p}\calO(n^{-p/2}),\label{wht1}
\end{align} 
and
\begin{align}
&\max_{\bst,\brt\in\ppp{0,1}^n}\pr\ppp{\abs{\frac{\tilde\alpha_n m_{\bst,\brt}}{\p{1+\beta\frac{1}{n}\tr\calHs}\p{1+\beta\frac{1}{n}\tr\calHr}}\frac{\norm{\bY}^2}{n}-q_{n,1}}>\epsilon}\leq\frac{1}{\epsilon^p}\calO(n^{-p/2}).\label{wht2}
\end{align}
Using \eqref{righrf1}, \eqref{wht1}, and \eqref{wht2}, we obtain \eqref{uniformconFirst3}, as required. We end this subsection by proving Lemma \ref{lem:subsecc}.

\begin{proof}[\underline{Proof of Lemma \ref{lem:subsecc}}]
We start with \eqref{conbef1}. Let $h_n\triangleq \tilde{\alpha}_nm_{\bst,\brt}$. Similarly as in \eqref{upptilde}, using Lemma \ref{aplem:trace}, it can be verified that
\begin{align}
\max_{\bst,\brt\in\ppp{0,1}^n}\pr\ppp{\abs{\bz_i^T\calHsi\bQsr\calHri\bzti_i-\frac{1}{n}\tr\pp{\calHsi\bQsr\calHri\bQsr^T}}>\epsilon}\leq\frac{1}{\epsilon^p}\calO(n^{-p/2}).\label{Whatwish0with}
\end{align}
Accordingly, it is sufficient to show that
\begin{align}
\max_{\bst,\brt\in\ppp{0,1}^n}\pr\ppp{\abs{\frac{1}{n}\tr\pp{\calHsi\bQsr\calHri\bQsr^T}-h_{n}}>\epsilon}\leq\frac{1}{\epsilon^p}\calO(n^{-p/2}).\label{Whatwish}
\end{align}
Note that $h_n$ can be written as follows
\begin{align}
h_{n} \triangleq \frac{1}{n}\tr\p{\bD_s^{-1}\bQsr\bD_r^{-1}\bQsr^T}
\label{h_Nterm}
\end{align}
where $\bD_s\triangleq\p{\psi_{n}+\frac{1}{\sigma^2}}\bItt$ and $\bD_r\triangleq\p{\eta_{n}+\frac{1}{\sigma^2}}\bItr$, and we have used the fact that $\tr\p{\bQsr\bQsr^T} = \sum_{i=1}^ns_ir_i = nm_{\bst,\brt}$. Accordingly,
\begin{align}
\calHsi\bQsr\calHri\bQsr^T-\bD_s^{-1}\bQsr\bD_r^{-1}\bQsr^T&= \calHsi\bQsr\calHri\bQsr^T- \bD_s^{-1}\bQsr\calHri\bQsr^T\nonumber\\
&\ \ \ \ +\bD_s^{-1}\bQsr\calHri\bQsr^T-\bD_s^{-1}\bQsr\bD_r^{-1}\bQsr^T\\
&= \pp{\calHsi-\bD_s^{-1}}\bQsr\calHri\bQsr^T\nonumber\\
&\ \ \ \ +\bD_s^{-1}\bQsr\pp{\calHri-\bD_r^{-1}}\bQsr^T.
\label{B33}
\end{align}
Thus, to prove \eqref{Whatwish}, it is sufficient to show that
\begin{align}
\max_{\bst,\brt\in\ppp{0,1}^n}\pr\ppp{\abs{\frac{1}{n}\tr\p{\pp{\calHsi-\bD_s^{-1}}\bQsr\calHri\bQsr^T}}>\epsilon}\leq\frac{1}{\epsilon^p}\calO(n^{-p/2}),\label{whatwish00}
\end{align}
and 
\begin{align}
\max_{\bst,\brt\in\ppp{0,1}^n}\pr\ppp{\abs{\frac{1}{n}\tr\p{\bD_s^{-1}\bQsr\pp{\calHri-\bD_r^{-1}}\bQsr^T}}>\epsilon}\leq\frac{1}{\epsilon^p}\calO(n^{-p/2}).\label{whatwish01}
\end{align}
Fig. \ref{fig:tree} gives a schematic representation of the various consolidation steps used to prove \eqref{uniformconFirst3}.
\begin{figure}
\begin{pspicture}(0,-1.1876563)(8.362812,1.1876563)
\usefont{T1}{ptm}{m}{n}
\rput(0.6123437,0.99921876){\eqref{whatwish00}}
\usefont{T1}{ptm}{m}{n}
\rput(0.6123437,0.01921875){\eqref{whatwish01}}
\usefont{T1}{ptm}{m}{n}
\rput(3.0123436,0.5992187){\eqref{Whatwish}}
\usefont{T1}{ptm}{m}{n}
\rput(3.0723438,-0.5807812){\eqref{Whatwish0with}}
\usefont{T1}{ptm}{m}{n}
\rput(5.3323436,0.07921875){\eqref{conbef1}}
\usefont{T1}{ptm}{m}{n}
\rput(5.2923436,-0.9607813){\eqref{conbef2}}
\usefont{T1}{ptm}{m}{n}
\rput(7.6723437,-0.48078126){\eqref{uniformconFirst3}}
\psline[linewidth=0.04cm,arrowsize=0.05291667cm 2.0,arrowlength=1.4,arrowinset=0.4]{->}(1.3809375,0.9892188)(2.3009374,0.5492188)
\psline[linewidth=0.04cm,arrowsize=0.05291667cm 2.0,arrowlength=1.4,arrowinset=0.4]{->}(3.8209374,-0.53078127)(4.6809373,-0.03078125)
\psline[linewidth=0.04cm,arrowsize=0.05291667cm 2.0,arrowlength=1.4,arrowinset=0.4]{->}(1.3809375,0.06921875)(2.3009374,0.48921874)
\psline[linewidth=0.04cm,arrowsize=0.05291667cm 2.0,arrowlength=1.4,arrowinset=0.4]{->}(3.8009374,0.5492188)(4.7209377,0.10921875)
\psline[linewidth=0.04cm,arrowsize=0.05291667cm 2.0,arrowlength=1.4,arrowinset=0.4]{->}(6.0609374,0.02921875)(6.9809375,-0.41078126)
\psline[linewidth=0.04cm,arrowsize=0.05291667cm 2.0,arrowlength=1.4,arrowinset=0.4]{->}(6.0609374,-0.8907812)(6.9809375,-0.53078127)
\end{pspicture} 
\centering
\caption{Consolidation steps.}
\label{fig:tree}
\end{figure}

\underline{Proof of \eqref{whatwish00}:}
Let 
\begin{align}
\hat{z}_i\triangleq \frac{1}{n}\tr\p{\pp{\calHsi-\bD_s^{-1}}\bQsr\calHri\bQsr^T},
\end{align}
and we need to show that $\max_{\bst,\brt}\pr\ppp{\abs{\hat{z}_i}>\epsilon}\leq\calO(n^{-p/2})$. By Lemma \ref{aplem:3},
\begin{align}
\calHsi-\bD_s^{-1} &= \bD_s^{-1}\pp{\bD_s-\beta\pp{\bHtt^T\bHtt}_{i}-\frac{1}{\sigma^2}\bItt}\calHsi\\
& = C\bItt\pp{\psi_n\bItt-\beta\pp{\bHtt^T\bHtt}_{i}}\calHsi\\
& = C\psi_n\calHsi-C\beta\pp{\bHtt^T\bHtt}_{i}\calHsi
\label{twotwoterms}
\end{align} 
where in the second equality we substitute $\bD_s=\p{\psi_{n}+\frac{1}{\sigma^2}}\bItt$, and defined $C \triangleq 1/\p{\psi_n+1/\sigma^2}$. Therefore,
\begin{align}
\hat{z}_i= \frac{C\psi_n}{n}\tr\p{\calHsi\bQsr\calHri\bQsr^T}-\frac{C\beta}{n}\tr\p{\bHtt^T\bHtt\calHsi\bQsr\calHri\bQsr^T}.
\label{midcalca}
\end{align}
Using \eqref{midcalca}, to prove \eqref{whatwish00}, we need to show that
\begin{align}
\max_{\bst,\brt\in\ppp{0,1}^n}\pr\ppp{\abs{\frac{C\psi_{n}}{n}\tr\p{\calHsi\bQsr\calHri\bQsr^T}-\frac{C\beta}{n}\tr\p{\bHtt^T\bHtt\calHsi\bQsr\calHri\bQsr^T}}>\epsilon}\leq\frac{\calO(n^{-p/2})}{\epsilon^p}.\label{Whatwishpsi}
\end{align}
Now,
\begin{align}
\frac{C\beta}{n}\tr\p{\bHtt^T\bHtt\calHsi\bQsr\calHri\bQsr^T}&\stackrel{(a)}{=}\frac{C\beta}{n}\tr\p{\sum_{j=1}^k\bz_j\bz_j^T\calHsi\bQsr\calHri\bQsr^T}\nonumber\\
&\stackrel{(b)}{=}\frac{C\beta}{n}\sum_{j=1}^k\bz_j^T\calHsi\bQsr\calHri\bQsr^T\bz_j\\
&\stackrel{(c)}{=}\frac{C\beta}{n}\sum_{j=1}^k\frac{\bz_j^T\calHsii\bQsr\calHri\bQsr^T\bz_j}{1+\beta\bz_j^T\calHsii\bz_j}\label{apptoHi}
\end{align}
where in $(a)$ we have used the fact that $\bHtt^T\bHtt = \sum_{i=1}^k\bz_i\bz_i^T$, in $(b)$ we have used the cyclic property of the trace operator, and $(c)$ is by the matrix inversion lemma. Applying Lemma \ref{aplem:2} to $\calHri$ in \eqref{apptoHi} we obtain (removing the $\tilde\bz_j\tilde\bz_j^T$ element from $\calHri$)
\begin{align}
&\frac{C\beta}{n}\tr\p{\bHtt^T\bHtt\calHsi\bQsr\calHri\bQsr^T}\nonumber\\
&=\frac{C\beta}{n}\sum_{j=1}^k\frac{\bz_j^T\calHsii\bQsr\calHrii\bQsr^T\bz_j}{1+\beta\bz_j^T\calHsii\bz_j} - \frac{C\beta}{n}\sum_{j=1}^k\frac{\bz_j^T\calHsii\bQsr\calHrii\beta\bzti_j\bzti_j^T\calHrii\bQsr^T\bz_j}{\p{1+\beta\bz_j^T\calHsii\bz_j}\p{1+\beta\bzti_j^T\calHrii\bzti_j}}.
\end{align}
Thus, using the last equality,
\begin{align}
&\hat{z}_i=\frac{C\psi_n}{n}\tr\p{\calHsi\bQsr\calHri\bQsr^T}- \frac{C\beta}{n}\sum_{j=1}^k\frac{\bz_j^T\calHsii\bQsr\calHrii\bQsr^T\bz_j}{1+\beta\bz_j^T\calHsii\bz_j}\nonumber\\
& \ \ \ \ \  +\frac{C\beta}{n}\sum_{j=1}^k\frac{\bz_j^T\calHsii\bQsr\calHrii\beta\bzti_j\bzti_j^T\calHrii\bQsr^T\bz_j}{\p{1+\beta\bz_j^T\calHsii\bz_j}\p{1+\beta\bzti_j^T\calHrii\bzti_j}}.
\label{difftozero}
\end{align}
Substituting $\psi_n$ (see, \eqref{psidefi}) in \eqref{difftozero}, we get
\begin{align}
\hat{z}_i&=\frac{C\beta}{n}\sum_{j=1}^k\pp{\frac{\frac{1}{n}\tr\p{\calHsi\bQsr\calHri\bQsr^T}}{1+\beta \frac{1}{n}\tr\calHr} - \frac{\bz_j^T\calHsii\bQsr\calHrii\bQsr^T\bz_j}{1+\beta\bz_j^T\calHsii\bz_j}}\nonumber\\
&\ \ \ \ \ +\frac{C\beta}{n}\sum_{j=1}^k\left[\frac{\bz_j^T\calHsii\bQsr\calHrii\beta\bzti_j\bzti_j^T\calHrii\bQsr^T\bz_j}{\p{1+\beta\bz_j^T\calHsii\bz_j}\p{1+\beta\bzti_j^T\calHrii\bzti_j}}\right.\nonumber\\
&\left.\ \ \ \ \ \ \ \ \ \ \ \ \ \ \ \ \ \ \ \ \ \ \ \ \ -\frac{\frac{1}{n}\tr\p{\calHsi\bQsr\calHri\bQsr^T}\beta\frac{1}{n}\tr\p{\bQsr\calHr\bQsr^T}}{\p{1+\beta \frac{1}{n}\tr\calHs}\p{1+\beta \frac{1}{n}\tr\calHr}}\right].\label{twosummations}
\end{align}
Whence, to prove \eqref{Whatwishpsi}, it is sufficient to show that the two terms (summations) at r.h.s. of \eqref{twosummations} converge to zero uniformly in $\bs,\br$. The convergence of the first term, can be shown exactly as was already done for \eqref{showthisinProb}. The convergence of the second term is essentially very similar to the first term, but with more terms involved (actually, the second term can be seen as an extension of the first term). Indeed, by Lemma \ref{aux:assympLLN}, it is enough to prove that
\begin{align}
&\bE\ppp{\abs{\frac{\bz_j^T\calHsii\bQsr\calHrii\beta\bzti_j\bzti_j^T\calHrii\bQsr^T\bz_j}{\p{1+\beta\bz_j^T\calHsii\bz_j}\p{1+\beta\bzti_j^T\calHrii\bzti_j}}-\frac{\frac{1}{n}\tr\p{\calHsi\bQsr\calHri\bQsr^T}\beta\frac{1}{n}\tr\p{\bQsr\calHr\bQsr^T}}{\p{1+\beta \frac{1}{n}\tr\calHs}\p{1+\beta \frac{1}{n}\tr\calHr}}}^p}\nonumber\\
&\leq\calO(n^{-(1+\delta)}),\label{showthis31}
\end{align}
or equivalently that (again, we add and subtract a common term and then we use the triangle inequality):
\begin{align}
&\bE\ppp{\abs{\frac{\bz_j^T\calHsii\bQsr\calHrii\beta\bzti_j\bzti_j^T\calHrii\bQsr^T\bz_j-\frac{1}{n}\tr\p{\calHsi\bQsr\calHri\bQsr^T}\beta\frac{1}{n}\tr\p{\bQsr\calHr\bQsr^T}}{\p{1+\beta\bz_j^T\calHsii\bz_j}\p{1+\beta\bzti_j^T\calHrii\bzti_j}}}^p}\nonumber\\
&\leq\calO(n^{-(1+\delta)}),
\label{showthis3}
\end{align}
and that
\begin{align}
&\bE\left\{\left|\frac{\frac{1}{n}\tr\p{\calHsi\bQsr\calHri\bQsr^T}\beta\frac{1}{n}\tr\p{\bQsr\calHr\bQsr^T}}{\p{1+\beta\bz_j^T\calHsii\bz_j}\p{1+\beta\bzti_j^T\calHrii\bzti_j}}\right.\right.\nonumber\\
&\left.\left.\ \ \ \ \ \ \ \ \ \ \ \ \ \ -\frac{\frac{1}{n}\tr\p{\calHsi\bQsr\calHri\bQsr^T}\beta\frac{1}{n}\tr\p{\bQsr\calHr\bQsr^T}}{\p{1+\beta \frac{1}{n}\tr\calHs}\p{1+\beta \frac{1}{n}\tr\calHr}}\right|^p\right\}\leq\calO(n^{-(1+\delta)}).
\label{showthis4}
\end{align}
Fig. \ref{fig:tree2} gives a schematic representation of the various consolidation steps used to prove \eqref{whatwish00}.
\begin{figure}
\begin{pspicture}(0,-0.6976563)(10.862812,0.6976563)
\usefont{T1}{ptm}{m}{n}
\rput(0.70234376,0.50921875){\eqref{showthis3}}
\usefont{T1}{ptm}{m}{n}
\rput(0.70234376,-0.47078124){\eqref{showthis4}}
\usefont{T1}{ptm}{m}{n}
\rput(3.0823438,0.10921875){\eqref{showthis31}}
\usefont{T1}{ptm}{m}{n}
\rput(10.172344,0.04921875){\eqref{whatwish00}}
\usefont{T1}{ptm}{m}{n}
\rput(7.8523436,0.04921875){\eqref{Whatwishpsi}}
\psline[linewidth=0.04cm,arrowsize=0.05291667cm 2.0,arrowlength=1.4,arrowinset=0.4]{->}(1.3809375,0.49921876)(2.3009374,0.05921875)
\psline[linewidth=0.04cm,arrowsize=0.05291667cm 2.0,arrowlength=1.4,arrowinset=0.4]{->}(1.3809375,-0.42078125)(2.3009374,0.0)
\usefont{T1}{ptm}{m}{n}
\rput(5.4423437,0.06921875){\eqref{twosummations}}
\psline[linewidth=0.04cm,arrowsize=0.05291667cm 2.0,arrowlength=1.4,arrowinset=0.4]{->}(6.2809377,0.05921875)(7.1809373,0.03921875)
\psline[linewidth=0.04cm,arrowsize=0.05291667cm 2.0,arrowlength=1.4,arrowinset=0.4]{->}(8.600938,0.05921875)(9.500937,0.03921875)
\psline[linewidth=0.04cm,arrowsize=0.05291667cm 2.0,arrowlength=1.4,arrowinset=0.4]{->}(3.8809376,0.09921875)(4.7809377,0.07921875)
\end{pspicture} 
\centering
\caption{Consolidation steps.}
\label{fig:tree2}
\end{figure}

Let us show \eqref{showthis3}. First, note that
\begin{align}
\abs{d_i}&\triangleq\abs{\frac{\bz_j^T\calHsii\bQsr\calHrii\beta\bzti_j\bzti_j^T\calHrii\bQsr^T\bz_j-\frac{1}{n}\tr\p{\calHsi\bQsr\calHri\bQsr^T}\beta\frac{1}{n}\tr\p{\bQsr\calHr\bQsr^T}}{\p{1+\beta\bz_j^T\calHsii\bz_j}\p{1+\beta\bzti_j^T\calHrii\bzti_j}}}\\
&\leq\abs{\bz_j^T\calHsii\bQsr\calHrii\beta\bzti_j\bzti_j^T\calHrii\bQsr^T\bz_j-\frac{1}{n}\tr\p{\calHsi\bQsr\calHri\bQsr^T}\beta\frac{1}{n}\tr\p{\bQsr\calHr\bQsr^T}}\\
&\stackrel{(a)}{=}\left|\bz_j^T\calHsii\bQsr\calHrii\beta\bzti_j\bzti_j^T\calHrii\bQsr^T\bz_j-\frac{1}{n}\tr\p{\calHsi\bQsr\calHri\bQsr^T}\beta\bzti_j^T\calHrii\bQsr^T\bz_j\right.\nonumber\\
&\left.+\frac{1}{n}\tr\p{\calHsi\bQsr\calHri\bQsr^T}\beta\bzti_j^T\calHrii\bQsr^T\bz_j-\frac{1}{n}\tr\p{\calHsi\bQsr\calHri\bQsr^T}\beta\frac{1}{n}\tr\p{\bQsr\calHr\bQsr^T}\right|\\
&\stackrel{(b)}{\leq}\abs{\bz_j^T\calHsii\bQsr\calHrii\bzti_j-\frac{1}{n}\tr\p{\calHsi\bQsr\calHri\bQsr^T}}\abs{\beta\bzti_j^T\calHrii\bQsr^T\bz_j}\nonumber\\
&\ \ +\beta\abs{\frac{1}{n}\tr\p{\calHsi\bQsr\calHri\bQsr^T}}\abs{\bzti_j^T\calHrii\bQsr^T\bz_j-\frac{1}{n}\tr\p{\bQsr\calHr\bQsr^T}}
\label{twoterms3}
\end{align}
where $(a)$ follows by adding and subtracting the term
$$
\frac{1}{n}\tr\p{\calHsi\bQsr\calHri\bQsr^T}\beta\bzti_j^T\calHrii\bQsr^T\bz_j,
$$
and $(b)$ follows from the triangle inequality and pulling out the common factor. Using the Cauchy-Schwartz inequality,
\begin{align}
&\bE\ppp{\abs{\bz_j^T\calHsii\bQsr\calHrii\bzti_j-\frac{1}{n}\tr\p{\calHsi\bQsr\calHri\bQsr^T}}^p\abs{\beta\bzti_j^T\calHrii\bQsr^T\bz_j}^p}\nonumber\\
&\leq\p{\bE\abs{\bz_j^T\calHsii\bQsr\calHrii\bzti_j-\frac{1}{n}\tr\p{\calHsi\bQsr\calHri\bQsr^T}}^{2p}}^{1/2}\p{\bE\abs{\beta\bzti_j^T\calHrii\bQsr^T\bz_j}^{2p}}^{1/2}\nonumber\\
&\leq\calO(n^{-p/2})
\end{align}
where the last inequality follows from Lemma \ref{aplem:trace} and the fact that $\bE\abs{\beta\bzti_j^T\calHrii\bQsr^T\bz_j}^{2p}$ is bounded (Lemma \ref{aplem:finite}). Let $\ba_j$ be the $j$th row of $\bH$, and let $\mathbf{\Pi}_r$ and $\mathbf{\Pi}_s$ be $n\times\abs{\calR}$ and $n\times\abs{\calS}$ binary \emph{projection} matrices such that $\bzti_j = \mathbf{\Pi}_r^T\ba_j$ and $\bz_j = \mathbf{\Pi}_s^T\ba_j$, respectively (note that $\bQsr = \mathbf{\Pi}_s\mathbf{\Pi}_r$), i.e., each of the $\abs{\calR}$ columns of $\mathbf{\Pi}_r$ has a single unit entry corresponding to an index from $\calR$ (and the same for $\mathbf{\Pi}_s$). Then, note that
\begin{align}
\bzti_j^T\calHrii\bQsr^T\bz_j &= \ba_j^T\mathbf{\Pi}_r\calHrii\bQsr^T\mathbf{\Pi}_s^T\ba_j\\
& = (\sqrt{n}\ba_j)^T\pp{\frac{1}{n}\mathbf{\Pi}_r\calHrii\bQsr^T\mathbf{\Pi}_s^T}(\sqrt{n}\ba_j).
\end{align}
Also, recall that the matrix $\calHrii$ is defined as
\begin{align}
&\calHrii = \pp{\beta\p{\bHtt^T\bHtt -\bzti_i\bzti_i^T-\bzti_j\bzti_j^T}+\frac{1}{\sigma^2}\bItr}^{-1},
\end{align}
and thus $\mathbf{\Pi}_r\calHrii\bQsr^T\mathbf{\Pi}_s^T$ is independent of $\ba_j$. Finally, note that w.p. 1, $\calHrii\preceq\sigma^2\bItr$, and thus, with the same probability,
\begin{align}
\tr\pp{\frac{1}{n}\p{\mathbf{\Pi}_r\calHrii\bQsr^T\mathbf{\Pi}_s^T}^T\p{\mathbf{\Pi}_r\calHrii\bQsr^T\mathbf{\Pi}_s^T}}&\leq \tr\p{\frac{\sigma^4}{n}\mathbf{\Pi}_s\bQsr\mathbf{\Pi}_r^T\mathbf{\Pi}_r\bQsr^T\mathbf{\Pi}_s^T}\\
&\leq\sigma^4,
\end{align}
which establishes the boundness condition in Lemma \ref{aplem:finite}. The second term in \eqref{twoterms3} is handled similarly. Thus, taking any $p>2$, we obtain \eqref{showthis3}. Similar arguments can be applied to show that \eqref{showthis4} holds true. This establishes the proof of \eqref{whatwish00}.

\underline{Proof of \eqref{whatwish01}:}
Using Lemma \ref{aplem:3},
\begin{align}
\calHri-\bD_r^{-1} = \bD_r^{-1}\pp{\bD_r-\beta\pp{\bHtr^T\bHtr}_{i}-\frac{1}{\sigma^2}\bItr}\calHri.
\end{align}
Substituting $\bD_r$, we obtain
\begin{align}
&\bD_s^{-1}\bQsr\pp{\calHri-\bD_r^{-1}} = \eta_{n}\bD_s^{-1}\bQsr\bD_r^{-1}\calHri\nonumber\\
&\ \ \ \ \ \ \ \ \ \ \ \ \ \ \ \ \ \ \ \ \ \ \ -\bD_s^{-1}\bQsr\bD_r^{-1}\beta\pp{\bHtr^T\bHtr}_{i}\calHri.
\end{align}
Let $\tilde{C} = 1/\p{\eta_{n}+1/\sigma^2}$. Then,
\begin{align}
\frac{1}{n}\tr\p{\bD_s^{-1}\bQsr\bD_r^{-1}\beta\pp{\bHtr^T\bHtr}_{i}\calHri\bQsr^T} = \frac{\tilde{C}C\beta}{n}\sum_{j=1}^k\frac{\bzti_j^T\calHrii\bQsr^T\bQsr\bzti_j}{1+\beta\bzti_j^T\calHrii\bzti_j}
\end{align}
where, as before, we have used $\bHtt^T\bHtt = \sum_{i=1}^k\bz_i\bz_i^T$, the cyclic property of the trace operator, and the matrix inversion lemma. Substituting \eqref{etadefi} in \eqref{whatwish01} we get
\begin{align}
\frac{1}{n}\tr\p{\bD_s^{-1}\bQsr\pp{\calHri-\bD_r^{-1}}\bQsr^T} = \frac{\tilde{C}C\beta}{n}\sum_{j=1}^k\pp{\frac{\bzti_j^T\calHrii\bQsr^T\bQsr\bzti_j}{1+\beta\bzti_j^T\calHrii\bzti_j}-\frac{\frac{1}{n}\tr\p{\bQsr\calH^r_i\bQsr^T}}{1+\beta \frac{1}{n}\tr\calH^r}}.\label{conasbef}
\end{align}
Noting to the similarities between \eqref{conasbef} and \eqref{showthisinProb}, using the same arguments used to prove \eqref{showthisinProb}, it can be shown that \eqref{conasbef} converges to zero uniformly in $\bs$ and $\br$, yielding \eqref{whatwish01}. 

It remains to show \eqref{wht2}. The main observation is that $\tilde\alpha_n$ involves the Stieltjes transforms $\frac{1}{n}\tr\calHr$, $\frac{1}{n}\tr\calHs$ and $\frac{1}{n}\tr\bQsr\calHr\bQsr^T$. Thus, repeating the same steps as in \eqref{showthis0mi2}-\eqref{A36}, we can readily prove \eqref{wht2}. Note that similarly as in Remark \ref{rem:str}, using Lemma \ref{lem:mat2}, it is easy to see that
\begin{align}
&\eta_{n}=\frac{\beta R}{1+\beta \frac{1}{n}\tr\calHr}\asymp \frac{\beta R}{1+\beta\sigma^2 m_{\brt}b(m_{\brt})}\triangleq\eta_{\infty},
\end{align}
and
\begin{align}
\psi_{n}&= \frac{\beta R}{1+\beta \frac{1}{n}\tr\calHr}- \frac{\beta^2 R\frac{1}{n}\tr\p{\bQsr\calHr\bQsr^T}}{\p{1+\beta \frac{1}{n}\tr\calHs}\p{1+\beta \frac{1}{n}\tr\calHr}},\\
&\asymp \frac{\beta R}{1+\beta\sigma^2m_{\brt}b\p{m_{\brt}}}- \frac{\beta^2\sigma^2Rb\p{m_{\bst}}m_{\bst,\brt}}{\p{1+\beta\sigma^2m_{\bst}b\p{m_{\bst}}}\p{1+\beta\sigma^2m_{\brt}b\p{m_{\brt}}}}\triangleq\psi_{\infty}.
\end{align}
Whence,
\begin{align}
\tilde\alpha_n&= (\psi_{n}+\sigma^{-2})^{-1}(\eta_{n}+\sigma^{-2})^{-1}\\
&\asymp (\psi_{\infty}+\sigma^{-2})^{-1}(\eta_{\infty}+\sigma^{-2})^{-1} = \tilde\alpha,
\end{align}
which establishes the a.s. convergence of $\tilde\alpha_n$ to $\tilde\alpha$.\end{proof}

\subsection{Proof of \eqref{itm02}}
Recall the definition of $S(-1)$ in \eqref{bTerm1}, and note that $\eta\p{\gamma}$ in Lemma \ref{lem:mat1} boils down to
\begin{align}
\eta\p{\gamma} &\triangleq \frac{1}{k}\sum_{l=1}^{\abs{\calS}}\log\p{1+cg_lS\p{-\gamma}}-\log\p{\gamma^2S\p{-\gamma}}-\frac{1}{\abs{\calS}}\sum_{l=1}^{\abs{\calS}}\frac{g_lS\p{-\gamma}}{1+cg_lS\p{-\gamma}}.
\end{align}
Thus, for $\gamma=1$,
\begin{align}
\eta\p{1} = \frac{R}{m_{\bst}}\log\pp{1+\beta\sigma^2b\p{m_{\bst}}m_{\bst}}-\log b\p{m_{\bst}}-\frac{\beta\sigma^2Rb\p{m_{\bst}}}{1+\beta\sigma^2b\p{m_{\bst}}m_{\bst}},
\end{align}
which is $\bar{I}\p{m_{\bst}}$ defined in \eqref{firs2tt}. Thus, by Lemma \ref{lem:mat1},
\begin{align}
\frac{1}{n}\log\det\p{\beta\sigma^2\bHtt^T\bHtt+\bItt}-m_{\bst}\bar{I}\p{m_{\bst}}\to 0\label{asConlogdet}
\end{align} 
a.s. as $n\to\infty$. From \eqref{asConlogdet} one cannot deduce \eqref{itm02}. Nonetheless, the uniformity w.r.t. $\bs$ follows from the original proof of Lemma \ref{lem:mat1} in \cite{Couillet} and \cite[Appendix B]{ChaoKaiWen}. In short, the uniformity is due to the following facts: First, the Shannon transform of any non-negative definite matrix can be expressed as a functional of the Stieltjes transform of the same matrix \cite[Eq. (3.5)]{coulbook}. Second, in \cite[Appendix B, eq. (47)]{Couillet}, it was shown that the same functional relation holds also between their respective deterministic equivalents, $\bar{S}_s(z)$ and $\eta_s(z)$ (see the notation in Lemma \ref{lem:mat1}). Finally, using the fact that the convergence of the Stieltjes transform of $\beta\bHtt^T\bHtt$ to $\bar{S}_s(z)$ is uniform w.r.t. $\bs$, it can be shown that this is the case also for the Shannon transform of $\beta\bHtt^T\bHtt$ and $\eta_s(z)$.

\section{}\label{app:3}
\begin{proof}[Derivation of \eqref{DasymMMSE}]
In this appendix, using the previous asymptotic results, we derive the asymptotic MMSE. As was shown in Subsection \ref{sec:mainsteps}, our objective is to evaluate \eqref{lastee2} which is given by
\begin{align}
\limsup_{n\to\infty}\frac{\text{mmse}\p{\bX\vert\bY,\bH}}{n} &= \sigma^2m_a-\limsup_{n\to\infty}\bE\ppp{\bE_{\mu_{s\times r}}\pp{J\p{\bY,\bHtt,\bHtr}\Ind_{\calT_\epsilon^{s,r}}}}.
\label{lastee3}
\end{align}
Note that
\begin{align}
\by^T\bHtt\bQsr\bHtr^T\by &= \sum_{i=1}^n\abs{\bh_i^T\by}^2s_ir_i,\\
\norm{\bHtt^T\by}^2 &= \sum_{i=1}^n\abs{\bh_i^T\by}^2s_i,
\end{align}
and
\begin{align}
\norm{\bHtr^T\by}^2 &= \sum_{i=1}^n\abs{\bh_i^T\by}^2r_i.
\end{align}
Over $\calT_\epsilon^{\bst,\brt}$, using the definitions of $\xi\p{\by,\bHtt}$ and $J\p{\by,\bHtt,\bHtr}$ in \eqref{JJJ1} and \eqref{JJJ3}, respectively,
\begin{align}
\abs{\frac{1}{n}\log\xi\p{\by,\bHtt}-\frac{\beta^2}{2}f_n-\frac{1}{2}m_{\bst}\bar{I}\p{m_{\bst}}}<\epsilon,
\label{typtyp}
\end{align}
and
\begin{align}
&\abs{J\p{\by,\bHtt,\bHtr}-\beta^2q_n}<\epsilon.\label{appeox12}
\end{align}
For brevity, we let
\begin{align}
J^{\epsilon}\p{\by,\bHtt,\bHtr}&\triangleq \beta^2q_n+\epsilon,\\
\xi^{\epsilon}\p{\by,\bHtt}&\triangleq\exp\ppp{n\p{\frac{\beta^2}{2}f_n+\frac{1}{2}m_{\bst}\bar{I}\p{m_{\bst}}+\epsilon}},\\
\mu_{s\times r}^{\epsilon}\p{\by,\bHtt,\bHtr}&\triangleq\frac{P_{\bSt}(\bs)P_{\bSt}(\br)\xi^{\epsilon}\p{\by,\bHtt}\xi^{\epsilon}\p{\by,\bHtr}}{\pp{\sum_{\but\in\ppp{0,1}^n}P_{\bSt}\p{\bu}\xi^\epsilon\p{\by,\bHtu}}^2}.
\end{align}
Thus,
\begin{align}
\bE_{\mu_{s\times r}}\pp{J\p{\bY,\bHtt,\bHtr}\Ind_{\calT_\epsilon^{s,r}}}&\leq \bE_{\mu^\epsilon_{s\times r}}\pp{J^\epsilon\p{\bY,\bHtt,\bHtr}\Ind_{\calT_\epsilon^{s,r}}}\\
&  = \bE_{\mu^\epsilon_{s\times r}}\pp{J^\epsilon\p{\bY,\bHtt,\bHtr}}- \bE_{\mu^\epsilon_{s\times r}}\pp{J^\epsilon\p{\bY,\bHtt,\bHtr}\Ind_{(\calT_\epsilon^{s,r})^c}},\label{lastexpecApp}
\end{align}
and on the other hand,
\begin{align}
\bE_{\mu_{s\times r}}\pp{J\p{\bY,\bHtt,\bHtr}\Ind_{\calT_\epsilon^{s,r}}}&\geq \bE_{\mu^{-\epsilon}_{s\times r}}\pp{J^{-\epsilon}\p{\bY,\bHtt,\bHtr}\Ind_{\calT_\epsilon^{s,r}}}\\
&  = \bE_{\mu^{-\epsilon}_{s\times r}}\pp{J^{-\epsilon}\p{\bY,\bHtt,\bHtr}}- \bE_{\mu^{-\epsilon}_{s\times r}}\pp{J^{-\epsilon}\p{\bY,\bHtt,\bHtr}\Ind_{(\calT_\epsilon^{s,r})^c}}.\label{lastexpecApp02}
\end{align}
Now, similarly as in \eqref{negdomin}, the last terms of \eqref{lastexpecApp} and \eqref{lastexpecApp02} tend to zero as $n,k\to\infty$. Thus,
\begin{align}
\bE\ppp{\bE_{\mu^{-\epsilon}_{s\times r}}\pp{J^{-\epsilon}\p{\bY,\bHtt,\bHtr}}}-o(1)&\leq\bE\ppp{\bE_{\mu_{s\times r}}\pp{J\p{\bY,\bHtt,\bHtr}\Ind_{\calT_\epsilon^{s,r}}}}\nonumber\\
&\ \ \ \ \ \ \ \ \ \ \ \ \leq\bE\ppp{\bE_{\mu^\epsilon_{s\times r}}\pp{J^\epsilon\p{\bY,\bHtt,\bHtr}}}+o(1).\label{lastexpecApp2}
\end{align} 
Our next objective is to analyze the asymptotic behavior of the terms at the l.h.s. and the r.h.s. of \eqref{lastexpecApp2}. Let
$$
\mathscr{Z}\p{\by,\bH}\triangleq\sum_{\bs\in\ppp{0,1}^n}\sum_{\br\in\ppp{0,1}^n}P_{\bSt}(\bs)P_{\bSt}(\br)J\p{\by,\bHtt,\bHtr}\xi\p{\by,\bHtt}\xi\p{\by,\bHtr}.
$$
We denote
\begin{align}
\frac{\beta^2}{2}f_n &= \frac{\beta^3\sigma^4b^2\p{m_{\bst}}m_{\bst}^2}{2g^2\p{m_{\bst}}}\frac{\norm{\by}^2}{n}+\frac{\beta^2\sigma^2b\p{m_{\bst}}}{2g^2\p{m_{\bst}}}\frac{\norm{\bHtt^T\by}^2}{n}\nonumber\\
&\triangleq V\p{m_{\bst}}\frac{\norm{\by}^2}{n}+L\p{m_{\bst}}\frac{\sum_{i=1}^n\abs{\by^T\bh_i}^2s_i}{n},
\end{align}
and $q_\epsilon(\bs,\br)\triangleq \beta^2q_n+\epsilon$.
Using \eqref{lastexpecApp2}, for large $n$ and $k$, the function $\mathscr{Z}\p{\by,\bH}$ is lower and upper bounded as follows
\begin{align}
\mathscr{Z}_{-\epsilon}\p{\by,\bH}\leq \mathscr{Z}\p{\by,\bH}\leq \mathscr{Z}_{\epsilon}\p{\by,\bH}
\end{align}
where
\begin{align}
&\mathscr{Z}_{\epsilon}\p{\by,\bH}\triangleq\sum_{\bs\in\ppp{0,1}^n}\sum_{\br\in\ppp{0,1}^n}q_{\epsilon}\p{\bs,\br}\exp\left\{n\left(\tilde{t}\p{m_{\bst}}+\tilde{t}\p{m_{\brt}}+L\p{m_{\bst}}\frac{1}{n}\sum_{i=1}^n\abs{\by^T\bh_i}^2s_i\right.\right.\nonumber\\
&\ \ \ \ \ \ \ \ \ \ \ \ \ \ \ \ \ \ \ \ \ \ \ \ \ \ \ \ \ \ \ \ \ \ \ \ \ \ \ \ \ \ \ \ \ \ \ \ \ \ \ \ \left.\left.+L\p{m_{\brt}}\frac{1}{n}\sum_{i=1}^n\abs{\by^T\bh_i}^2r_i+\epsilon\right)\right\}
\label{partplusminus}
\end{align}
in which
\begin{align}
\tilde{t}\p{m} \triangleq f\p{m}-\frac{m}{2}\bar{I}\p{m}+V\p{m}\frac{\norm{\by}^2}{n}.
\end{align}
Based on \eqref{partplusminus}, we need to handle a double summation (over $\bs$ and $\br$). We first assess the exponential order of the sum over $\br$. First, we rewrite $\mathscr{Z}_{\epsilon}\p{\by,\bH}$ as follows
\begin{align}
\mathscr{Z}_{\epsilon}\p{\by,\bH}&=\sum_{\bs\in\ppp{0,1}^n}\exp\ppp{n\p{\tilde{t}\p{m_{\bst}}+L\p{m_{\bst}}\frac{1}{n}\sum_{i=1}^n\abs{\by^T\bh_i}^2s_i}}\nonumber\\
&\ \ \ \ \ \ \ \ \ \sum_{\br\in\ppp{0,1}^n}q_{\epsilon}\p{\bs,\br}\exp\left\{n\left(\tilde{t}\p{m_{\brt}}+L\p{m_{\brt}}\frac{1}{n}\sum_{i=1}^n\abs{\by^T\bh_i}^2r_i+\epsilon\right)\right\}\label{B10}\\
& \triangleq \sum_{\bs\in\ppp{0,1}^n}\exp\ppp{n\p{\tilde{t}\p{m_{\bst}}+L\p{m_{\bst}}\frac{1}{n}\sum_{i=1}^n\abs{\by^T\bh_i}^2s_i}}\tilde{\mathscr{Z}}_{\epsilon}\p{\by,\bH,\bs}
\label{partplusminus2}
\end{align}
where
\begin{align}
\tilde{\mathscr{Z}}_{\epsilon}\p{\by,\bH,\bs} \triangleq \sum_{\br\in\ppp{0,1}^n}q_{\epsilon}\p{\bs,\br}\exp\ppp{n\p{\tilde{t}\p{m_{\brt}}+L\p{m_{\brt}}\frac{1}{n}\sum_{i=1}^n\abs{\by^T\bh_i}^2r_i+\epsilon}}.
\end{align}
Now, $\tilde{\mathscr{Z}}_{\epsilon}\p{\by,\bH,\bs}$ can be equivalently rewritten as
\begin{align}
\tilde{\mathscr{Z}}_{\epsilon}\p{\by,\bH,\bs}  = \sum_{m_r}\exp\ppp{n\p{\tilde{t}\p{m_r}+\epsilon}}\hat{\mathscr{Z}}_\epsilon\p{\by,\bH,\bs,m_r}
\label{partplusemin}
\end{align}
where the summation is over $m_r\in\ppp{0/n,1/n,\ldots,n/n}$, and
\begin{align}
\hat{\mathscr{Z}}_\epsilon\p{\by,\bH,\bs,m_r}\triangleq\sum_{\brt:\;m_{\brt} = m_r}q_{\epsilon}\p{\bs,\br}\exp\p{L\p{m_r}\sum_{i=1}^n\abs{\by^T\bh_i}^2r_i}\label{Zparthat}
\end{align}
where with slight abuse of notation, the summation is performed over sequences $\ppp{\br}$ with magnetization, $m_{\brt}=n^{-1}\sum_{i=1}^nr_i$, fixed to $m_r$. For conciseness we omit the dependency of the above terms on $\epsilon$. 
 
We next assess the asymptotic behavior of $\hat{\mathscr{Z}}\p{\by,\bH,\bs,m_r}$, and then the asymptotic behavior of $\tilde{\mathscr{Z}}\p{\by,\bH,\bs}$. For $\hat{\mathscr{Z}}\p{\by,\bH,\bs,m_r}$, we need to count the number of binary sequences $\ppp{\br}$, having a given magnetization $m_r$, and are subject to some linear constraints (finite number of them). Accordingly, consider the following set
\begin{align}
\mathcal{F}_\delta\p{\ppp{\rho_l}_{l=1}^L,m}\triangleq\ppp{\bv\in\ppp{0,1}^n:\;\abs{\sum_{i=1}^nv_i-nm}\leq\delta, \;\abs{\sum_{i=1}^nv_iu_{i,l}-n\rho_l}\leq\delta,\;l=1,\ldots,L}
\label{fcals}
\end{align}
where $L\in\mathbb{N}$ is fixed, and $\ppp{u_{i,l}}_{i=1}^n$ for $l=1,\ldots,L$, are given sequences of real numbers. We will upper and lower bound the cardinality of $\mathcal{F}_\delta\p{\ppp{\rho_l}_{l=1}^L,m}$ for a given $\delta>0$, $m$, and $\ppp{\rho_l}_{l=1}^L$. Then, we will use the result in order to approximate $\hat{\mathscr{Z}}\p{\by,\bH,\bs,m_r}$. 
\begin{lemma}
The cardinality of $\mathcal{F}_\delta\p{\ppp{\rho_l}_{l=1}^L,m}$ satisfies, for any $\tau>0$,
\begin{align}
(1-\tau)R_{-\delta}\leq\abs{\mathcal{F}_\delta\p{\ppp{\rho_l}_{l=1}^L,m}}\leq R_{\delta}
\end{align}
where 
\begin{align}
R_{\delta}&\triangleq\exp\left\{\frac{1}{2}\p{\sum_{l=1}^L\alpha_l^\circ \sum_{i=1}^nu_{i,l}-n\gamma^\circ}-\p{\sum_{l=1}^L\alpha_l^\circ\p{n\rho_l-\delta}-\gamma^\circ\p{nm-\delta}}\right.\nonumber\\
&\left.\ \ \ \ \ \ \ \ \ \ +\sum_{i=1}^n\log\pp{2\cosh\p{\frac{\sum_{l=1}^L\alpha_l^\circ u_{i,l}-\gamma^\circ}{2}}}\right\},
\end{align}
and $\ppp{\ppp{\alpha^\circ_l}_{l=1}^L,\gamma^\circ}$ are given by the solution of the following set of equations
\begin{align}
\rho_l = \frac{\delta}{n}+\frac{1}{2n}\sum_{i=1}^nu_{i,l}+\frac{1}{2n}\sum_{i=1}^n\tanh\p{\frac{\sum_{l=1}^L\alpha_l^\circ u_{i,l}-\gamma^\circ}{2}}u_{i,l},\ \ l=1,\ldots,L,
\end{align}
and
\begin{align}
m = \frac{\delta}{n}+\frac{1}{2}+\frac{1}{2n}\sum_{i=1}^n\tanh\p{\frac{\sum_{l=1}^L\alpha_l u_{i,l}-\gamma^\circ}{2}}.
\end{align}
\end{lemma}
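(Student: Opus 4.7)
The proof follows a standard large-deviations template applied to the counting problem: a Chernoff-type exponential upper bound obtained by introducing Lagrange multipliers for the linear constraints defining $\mathcal{F}_\delta$, matched by a lower bound obtained via an exponential change of measure (Gibbs tilting). The critical-point equations stated in the lemma are precisely the saddle-point equations produced by optimizing the Chernoff bound over the multipliers.

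For the upper bound, observe that for any $\alpha_l,\gamma\in\mathbb{R}$ (with signs chosen to align with the appropriate one-sided half of each constraint), every $\bv\in\mathcal{F}_\delta$ satisfies
\begin{align*}
1\leq\exp\!\left(\sum_{l=1}^L\alpha_l\!\left[\sum_i v_iu_{i,l}-(n\rho_l-\delta)\right]-\gamma\!\left[\sum_iv_i-(nm-\delta)\right]\right).
\end{align*}
Enlarging the sum from $\mathcal{F}_\delta$ to $\{0,1\}^n$ and factorizing coordinate-wise gives
\begin{align*}
|\mathcal{F}_\delta|\leq e^{-\sum_l\alpha_l(n\rho_l-\delta)+\gamma(nm-\delta)}\prod_{i=1}^n\!\left(1+e^{\sum_l\alpha_lu_{i,l}-\gamma}\right).
\end{align*}
Applying the identity $1+e^x=2\cosh(x/2)\,e^{x/2}$ to each factor places this expression in exactly the form of $R_\delta$ as a function of $(\alpha,\gamma)$. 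The resulting bound is convex in $(\alpha,\gamma)$, since its Hessian is a nonnegative combination of rank-one matrices weighted by $\mathrm{sech}^2(\cdot)$, so its unique minimum is attained at the critical point where $\partial/\partial\alpha_l$ and $\partial/\partial\gamma$ vanish. A short computation shows these two stationarity conditions reduce precisely to the saddle-point equations stated in the lemma, yielding $|\mathcal{F}_\delta|\leq R_\delta$.

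For the lower bound, let $(\alpha_l^\circ,\gamma^\circ)$ solve the saddle-point equations with $-\delta$ in place of $\delta$, and introduce the product measure $Q(\bv)=\prod_i p_i^{v_i}(1-p_i)^{1-v_i}$ with $p_i=\tfrac12\bigl(1+\tanh((\sum_l\alpha_l^\circ u_{i,l}-\gamma^\circ)/2)\bigr)$. Under $Q$ the $v_i$ are independent Bernoulli variables satisfying $\mathbb{E}_Q[\sum_iv_iu_{i,l}]=n\rho_l$ and $\mathbb{E}_Q[\sum_iv_i]=nm$ by construction, so $\bv$ lies at the center of every constraint. Boundedness of the $\{u_{i,l}\}$ together with Chebyshev's inequality then yields $Q(\mathcal{F}_\delta)\geq 1-\tau$ for $n$ sufficiently large. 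Inverting the tilt,
\begin{align*}
|\mathcal{F}_\delta|=\sum_{\bv\in\mathcal{F}_\delta}1\geq(1-\tau)\min_{\bv\in\mathcal{F}_\delta}Q(\bv)^{-1},
\end{align*}
and evaluating the worst-case reciprocal weight using the $\delta$-tolerance on each constraint reproduces precisely the product structure of $R_{-\delta}$.

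The principal obstacle is the careful bookkeeping of the $\pm\delta$ shifts: the upper bound, optimized over multipliers, naturally produces $R_{+\delta}$, whereas the tilted-measure lower bound requires the saddle-point parameters of the $-\delta$ equations in order to produce $R_{-\delta}$. A secondary technical point is that $\delta$ must grow sufficiently fast in $n$ (e.g.\ $\delta\gg\sqrt{n}$) for Chebyshev concentration under $Q$ to succeed; in weaker regimes a local-limit-theorem or Paley--Zygmund second-moment argument would be needed instead. Existence and uniqueness of the saddle point itself is automatic from the strict convexity of $\log R_\delta$, provided $(\{\rho_l\},m)$ lies in the interior of the feasibility polytope determined by the $\{u_{i,l}\}$.
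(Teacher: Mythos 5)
Your proposal is correct and follows essentially the same route as the paper: a Chernoff-type upper bound obtained by exponentially tilting with multipliers $(\{\alpha_l\},\gamma)$, relaxing the sum to all of $\{0,1\}^n$ and factorizing (the paper phrases this identically but through a normalized product measure whose total mass is at most one), and a lower bound via the matching change of measure together with concentration of the tilted constraints. Your side remark that the concentration step requires $\delta$ to grow with $n$ (e.g.\ $\delta\gg\sqrt{n}$, or a local-limit argument otherwise) is a legitimate point of care that the paper glosses over by invoking the SLLN at a scale where the stated $\pm\delta$ tolerance is too tight.
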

\begin{proof}
Define
\begin{align}
P\p{v_i;\ppp{\alpha_l}_{l=1}^L,\gamma\vert \ppp{u_{i,l}}_{l=1}^L} \triangleq \frac{\exp\ppp{\sum_{l=1}^L\alpha_l v_iu_{i,l}-\gamma v_i}}{2\exp\ppp{\frac{1}{2}\p{\sum_{l=1}^L\alpha_l u_{i,l}-\gamma}}\cosh\p{\frac{\sum_{l=1}^L\alpha_l u_{i,l}-\gamma}{2}}}
\end{align}
where $\ppp{\alpha_l}_{l=1}^L$ and $\gamma$ are auxiliary parameters. Now, for $\bv = \p{v_1,\ldots,v_n}$, let
\begin{align}
P\p{\bv;\ppp{\alpha_l}_{l=1}^L,\gamma\vert \ppp{\bu_l}_{l=1}^L} \triangleq \frac{\exp\ppp{\sum_{l=1}^L\alpha_l\sum_{i=1}^nv_iu_{i,l}-\gamma\sum_{i=1}^nv_i}}{2^n\exp\ppp{\frac{1}{2}\p{\sum_{l=1}^L\alpha_l \sum_{i=1}^nu_{i,l}-n\gamma}}\prod_{i=1}^n\cosh\p{\frac{\sum_{l=1}^L\alpha_l u_{i,l}-\gamma}{2}}}.
\label{condmeas}
\end{align}
Then, we have that 
\begin{align}
1 &\geq\pr\p{\bv\in\mathcal{F}_\delta\p{\rho,m};\ppp{\alpha_l}_{l=1}^L,\gamma\vert \ppp{\bu_l}_{l=1}^L}\\
& = \sum_{\bv\in\mathcal{F}_\delta}\frac{\exp\ppp{\sum_{l=1}^L\alpha_l\sum_{i=1}^nv_iu_{i,l}-\gamma\sum_{i=1}^nv_i}}{2^n\exp\ppp{\frac{1}{2}\p{\sum_{l=1}^L\alpha_l \sum_{i=1}^nu_{i,l}-n\gamma}}\prod_{i=1}^n\cosh\p{\frac{\sum_{l=1}^L\alpha_l u_{i,l}-\gamma}{2}}}\\
& \geq \sum_{\bv\in\mathcal{F}_\delta}\frac{\exp\ppp{\sum_{l=1}^L\alpha_l\p{n\rho_l-\delta}-\gamma\p{nm-\delta}}}{2^n\exp\ppp{\frac{1}{2}\p{\sum_{l=1}^L\alpha_l \sum_{i=1}^nu_{i,l}-n\gamma}}\prod_{i=1}^n\cosh\p{\frac{\sum_{l=1}^L\alpha_l u_{i,l}-\gamma}{2}}}\\
& = \abs{\mathcal{F}_\delta\ppp{\p{\rho_l}_{l=1}^L,m}}\frac{\exp\ppp{\sum_{l=1}^L\alpha_l\p{n\rho_l-\delta}-\gamma\p{nm-\delta}}}{2^n\exp\ppp{\frac{1}{2}\p{\sum_{l=1}^L\alpha_l \sum_{i=1}^nu_{i,l}-n\gamma}}\prod_{i=1}^n\cosh\p{\frac{\sum_{l=1}^L\alpha_l u_{i,l}-\gamma}{2}}}.
\label{ap2}
\end{align}
It is easy to verify that $\ppp{\ppp{\alpha^\circ_l}_{l=1}^L,\gamma^\circ}$ given by the solution of the following set of equations
\begin{align}
\rho_l = \frac{\delta}{n}+\frac{1}{2n}\sum_{i=1}^nu_{i,l}+\frac{1}{2n}\sum_{i=1}^n\tanh\p{\frac{\sum_{l=1}^L\alpha_l^\circ u_{i,l}-\gamma^\circ}{2}}u_{i,l},\ \ l=1,\ldots,L,
\label{saddleeq1}
\end{align}
and
\begin{align}
m = \frac{\delta}{n}+\frac{1}{2}+\frac{1}{2n}\sum_{i=1}^n\tanh\p{\frac{\sum_{l=1}^L\alpha_l u_{i,l}-\gamma^\circ}{2}},
\label{saddleeq2}
\end{align}
maximize the right hand side of \eqref{ap2} (w.r.t. $\ppp{\alpha}_{l=1}^L$ and $\gamma$). Thus, using the last results, we have the following upper bound
\begin{align}
&\abs{\mathcal{F}_\delta\p{\ppp{\rho_l}_{l=1}^L,m}}\leq\frac{\exp\ppp{\frac{1}{2}\p{\sum_{l=1}^L\alpha_l^\circ \sum_{i=1}^nu_{i,l}-n\gamma^\circ}}\prod_{i=1}^n2\cosh\p{\frac{\sum_{l=1}^L\alpha_l^\circ u_{i,l}-\gamma^\circ}{2}}}{\exp\ppp{\sum_{l=1}^L\alpha_l^\circ\p{n\rho_l-\delta}-\gamma^\circ\p{nm-\delta}}}\nonumber\\
&=\exp\left\{\frac{1}{2}\p{\sum_{l=1}^L\alpha_l^\circ \sum_{i=1}^nu_{i,l}-n\gamma^\circ}-\p{\sum_{l=1}^L\alpha_l^\circ\p{n\rho_l-\delta}-\gamma^\circ\p{nm-\delta}}\right.\nonumber\\
&\left.\ \ \ \ \ \ \ \ \ \ +\sum_{i=1}^n\log\pp{2\cosh\p{\frac{\sum_{l=1}^L\alpha_l^\circ u_{i,l}-\gamma^\circ}{2}}}\right\}\\
& \triangleq R_{\delta}.
\label{eeqq1}
\end{align}
For a lower bound, we first note that
\begin{align}
1 &= \pr\p{\bv\in\mathcal{F}_\delta\p{\ppp{\rho_l}_{l=1}^L,m};\ppp{\alpha_l}_{l=1}^L,\gamma\vert \ppp{\bu_l}_{l=1}^L}\nonumber\\
&\ \ \ +\pr\p{\bv\in\mathcal{F}^c_\delta\p{\ppp{\rho_l}_{l=1}^L,m};\ppp{\alpha_l}_{l=1}^L,\gamma\vert \ppp{\bu_l}_{l=1}^L}\\
&\leq\abs{\mathcal{F}_\delta\p{\ppp{\rho_l}_{l=1}^L,m}}\frac{1}{R_{-\delta}}+ \pr\p{\bv\in\mathcal{F}^c_\delta\p{\ppp{\rho_l}_{l=1}^L,m};\ppp{\alpha_l}_{l=1}^L,\gamma\vert \ppp{\bu_l}_{l=1}^L}
\end{align}
where the last inequality follows by the same considerations we have used for obtaining \eqref{ap2} (but now with $\delta$ instead of $-\delta$). Using Boole's inequality,
\begin{align}
&\pr\p{\bv\in\mathcal{F}^c_\delta\p{\ppp{\rho_l}_{l=1}^L,m};\ppp{\alpha_l}_{l=1}^L,\gamma\vert \ppp{\bu_l}_{l=1}^L}\leq\pr\p{\bv:\;\abs{\sum_{i=1}^nv_i-nm}>\delta;\ppp{\alpha_l}_{l=1}^L,\gamma\vert \ppp{\bu_l}_{l=1}^L}\nonumber\\
&\ \ \ \ \ \ \ \ \ \ \ \ \ \ \ \ \ \ \ \ \ \ \ \ +\pr\p{\bv:\;\abs{\sum_{i=1}^nv_iu_{i,l}-n\rho_l}>\delta,\;l=1,\ldots,L;\ppp{\alpha_l}_{l=1}^L,\gamma\vert \ppp{\bu_l}_{l=1}^L}.
\label{twomeasures}
\end{align}
It is easy to verify that the parameters $\ppp{\alpha_l}_{l=1}^L$ and $\gamma$ that are solving the following the following equations
\begin{align}
&\bE\ppp{\frac{1}{n}\sum_{i=1}^nv_iu_{i,l}\Biggm\vert\ppp{\bu_l}_{l=1}^L} = \rho_l, \ l=1,\ldots,L,
\end{align}
and
\begin{align}
&\bE\ppp{\frac{1}{n}\sum_{i=1}^nv_i\Biggm\vert\ppp{\bu_l}_{l=1}^L} = m
\end{align}
where the expectation is taken w.r.t. the conditional distribution \eqref{condmeas}, are also maximizing the conditional distribution (maximum-likelihood)\footnote{Essentially, this follows from the fact that \eqref{condmeas} maintains all the sufficient statistics induced by $\mathcal{F}_\delta(\ppp{\rho_l}_{l=1}^L,m)$.}. Therefore, using the strong law of large numbers (SLLN), the two terms on the right hand side of \eqref{twomeasures} are negligible as $n\to\infty$, namely,
\begin{align}
\pr\p{\bv\in\mathcal{F}^c_\delta\p{\ppp{\rho_l}_{l=1}^L,m};\alpha,\gamma\vert \ppp{\bu_l}_{l=1}^L}&\leq \tau
\label{twomeasures2}
\end{align}
for any $\tau>0$. Thus,
\begin{align}
\abs{\mathcal{F}_\delta\p{\ppp{\rho_l}_{l=1}^L,m}}\geq\p{1-\tau}R_{-\delta}.
\label{eeqq2}
\end{align} 
Whence, \eqref{eeqq1} and \eqref{eeqq2} provide tight (as $\delta\to0$) upper and lower bounds on cardinality of $\mathcal{F}_\delta\p{\ppp{\rho_l}_{l=1}^L,m}$.
\end{proof}

Returning to our problem, we will use the above result in order to find an asymptotic estimate of $\hat{\mathscr{Z}}\p{\by,\bH,\bs,m_r}$ in \eqref{Zparthat}. 
Recall that (see, \eqref{g_Nterm}) $q\p{\bs,\br}$ depends on $\bs,\br$ only through $m_s$, $m_r$, $\sum_{i=1}^n\abs{\by^T\bh_i}^2r_i$, $m_{s,r}$, $\sum_{i=1}^n\abs{\by^T\bh_i}^2s_ir_i,$ and $\sum_{i=1}^n\abs{\by^T\bh_i}^2s_i$. Accordingly, let
\begin{align}
q\p{\bs,\br} = \tilde q\p{m_s,m_r,\sum_{i=1}^n\abs{\by^T\bh_i}^2r_i,m_{s,r},\sum_{i=1}^n\abs{\by^T\bh_i}^2s_ir_i,\sum_{i=1}^n\abs{\by^T\bh_i}^2s_i}.
\label{recallq}
\end{align}
In accordance to the notations used in the definition of $\mathcal{F}_\delta(\ppp{\rho_l}_{l=1}^L,m)$ in \eqref{fcals}, define $u_{i,1} \triangleq \abs{\by^T\bh_i}^2$, $u_{i,2} \triangleq s_i$, and $u_{i,3} \triangleq \abs{\by^T\bh_i}^2s_i$, i.e., the coefficients of the terms which depend on $\br$ (recall \eqref{recallq}). Now, the main observation here is that $\hat{\mathscr{Z}}\p{\by,\bH,\bs,m_r}$ can be represented as 
\begin{align}
\hat{\mathscr{Z}}\p{\by,\bH,\bs,m_r} = 2^n\int_{\calD\subset\mathbb{R}^3}\tilde q\p{m_s,m_r,\rho_1,\rho_2,\rho_3,\sum_{i=1}^n\abs{\by^T\bh_i}^2s_i}\exp\p{nL\p{m_r}\rho_1}\mathscr{C}_n\p{\mathrm{d}\rho_1,\mathrm{d}\rho_2,\mathrm{d}\rho_3}
\label{rhoesin}
\end{align}
where $\calD$ is the codomain\footnote{Note that we do not need to explicitly define $\calD$ simply due to the fact that the exponential term in \eqref{rhoesin} is concave (see \eqref{convexratemax}), and thus the dominating $\rho_1,\rho_2,\rho_3$ are the same over $\calD$ or over $\mathbb{R}^3$.} of $\p{\rho_1,\rho_2,\rho_3}$, and $\ppp{\mathscr{C}_n}$ is a sequence of probability measures that are proportional to the number of sequences $\br$ with $\sum_{i=1}^nr_iu_{i,j}\approx n\rho_j$ for $j=1,2,3$, and $\sum_{i=1}^nr_i\approx nm_r$. These probability measures satisfy the large deviations principle (LDP) \cite[Ch. 2]{Dembo}, with the following lower semi-continuous rate function
\begin{align}
I\p{\rho_1,\rho_2,\rho_3} = 
\begin{cases}
\log 2-\frac{1}{n}\log R_{0}, &\text{if}\;\ppp{\rho_l}_{l=1}^3\in\calD\\
\infty, &\text{else}
\end{cases}
\end{align}
where $R_{0} \triangleq \lim_{\delta\to0}R_\delta$ is given in \eqref{eeqq1}. Indeed, by definition, the probability measure $\mathscr{C}_n$ is the ratio between $\abs{\mathcal{F}_\delta\p{\ppp{\rho_l}_{l=1}^3,m_r}}$ and $2^n$ (the number of possible sequences). Thus, for any Borel set $\calB\subset\calD$, $\lim_{n\to\infty}\frac{1}{n}\log\mathscr{C}_n\p{\calB} = -I\p{\rho_1,\rho_2,\rho_3}$. Accordingly, due to its large deviations properties, applying Varadhan's theorem \cite[Ch. 4.3]{Dembo} on \eqref{rhoesin}, one obtains
\begin{align}
\hat{\mathscr{Z}}\p{\by,\bH,\bs,m_r} &\sim P_n\cdot \tilde q\p{m_s,m_r,\rho_1^\circ,\rho_2^\circ,\rho_3^\circ,\sum_{i=1}^n\abs{\by^T\bh_i}^2s_i}\nonumber\\
&\ \ \ \ \ \ \ \times\exp\ppp{n\p{\log 2+L\p{m_r}\rho_1^\circ-I\p{\ppp{\rho_l^\circ}_{l=1}^3}}}\label{midLapVar}
\end{align}
where $\ppp{\rho^\circ_l}_{l=1}^3$ are given by (using the fact that the exponential term is convex)
\begin{align}
\p{\rho^\circ_1,\rho^\circ_2,\rho^\circ_3}& =\arg\max_{\rho_1,\rho_2,\rho_3\in\mathbb{R}}\ppp{\log 2+L\p{m_r}\rho_1-I\p{\ppp{\rho_l}_{l=1}^3}}\nonumber\\
&=\arg\max_{\rho_1,\rho_2,\rho_3\in\mathbb{R}}\ppp{L\p{m_r}\rho_1+\frac{1}{n}\log R_0},
\label{convexratemax}
\end{align}
and $P_n$ is a polynomial function of $n$, depending solely on the terms inside the exponent at the r.h.s. of \eqref{midLapVar}, namely, $P_n= P_n(m_r,\rho_1^\circ,\rho_2^\circ,\rho_3^\circ)$. We do not provide the explicit form of $P_n$, due to the fact that it will also appear in the normalization factor in \eqref{lastee3}, and thus, essentially, will be canceled. Continuing, the maximizers in \eqref{convexratemax} are the solutions of the following equations: $\rho_1^\circ$ is the solution of
\begin{align}
L\p{m_r}+\frac{1}{n}\frac{\partial}{\partial\rho_1}\log R_0 = 0,
\label{maxrho}
\end{align}
and $\rho_j^\circ$ for $j=2,3$, are the solutions of
\begin{align}
\frac{\partial}{\partial\rho_j}\log R_0 = 0.
\label{maxrho2}
\end{align}
We have that (for $i=1,2,3$)
\begin{align}
\frac{1}{n}\frac{\partial}{\partial\rho_i}\log R_0 &= \frac{1}{2n}\sum_{l=1}^3\frac{\partial\alpha_l^\circ}{\partial\rho_i}\sum_{i=1}^nu_{i,l}-\frac{1}{2}\frac{\partial\gamma^\circ}{\partial\rho_i}-\sum_{l=1}^3\rho_l\frac{\partial\alpha_l^\circ}{\partial\rho_i}-\alpha^\circ_i+m\frac{\partial\gamma^\circ}{\partial\rho_i}\nonumber\\
&\ \ +\frac{1}{2n}\sum_{i=1}^n\tanh\p{\frac{\sum_{l=1}^3\alpha_l^\circ u_{i,l}-\gamma^\circ}{2}}\pp{\sum_{l=1}^3u_{i,l}\frac{\partial\alpha_l^\circ}{\partial\rho_i}-\frac{\partial\gamma^\circ}{\partial\rho_i}}\\
& = -\alpha_i^\circ+\sum_{l=1}^3\frac{\partial\alpha_l^\circ}{\partial\rho_i}\pp{\frac{1}{2n}\sum_{i=1}^nu_{i,l}+\frac{1}{2n}\sum_{i=1}^n\tanh\p{\frac{\sum_{l=1}^3\alpha_l^\circ u_{i,l}-\gamma^\circ}{2}}u_{i,l}-\rho_l}\nonumber\\
&\ \ +\frac{\partial\gamma^\circ}{\partial\rho_i}\pp{m-\frac{1}{2}-\frac{1}{2n}\sum_{i=1}^n\tanh\p{\frac{\sum_{l=1}^3\alpha_l^\circ u_{i,l}-\gamma^\circ}{2}}},
\end{align}
and by using the saddle point equations \eqref{saddleeq1} and \eqref{saddleeq2}, the last two terms in the above equations vanish, and we remain with
\begin{align}
\frac{1}{n}\frac{\partial}{\partial\rho_i}\log R_0 &= -\alpha_i^\circ.
\end{align}
Thus, combined with \eqref{maxrho} and \eqref{maxrho2}, we conclude that $\alpha_1^\circ=L\p{m_r}$, and that $\alpha_2^\circ=\alpha_3^\circ = 0$. Accordingly, the exponential term in \eqref{convexratemax} boils down to
\begin{align}
L\p{m_r}\rho_1^\circ+\left.\frac{1}{n}\log R_0\right|_{\rho^\circ} &= L\p{m_r}\rho_1^\circ+\frac{1}{2n}\p{L\p{m_r}\sum_{i=1}^nu_{i,1}-n\gamma^\circ}-L\p{m_r}\rho_1^\circ+m_r\gamma^\circ\nonumber\\
&\ \ \ \ +\frac{1}{n}\sum_{i=1}^n\log\pp{2\cosh\p{\frac{L\p{m_r} u_{i,1}-\gamma^\circ}{2}}}\nonumber\\
&=m_r\gamma^\circ+\frac{1}{n}\sum_{i=1}^n\frac{L\p{m_r}u_{i,1}-\gamma^\circ}{2}+\frac{1}{n}\sum_{i=1}^n\log\pp{2\cosh\p{\frac{L\p{m_r} u_{i,1}-\gamma^\circ}{2}}}\nonumber\\
&\triangleq h\p{\delta^\circ,m_r}.
\label{hdefinre}
\end{align}
Hence, we obtained that (with the substitution of $u_{i,1} = \abs{\by^T\bh_i}^2$)
\begin{align}
&\hat{\mathscr{Z}}\p{\by,\bH,\bs,m_r} \sim P_n\cdot \tilde q\p{m_s,m_r,\rho_1^\circ,\rho_2^\circ,\rho_3^\circ,\sum_{i=1}^n\abs{\by^T\bh_i}^2s_i}\exp\p{nh\p{\gamma^\circ,m_r}}
\end{align}
where $\gamma^\circ,\ppp{\rho_l^\circ}_{l=1}^3$ solve the following set of equations (based on \eqref{saddleeq1} and \eqref{saddleeq2})
\begin{subequations}
\begin{align}
&m_r = \frac{1}{2n}\sum_{i=1}^n\pp{1+\tanh\p{\frac{L\p{m_r}\abs{\by^T\bh_i}^2-\gamma^\circ}{2}}},\\
&\rho_1^\circ = \frac{1}{2n}\sum_{i=1}^n\pp{1+\tanh\p{\frac{L\p{m_r}\abs{\by^T\bh_i}^2-\gamma^\circ}{2}}}\abs{\by^T\bh_i}^2,\\
&\rho_2^\circ = \frac{1}{2n}\sum_{i=1}^n\pp{1+\tanh\p{\frac{L\p{m_r}\abs{\by^T\bh_i}^2-\gamma^\circ}{2}}}s_i,\\
&\rho_3^\circ = \frac{1}{2n}\sum_{i=1}^n\pp{1+\tanh\p{\frac{L\p{m_r}\abs{\by^T\bh_i}^2-\gamma^\circ}{2}}}\abs{\by^T\bh_i}^2s_i.
\end{align}
\end{subequations}

Thus far, we have approximated $\hat{\mathscr{Z}}\p{\by,\bH,\bs,m_r}$. Recalling \eqref{partplusemin}, the next step in our analysis is to approximate $\tilde{\mathscr{Z}}\p{\by,\bH,\bs}$. Using the last approximation, and applying once again Varadhan's theorem (or simply, the Laplace method \cite{NeriMono,Bruijn}) on \eqref{partplusemin}, one obtains that
\begin{align}
&\tilde{\mathscr{Z}}\p{\by,\bH,\bs}  = \sum_{m_r}\exp\pp{n\p{\tilde{t}\p{m_r}}}\hat{\mathscr{Z}}\p{\by,\bH,\bs,m_r}\nonumber\\
&\sim \tilde{P}_n\cdot \tilde q\p{m_s,m_r^\circ,\rho_1^\circ\p{m_r^\circ},\rho_2^\circ\p{m_r^\circ,\bs},\rho_3^\circ\p{m_r^\circ,\bs},\sum_{i=1}^n\abs{\by^T\bh_i}^2s_i}\exp\ppp{n\p{h\p{\gamma^\circ,m_r^\circ}+\tilde{t}\p{m_r^\circ}}}
\label{asymZtildeti}
\end{align}
where $\tilde{P}_n = P_n(m_r^\circ,\rho_1^\circ,\rho_2^\circ,\rho_3^\circ)$, and the dominating $m_r^\circ$ is the saddle point, i.e., one of the solutions to the equation
\begin{align}
\frac{\partial}{\partial m}f\p{m}-\frac{1}{2}\bar{I}\p{m}-\frac{m}{2}\frac{\partial}{\partial m}\bar{I}\p{m}+\frac{1}{n}\frac{\partial}{\partial m}V\p{m}\frac{\norm{\by}^2}{n}+\frac{\partial}{\partial m}h\p{\gamma^\circ,m}=0
\label{sattleh}
\end{align}
where we have used the fact that $\tilde{t}\p{m} = f\p{m}-\frac{m}{2}\bar{I}\p{m}+V\p{m}\norm{\by}^2/n$. Simple calculations reveal that the derivative of $h\p{\gamma^\circ,m}$ w.r.t. $m$ is given by (note that $\gamma^\circ$ also depends on $m_r$)
\begin{align}
\frac{\partial}{\partial m}h\p{\gamma^\circ,m} & = \gamma^\circ+m\frac{\partial}{\partial m}\gamma^\circ+\frac{1}{n}\sum_{i=1}^n\frac{1}{2}\pp{\frac{\partial}{\partial m}L\p{m}u_{i,1}-\frac{\partial}{\partial m}\gamma^\circ}\nonumber\\
&+\frac{1}{n}\sum_{i=1}^n\tanh\p{\frac{L\p{m} u_{i,1}-\gamma^\circ}{2}}\frac{1}{2}\pp{\frac{\partial}{\partial m}L\p{m}u_{i,1}-\frac{\partial}{\partial m}\gamma^\circ}\\
& = \gamma^\circ+\frac{1}{2n}\sum_{i=1}^n\pp{1+\tanh\p{\frac{L\p{m}\abs{\by^T\bh_i}^2-\gamma^\circ}{2}}}\frac{\partial L\p{m}}{\partial m}\abs{\by^T\bh_i}^2\nonumber\\
&  +\frac{\partial}{\partial m}\gamma^\circ\pp{m-\frac{1}{2}-\frac{1}{2n}\sum_{i=1}^n\tanh\p{\frac{L\p{m} u_{i,1}-\gamma^\circ}{2}}},
\end{align}
but the last term in r.h.s. of the above equation is zero (due to \eqref{saddleeq2}), and thus
\begin{align}
\frac{\partial}{\partial m}h\p{\gamma^\circ,m}= \gamma^\circ+\frac{1}{2n}\sum_{i=1}^n\pp{1+\tanh\p{\frac{L\p{m}\abs{\by^T\bh_i}^2-\gamma^\circ}{2}}}\frac{\partial L\p{m}}{\partial m}\abs{\by^T\bh_i}^2.
\end{align}
Thus, substituting the last result in \eqref{sattleh},
\begin{align}
\gamma^\circ\p{m_r^\circ} =&-\frac{1}{2n}\sum_{i=1}^n\pp{1+\tanh\p{\frac{L\p{m_r^\circ}\abs{\by^T\bh_i}^2-\gamma^\circ}{2}}}\frac{\partial L\p{m_r^\circ}}{\partial m_r^\circ}\abs{\by^T\bh_i}^2-\frac{\partial}{\partial m_r^\circ}f\p{m_r^\circ}+\frac{1}{2}\bar{I}\p{m_r^\circ}\nonumber\\
&+\frac{m_r^\circ}{2}\frac{\partial}{\partial m_r^\circ}\bar{I}\p{m_r^\circ}-\frac{\partial}{\partial m_r^\circ}V\p{m_r^\circ}\frac{\norm{\by}^2}{n}.
\end{align}
So, hitherto, we obtained that the asymptotic behavior of $\tilde{\mathscr{Z}}\p{\by,\bH,\bs}$ is given by \eqref{asymZtildeti}, and the various dominating terms are given by
\begin{subequations}
\begin{align}
&\gamma^\circ\p{m_r^\circ} =-\frac{1}{2n}\sum_{i=1}^n\pp{1+\tanh\p{\frac{L\p{m_r^\circ}\abs{\by^T\bh_i}^2-\gamma^\circ}{2}}}\frac{\partial L\p{m_r^\circ}}{\partial m_r^\circ}\abs{\by^T\bh_i}^2-\frac{\partial}{\partial m_r^\circ}f\p{m_r^\circ}+\frac{1}{2}\bar{I}\p{m_r^\circ}\nonumber\\
&\ \ \ \ \ \ \ \ \ \ +\frac{m_r^\circ}{2}\frac{\partial}{\partial m_r^\circ}\bar{I}\p{m_r^\circ}-\frac{\partial}{\partial m_r^\circ}V\p{m_r^\circ}\frac{\norm{\by}^2}{n},\\
&m_r^\circ = \frac{1}{2n}\sum_{i=1}^n\pp{1+\tanh\p{\frac{L\p{m_r^\circ}\abs{\by^T\bh_i}^2-\gamma^\circ}{2}}},\\
&\rho_1^\circ = \frac{1}{2n}\sum_{i=1}^n\pp{1+\tanh\p{\frac{L\p{m_r^\circ}\abs{\by^T\bh_i}^2-\gamma^\circ}{2}}}\abs{\by^T\bh_i}^2,\\
&\rho_2^\circ = \frac{1}{2n}\sum_{i=1}^n\pp{1+\tanh\p{\frac{L\p{m_r^\circ}\abs{\by^T\bh_i}^2-\gamma^\circ}{2}}}s_i,\\
&\rho_3^\circ = \frac{1}{2n}\sum_{i=1}^n\pp{1+\tanh\p{\frac{L\p{m_r^\circ}\abs{\by^T\bh_i}^2-\gamma^\circ}{2}}}\abs{\by^T\bh_i}^2s_i.
\end{align}
\end{subequations}

This concludes the asymptotic analysis of the summation over $\br$ in \eqref{B10}. We now take care of the summation over $\bs$ in \eqref{partplusminus2}. Let
\begin{align}
\hat q\p{\bs}\triangleq \tilde q\p{m_{\bst},m_r^\circ,\rho_1^\circ\p{m_r^\circ},\rho_2^\circ\p{m_r^\circ,\bs},\rho_3^\circ\p{m_r^\circ,\bs},\sum_{i=1}^n\abs{\by^T\bh_i}^2s_i}.
\end{align}
Applying \eqref{asymZtildeti} on \eqref{partplusminus2},
\begin{align}
\mathscr{Z}\p{\by,\bH}&\sim \tilde{P}_n\cdot e^{\ppp{n\p{h\p{\gamma^\circ,m_r^\circ}+\tilde{t}\p{m_r^\circ}}}}\sum_{\bs\in\ppp{0,1}^n}\hat q\p{\bs}\exp\left\{n\left(\tilde{t}\p{m_{\bst}}+L\p{m_{\bst}}\frac{1}{n}\sum_{i=1}^n\abs{\by^T\bh_i}^2s_i\right)\right\}\nonumber\\
&\triangleq \tilde{P}_n\cdot e^{\ppp{n\p{h\p{\gamma^\circ,m_r^\circ}+\tilde{t}\p{m_r^\circ}}}}\sum_{m_s}\exp\p{n\tilde{t}\p{m_s}}\bar{\mathscr{Z}}\p{\by,\bH,m_s}\label{b57}
\end{align}
where as before
\begin{align}
\bar{\mathscr{Z}}\p{\by,\bH,m_s}\triangleq\sum_{\bst:\;m_{\bst} = m_s}\hat q\p{\bs}\exp\p{L\p{m_s}\sum_{i=1}^n\abs{\by^T\bh_i}^2s_i}.
\end{align}
However, $\bar{\mathscr{Z}}\p{\by,\bH,m_s}$ has essentially the same form of $\tilde{\mathscr{Z}}\p{\by,\bH,\bs,m_r}$, which we have analyzed earlier. So, using the same technique,
\begin{align}
\bar{\mathscr{Z}}\p{\by,\bH,m_s}\sim P_n\cdot \bar{q}\p{m_s}\exp\p{nh\p{\tilde{\gamma}^\circ,m_s}}
\end{align}
where $h\p{\tilde{\gamma}^\circ,m_s}$ is defined as in \eqref{hdefinre} (note that the exponential term is similar to the previous one), and
\begin{align}
\bar{q}\p{m_s}\triangleq \tilde q\p{m_s,m_r^\circ,\rho_1^\circ\p{m_r^\circ},\rho_2^\circ\p{m_r^\circ,m_s},\rho_3^\circ\p{m_r^\circ,m_s},\rho_4^\circ\p{m_s}},
\end{align}
in which $\tilde{\gamma}^\circ,\ppp{\rho_l^\circ}_{l=2}^4$ solve the following set of equations
\begin{subequations}
\begin{align}
&m_s = \frac{1}{2n}\sum_{i=1}^n\pp{1+\tanh\p{\frac{L\p{m_s}\abs{\by^T\bh_i}^2-\tilde{\gamma}^\circ}{2}}},\\
&\rho_2^\circ = \frac{1}{4n}\sum_{i=1}^n\pp{1+\tanh\p{\frac{L\p{m_r^\circ}\abs{\by^T\bh_i}^2-\gamma^\circ}{2}}}\pp{1+\tanh\p{\frac{L\p{m_s}\abs{\by^T\bh_i}^2-\tilde{\gamma}^\circ}{2}}},\\
&\rho_3^\circ = \frac{1}{4n}\sum_{i=1}^n\pp{1+\tanh\p{\frac{L\p{m_r^\circ}\abs{\by^T\bh_i}^2-\gamma^\circ}{2}}}\pp{1+\tanh\p{\frac{L\p{m_s}\abs{\by^T\bh_i}^2-\tilde{\gamma}^\circ}{2}}}\abs{\by^T\bh_i}^2\\
&\rho_4^\circ = \frac{1}{2n}\sum_{i=1}^n\pp{1+\tanh\p{\frac{L\p{m_s}\abs{\by^T\bh_i}^2-\tilde{\gamma}^\circ}{2}}}\abs{\by^T\bh_i}^2.
\end{align}
\end{subequations}

Finally, the summation over $m_s$ in \eqref{b57} is again estimated by using the Laplace method, and we similarly obtain
\begin{align}
\mathscr{Z}\p{\by,\bH}&\sim \tilde{P}_n^2\cdot \tilde q\p{m_s^\circ,m_r^\circ,\rho_1^\circ\p{m_r^\circ},\rho_2^\circ\p{m_r^\circ,m_s^\circ},\rho_3^\circ\p{m_r^\circ,m_s^\circ},\rho_4^\circ\p{m_s^\circ}}\nonumber\\
&\ \ \ \ \times\exp\ppp{n\p{h\p{\gamma^\circ,m_r^\circ}+h\p{\tilde{\gamma}^\circ,m_s^\circ}+\tilde{t}\p{m_r^\circ}+\tilde{t}\p{m_s^\circ}}}\label{lastssa}
\end{align}
where 
\begin{align}
&\gamma^\circ\p{m_r^\circ} =-\frac{1}{2n}\sum_{i=1}^n\pp{1+\tanh\p{\frac{L\p{m_r^\circ}\abs{\by^T\bh_i}^2-\gamma^\circ}{2}}}\frac{\partial L\p{m_r^\circ}}{\partial m_r^\circ}\abs{\by^T\bh_i}^2-\frac{\partial}{\partial m_r^\circ}f\p{m_r^\circ}+\frac{1}{2}\bar{I}\p{m_r^\circ}\nonumber\\
&\ \ \ \ \ \ \ \ \ \ \ \ \ +\frac{m_r^\circ}{2}\frac{\partial}{\partial m_r^\circ}\bar{I}\p{m_r^\circ}-\frac{\partial}{\partial m_r^\circ}V\p{m_r^\circ}\frac{\norm{\by}^2}{n},\nonumber\\
&\tilde{\gamma}^\circ\p{m_s^\circ} =-\frac{1}{2n}\sum_{i=1}^n\pp{1+\tanh\p{\frac{L\p{m_s^\circ}\abs{\by^T\bh_i}^2-\tilde{\gamma}^\circ}{2}}}\frac{\partial L\p{m_s^\circ}}{\partial m_s^\circ}\abs{\by^T\bh_i}^2-\frac{\partial}{\partial m_s^\circ}f\p{m_s^\circ}+\frac{1}{2}\bar{I}\p{m_s^\circ}\nonumber\\
&\ \ \ \ \ \ \ \ \ \ \ \ \ +\frac{m_s^\circ}{2}\frac{\partial}{\partial m_s^\circ}\bar{I}\p{m_s^\circ}-\frac{\partial}{\partial m_s^\circ}V\p{m_s^\circ}\frac{\norm{\by}^2}{n},\nonumber\\
&m_r^\circ = \frac{1}{2n}\sum_{i=1}^n\pp{1+\tanh\p{\frac{L\p{m_r^\circ}\abs{\by^T\bh_i}^2-\gamma^\circ}{2}}},\nonumber\\
&m_s^\circ = \frac{1}{2n}\sum_{i=1}^n\pp{1+\tanh\p{\frac{L\p{m_s^\circ}\abs{\by^T\bh_i}^2-\tilde{\gamma}^\circ}{2}}},\nonumber\\
&\rho_1^\circ = \frac{1}{2n}\sum_{i=1}^n\pp{1+\tanh\p{\frac{L\p{m_r^\circ}\abs{\by^T\bh_i}^2-\gamma^\circ}{2}}}\abs{\by^T\bh_i}^2,\\
&\rho_2^\circ = \frac{1}{4n}\sum_{i=1}^n\pp{1+\tanh\p{\frac{L\p{m_r^\circ}\abs{\by^T\bh_i}^2-\gamma^\circ}{2}}}\pp{1+\tanh\p{\frac{L\p{m_s^\circ}\abs{\by^T\bh_i}^2-\tilde{\gamma}^\circ}{2}}},\nonumber\\
&\rho_3^\circ = \frac{1}{4n}\sum_{i=1}^n\pp{1+\tanh\p{\frac{L\p{m_r^\circ}\abs{\by^T\bh_i}^2-\gamma^\circ}{2}}}\pp{1+\tanh\p{\frac{L\p{m_s^\circ}\abs{\by^T\bh_i}^2-\tilde{\gamma}^\circ}{2}}}\abs{\by^T\bh_i}^2\nonumber\\
&\rho_4^\circ = \frac{1}{2n}\sum_{i=1}^n\pp{1+\tanh\p{\frac{L\p{m_s^\circ}\abs{\by^T\bh_i}^2-\tilde{\gamma}^\circ}{2}}}\abs{\by^T\bh_i}^2.
\end{align}
Due to the symmetry between $\bs$ and $\br$, it can be seen that $m_s^\circ = m_r^\circ$, and whence the above set of equations reduce to
\begin{subequations}
\begin{align}
&\gamma^\circ =-\frac{1}{2n}\sum_{i=1}^n\pp{1+\tanh\p{\frac{L\p{m^\circ}\abs{\by^T\bh_i}^2-\gamma^\circ}{2}}}\frac{\partial L\p{m^\circ}}{\partial m^\circ}\abs{\by^T\bh_i}^2-\frac{\partial}{\partial m^\circ}f\p{m^\circ}+\frac{1}{2}\bar{I}\p{m^\circ}\nonumber\\
&\ \ \ \ \ \ \ \ \ \ \ \ \ +\frac{m^\circ}{2}\frac{\partial}{\partial m^\circ}\bar{I}\p{m^\circ}-\frac{\partial}{\partial m^\circ}V\p{m^\circ}\frac{\norm{\by}^2}{n},\label{saddlepoo1}\\
&m^\circ = \frac{1}{2n}\sum_{i=1}^n\pp{1+\tanh\p{\frac{L\p{m^\circ}\abs{\by^T\bh_i}^2-\gamma^\circ}{2}}},\label{saddlepoo2}\\
&\rho_1^\circ = \rho_4^\circ=\frac{1}{2n}\sum_{i=1}^n\pp{1+\tanh\p{\frac{L\p{m^\circ}\abs{\by^T\bh_i}^2-\gamma^\circ}{2}}}\abs{\by^T\bh_i}^2,\label{saddlepoo3}\\
&\rho_2^\circ = \frac{1}{4n}\sum_{i=1}^n\pp{1+\tanh\p{\frac{L\p{m^\circ}\abs{\by^T\bh_i}^2-\gamma^\circ}{2}}}^2,\label{saddlepoo4}\\
&\rho_3^\circ = \frac{1}{4n}\sum_{i=1}^n\pp{1+\tanh\p{\frac{L\p{m^\circ}\abs{\by^T\bh_i}^2-\gamma^\circ}{2}}}^2\abs{\by^T\bh_i}^2,\label{saddlepoo5}
\end{align}\label{saddlepoo}%
\end{subequations}
and by using \eqref{g_Nterm}
\begin{align}
\tilde q(m^\circ,\ppp{\rho_l^\circ}_{l=1}^3)=& \beta^2\frac{\alpha\p{m^\circ,\rho_2^\circ}}{g^2\p{m^\circ}}\rho_3^\circ-2\frac{\alpha\p{m^\circ,\rho_2^\circ}b\p{m^\circ}}{g^3\p{m^\circ}}\beta^3\sigma^2\rho_2^\circ\pp{\rho_1^\circ-m^\circ\frac{\norm{\by}^2}{n}},
\end{align}
where $\alpha\p{x,y} \triangleq \tilde\alpha\p{x,x,y}$. Finally,
\begin{align}
\mathscr{Z}\p{\by,\bH}&\sim \tilde{P}_n^2\cdot \tilde q(m^\circ,\ppp{\rho_l^\circ}_{l=1}^3)\exp\ppp{2n\pp{h\p{\gamma^\circ,m^\circ}+\tilde{t}\p{m^\circ}}}.\label{lastAsympComp}
\end{align}

Based on \eqref{lastee3}, we also need to find the asymptotic behavior of
\begin{align}
\sum_{\bst\in\ppp{0,1}^n}P_{\bSt}(\bs)\xi\p{\by,\bHtt}\label{exponsaddle},
\end{align}
However, obviously, the previous analyzed term can be regarded as an extended version of \eqref{exponsaddle}, and so we can immediately conclude that\footnote{As mentioned earlier (see, \eqref{convexratemax}), the polynomial term $\tilde{P}_n$ depends only on the exponential behavior of the summations, and thus, common to \eqref{first1sum}.}
\begin{align}
&\sum_{\bst\in\ppp{0,1}^n}P_{\bSt}(\bs)\xi_\epsilon\p{\by,\bHtt}\sim \tilde{P}_n\cdot\exp\ppp{n\p{h\p{\gamma^\circ,m^\circ}+\tilde{t}\p{m^\circ}}}.\label{first1sum}\end{align}
Indeed, recall that what we have analyzed above is
\begin{align}
\sum_{\bst\in\ppp{0,1}^n}\sum_{\brt\in\ppp{0,1}^n}P_{\bSt}(\bs)P_{\bSt}(\br)J\p{\by,\bHtt,\bHtr}\xi\p{\by,\bHtt}\xi\p{\by,\bHtr},\label{whatwedid}
\end{align}
and so, \eqref{exponsaddle} is just a special case of \eqref{whatwedid}, in which the summation is only over $\bs$ and without the leading term $J\p{\by,\bHtt,\bHtr}$. Whence, the asymptotic behavior of \eqref{exponsaddle} is affected only by $\xi\p{\by,\bHtt}$, which after multiplying by $P_{\bSt}(\bs)$ and summing over $\ppp{0,1}^n$, asymptotically behaves as the exponent at the r.h.s. of \eqref{lastssa} (of course, as we sum over $\bs$, only the terms related to $m_s^\circ$ prevail). 

Wrapping up, using \eqref{lastAsympComp} and \eqref{first1sum}, the asymptotic estimate of the inner term of the expectation in \eqref{lastee3} is given by
\begin{align}
g_n &\triangleq\frac{\sigma^2}{n}\sum_{i=1}^nS_i-\bE_{\mu_{s\times r}}\pp{J\p{\bY,\bHtt,\bHtr}\Ind_{\calT_\epsilon^{s,r}}}\\
&\asymp \sigma^2m_a- \tilde q(m^\circ,\ppp{\rho_l^\circ}_{l=1}^3)\\
& = \sigma^2m_a-\beta^2\frac{\alpha\p{m^\circ,\rho_2^\circ}}{g^2\p{m^\circ}}\rho_3^\circ+2\frac{\alpha\p{m^\circ,\rho_2^\circ}b\p{m^\circ}}{g^3\p{m^\circ}}\beta^3\sigma^2\rho_2^\circ\pp{\rho_1^\circ-m^\circ\p{m_a\sigma^2R+\frac{R}{\beta}}}\\
&\triangleq g_\infty.
\end{align}
Thus, we obtained that $g_n\to g_\infty$ a.s., as $n\to\infty$. In order to calculate the MMSE we will apply Lemma \ref{aplem:DCT2}. First, recall that
\begin{align}
g_n = \frac{1}{n}\sum_{i=1}^n\pp{\bE\ppp{X_i^2\vert\by,\bH}-\p{\bE\ppp{X_i\vert\by,\bH}}^2},
\end{align}
and thus, due to Jensens's inequality, $g_n$ is nonnegative for any $n$. Then, for any $\varepsilon>0$, using Cauchy-Schwartz and Chebyshev's inequalities, we get
 \begin{align}
\limsup_{n\to\infty}\bE\ppp{g_n\cdot\Ind_{g_n\geq c(\varepsilon)}}&\leq \limsup_{n\to\infty}\p{\bE g_n^2\cdot\Pr\ppp{g_n\geq c(\varepsilon)}}^{1/2}\\
&\leq \limsup_{n\to\infty}\p{\bE g_n^2}^{1/2}\p{\frac{\bE g_n}{c(\varepsilon)}}^{1/2}
\end{align}
where $c(\varepsilon)$ is a non-negative real. Now, by the definition of the MMSE, we know that $\bE g_n\leq\sigma^2$ and that
\begin{align}
\bE g_n^2&\leq \frac{1}{n}\bE\pp{\sum_{i=1}^n\pp{\bE\ppp{X_i^2\vert\by,\bH}-\p{\bE\ppp{X_i\vert\by,\bH}}^2}^2}\\
&\leq \frac{1}{n}\bE\pp{\sum_{i=1}^n\p{\bE\ppp{X_i^2\vert\by,\bH}}^2+\p{\bE\ppp{X_i\vert\by,\bH}}^4}\\
&\leq \frac{1}{n}\bE\pp{\sum_{i=1}^n\bE\ppp{X_i^4\vert\by,\bH}+\bE\ppp{X_i^4\vert\by,\bH}}\\
& = \frac{2}{n}\sum_{i=1}^n\bE\ppp{X_i^4}\leq 6\sigma^4,
\end{align}
where the first inequality follows from the fact that $(a_1+\ldots+a_n)^2\leq n\cdot(a_1^2+\ldots+a_n^2)$, the third inequality is due to Jensens's inequality, and in the last inequality we have used the fact that $n^{-1}\sum_i S_i\leq 1$ w.p. 1. Therefore,
\begin{align}
\limsup_{n\to\infty}\bE\ppp{g_n\cdot\Ind_{g_n\geq c(\varepsilon)}}&\leq \frac{\sqrt{6}\sigma^3}{c^{1/2}(\varepsilon)}=\varepsilon
\end{align}
where the last inequality follows by taking $c^{1/2}(\varepsilon) = \sqrt{6}\sigma^3/\varepsilon$. Thus, we can apply Lemma \ref{aplem:DCT2}, and obtain
\begin{align}
\lim_{n\to\infty}\frac{\text{mmse}\p{\bX\vert\bY,\bH}}{n}&=\sigma^2m_a-\beta^2\frac{\alpha\p{m^\circ,\rho_2^\circ}}{g^2\p{m^\circ}}\rho_3^\circ\nonumber\\
&\ \ \ +2\frac{\alpha\p{m^\circ,\rho_2^\circ}b\p{m^\circ}}{g^3\p{m^\circ}}\beta^3\sigma^2\rho_2^\circ\pp{\rho_1^\circ-m^\circ\p{m_a\sigma^2R+\frac{R}{\beta}}}.
\label{MMSEbeforeex}
\end{align}

Finally, we show a concentration property of the saddle point equations given in \eqref{saddlepoo}, and obtain ``instead" the saddle point equations given in \eqref{magnetddd}-\eqref{rhoa3}. Accordingly, the expectation in \eqref{MMSEbeforeex} becomes ``superfluous", as all the involved random variables ($m^\circ$ and $\ppp{\rho_i^\circ}_{i=1}^3$) converge to a deterministic quantity. According to \eqref{saddlepoo}, it can be seen that the saddle point equations share the following common term
\begin{align}
\frac{1}{n}\sum_{i=1}^n\phi\p{\abs{\bh_i^T\bY}^2}
\label{sumCLT}
\end{align}
where $\phi\p{\cdot}:\mathbb{R}\to\mathbb{R}$ is some integrable function (in the $L^1$ sense). In the following, we first show that \eqref{sumCLT} admits an SLLN property. To this end, let us define 
\begin{align}
T_n \triangleq \sum_{i=1}^nK_i,
\end{align}
where $K_i\triangleq \phi\p{\abs{\bh_i^T\bY}^2}$, and let $\calG_n = \sigma\p{\bX,\bW}\cap\sigma\p{T_n,T_{n+1},\ldots}$ be the $\sigma$-field (filtration) generated by $T_n$, $\ppp{K_i}_{i>n}$, $\bX$, and $\bW$. We will now show that $M_n\triangleq -\frac{T_{-n}}{n}$ is a backwards martingale sequence w.r.t. $\calF_n\triangleq \calG_{-n}$, $n\leq -1$. Indeed, for $m\leq -1$, we have that
\begin{align}
\bE\ppp{M_{m+1}\Biggm\vert\calF_m} = \bE\ppp{\frac{T_{-m-1}}{-m-1}\Biggm\vert\calG_{-m}}.
\end{align}
Setting $n=-m$, we see that
\begin{align}
\bE\ppp{\frac{T_{n-1}}{n-1}\Biggm\vert\calG_{n}} &= \bE\ppp{\frac{T_{n}-K_n}{n-1}\Biggm\vert\calG_{n}}\\
& = \frac{T_n}{n-1} - \bE\ppp{\frac{K_n}{n-1}\Biggm\vert\calG_{n}}
\label{concbe}
\end{align}
where we have used the fact that $T_n$ is measurable w.r.t. $\calG_{n}$. Now, we have that
\begin{align}
\bE\ppp{K_n\vert\calG_{n}\cap\sigma\p{\bY}} &= \bE\ppp{K_n\vert T_n,\bY,\sigma\p{\bX,\bW}}\\
&= \bE\ppp{K_j\vert T_n,\bY,\sigma\p{\bX,\bW}}
\label{symmetry}
\end{align}
for any $1\leq j\leq n$, where in the first equality we have used the facts that $\calG_n = \sigma\p{\bX,\bW}\cap\sigma\p{T_n,T_{n+1},\ldots} =  \sigma\p{\bX,\bW}\cap\sigma\p{T_n,K_{n+1},K_{n+2},\ldots}$, that $\bY = \sum_{i=1}^n\bh_iX_i+\bW$ and that $\ppp{\bh_i}$ are statistically independent, and the second equality follows due to $\bY = \bH\bX+\bW$, the symmetry of $T_n$ w.r.t. $K_1,\ldots, K_n$, and the fact that $\ppp{\bh_i}$ are statistically independent. Clearly,
\begin{align}
\sum_{i=1}^n\bE\ppp{K_i\vert T_n,\bY,\sigma\p{\bX,\bW}} &= \bE\ppp{\sum_{i=1}^nK_i\Biggm\vert T_n,\bY,\sigma\p{\bX,\bW}}\\
& = T_n,
\end{align} 
and thus, due to \eqref{symmetry}, we obtain that $\bE\ppp{K_n\vert\calG_{n}\cap\sigma\p{\bY}} = T_n/n$ a.s. Whence, using \eqref{concbe} and the last result, we obtain
\begin{align}
\bE\ppp{\frac{T_{n-1}}{n-1}\Biggm\vert\calG_{n}} &= \frac{T_n}{n-1} - \bE\ppp{\frac{K_n}{n-1}\Biggm\vert\calG_{n}}\\
& = \frac{T_n}{n-1} - \bE\ppp{\bE\ppp{\frac{K_n}{n-1}\Biggm\vert\calG_{n}\cap\sigma\p{\bY}}\Biggm\vert\calG_{n}}\\
& = \frac{T_n}{n-1} - \frac{T_n}{n\p{n-1}}  = \frac{T_n}{n},\ \ \ \ \ \text{a.s.}
\end{align}
This concludes the proof that $M_n$ is a backwards martingale sequence w.r.t. $\ppp{\calF_n}_{n\leq -1}$. Now, by the backwards martingale convergence theorem \cite{martingale,martingale2}, we deduce that $T_n/n$ converges as $n\to\infty$, and in $L^1$, to a random variable $K \triangleq \lim_{n\to\infty}T_n/n$. Obviously, for all $m$
\begin{align}
K = \lim_{n\to\infty}\frac{\tilde{K}_{m+1}+\ldots+\tilde{K}_{m+n}}{n},
\end{align}
where (due to the fact that $\ppp{\bh_i}_i$ are i.i.d.)
\begin{align}
\tilde{K}_{m+i} = \phi\p{\abs{\bh_{m+i}^T\p{\sum_{j=m+i}^{n+m+i}\bh_jX_j +\bW}}^2},\ \ \text{for}\ i=1,\ldots,n.
\end{align}
Thus $K$ is $\sigma\p{\bX,\bW}\cap\sigma\p{\bh_{m+1},\ldots}$-measurable, for all $m$, and hence it is also $\sigma\p{\bX,\bW}\cap\bigcap_{m}\sigma\p{\bh_{m+1},\ldots}$-measurable (namely, the tail $\sigma$-field generated by $\ppp{\bh_i}$ intersected with $\sigma\p{\bX,\bW}$). Thus, by the Kolmogorov's 0-1 law \cite{martingale}, we conclude that there exists a constant $C\in\mathbb{R}$ (w.r.t. $\sigma\p{\bX,\bW}$) such that $\pr\ppp{K=C\vert\sigma\p{\bX,\bW}} = 1$. This constant is obviously given by
\begin{align}
C = \bE\ppp{K\vert\sigma\p{\bX,\bW}} = \lim_{n\to\infty}\bE\ppp{\frac{T_n}{n}\Biggm\vert\sigma\p{\bX,\bW}}.
\end{align}
Thus, we have shown that
\begin{align}
\frac{1}{n}\sum_{i=1}^n\phi\p{\abs{\bh_i^T\bY}^2} - \frac{1}{n}\bE\ppp{\sum_{i=1}^n\phi\p{\abs{\bh_i^T\bY}^2}\Biggm\vert\bX,\bW}\to0,
\end{align}
a.s. as $n\to\infty$, namely, we show an SLLN property of \eqref{sumCLT}. Our next step is to infer the asymptotic behavior of each summand. First, we note that
\begin{align}
\bh_i^T\bY &= \bh_i^T\pp{\bH\bX}_i+ X_i\norm{\bh_i}^2 + \bh_i^T\bW
\end{align}  
where $\pp{\bH\bX}_i \triangleq \bH\bX-\bh_iX_i$. Let $\hat{\bX}_i$ be a new $n$-dimensional vector, such that its $i$th component is zero and the other components are identical to that of $\bX$. Similarly, let $\hat{\bH}_i$ denote a new matrix such that its $i$th column contains zeros, and the other columns are identical to those of $\bH$. Accordingly, let $\hat{\bz}_{i,j}$ denote the $j$th row of $\hat{\bH}_i$. With this notations, we have that $\pp{\bH\bX}_i = \hat{\bH}_i\hat{\bX}_i$. Thus,
\begin{align}
\bh_i^T\bY &= \sum_{j=1}^kH_{j,i}\pp{\hat{\bz}_{i,j}^T\hat{\bX}_i+W_j}+X_i\norm{\bh_i}^2\\
&= \frac{1}{\sqrt{n}}\sum_{j=1}^k\tilde{H}_{j,i}\pp{\hat{\bz}_{i,j}^T\hat{\bX}_i+W_j}+X_i\norm{\bh_i}^2.
\label{CLT1}
\end{align}
where $\tilde{H}_{i,j} \triangleq \sqrt{n}H_{i,j}$. Given $\bX$, by using Lyapunov's central limit theorem \cite{probability}, we may infer the following weak convergence
\begin{align}
\frac{1}{\sqrt{n}}\sum_{j=1}^k\tilde{H}_{j,i}\pp{\hat{\bz}_{i,j}^T\hat{\bX}_i+W_j} \stackrel{d}{\longrightarrow} \calN\p{0,Rm_a\sigma^2+\frac{R}{\beta}},
\label{CLT2}
\end{align}
as $n\to\infty$. Accordingly, let $\mathscr{Y}$ be the limit point in \eqref{CLT2}, namely, $\mathscr{Y}$ is distributed $\calN\p{0,m_a\sigma^2R+R/\beta}$. Therefore, based on \eqref{CLT1}, \eqref{CLT2}, and Slutsky's lemma \cite[Lemma 2.8]{contimap}, we may conclude that (conditioned on $\bX$)
\begin{align}
\bh_i^T\bY\stackrel{d}{\longrightarrow} \mathscr{Y}+RX_i.
\end{align}
Using the last results, and Lemmas \ref{continuouslemma} and \ref{aplem:DCT2}, we obtain that\footnote{In our case, the sequence of random variables $\phi\p{\abs{\bh_i^T\bY}^2}$ meet the asymptotic uniform integrability assumption of Lemma \ref{aplem:DCT2}, for the various choices of $\phi$ according to \eqref{magnetddd}-\eqref{rhoa3}.}
\begin{align}
\frac{1}{n}\sum_{i=1}^n\phi\p{\abs{\bh_i^T\bY}^2} - \frac{1}{n}\bE\ppp{\sum_{i=1}^n\phi\p{\abs{\mathscr{Y}+RX_i}^2}\Biggm\vert\bX}\to0.
\label{lastonpAp}
\end{align}
Now, applying the SLLN on \eqref{lastonpAp}, we finally may write that
\begin{align}
\frac{1}{n}\sum_{i=1}^n\phi\p{\abs{\bh_i^T\bY}^2}\to \bE\pp{\phi\p{\abs{\mathscr{Y}+RX}^2}},
\end{align}
a.s. as $n\to\infty$, where the expectation is taken w.r.t. the product measure corresponding to $\mathscr{Y}$, and $X$ which is distributed according to a mixture of two measures: Dirac measure at $0$ with weight $1-m_a$, and a Gaussian measure with zero mean and variance $\sigma^2$ and weight $m_a$. Equivalently, the last result can be rewritten as
\begin{align}
\frac{1}{n}\sum_{i=1}^n\phi\p{\abs{\bh_i^T\bY}^2}\to \bE\pp{\phi\p{\abs{\mathscr{X}}^2}}\label{AsymReSad},
\end{align}
a.s. as $n\to\infty$, where the expectation over $\mathscr{X}$ is now taken w.r.t. a mixture of two measures: Gaussian measure with zero mean and variance $\p{m_a\sigma^2R+R/\beta}$ and weight $1-m_a$, and a Gaussian measure with zero mean and variance $\p{m_a\sigma^2R+R/\beta+R^2\sigma^2}$ and weight $m_a$. 

Next, we wish to apply the last general asymptotic result to the saddle point equations given in \eqref{saddlepoo}, and obtain
\begin{align}
&\gamma^\circ =-\frac{1}{2}\bE\ppp{\pp{1+\tanh\p{\frac{L\p{m^\circ}\abs{\mathscr{X}}^2-\gamma^\circ}{2}}}\left.\frac{\mathrm{d}L\p{m}}{\mathrm{d}m}\right|_{m=m^\circ}\abs{\mathscr{X}}^2}-\left.\frac{\mathrm{d}t\p{m}}{\mathrm{d}m}\right|_{m=m^\circ},\label{f1}\\
&m^\circ = \frac{1}{2}\bE\ppp{1+\tanh\p{\frac{L\p{m^\circ}\abs{\mathscr{X}}^2-\gamma^\circ}{2}}},\label{f2}\\
&\rho_1^\circ = \rho_4^\circ=\frac{1}{2}\bE\ppp{\pp{1+\tanh\p{\frac{L\p{m^\circ}\abs{\mathscr{X}}^2-\gamma^\circ}{2}}}\abs{\mathscr{X}}^2},\label{f3}\\
&\rho_2^\circ = \frac{1}{4}\bE\ppp{\pp{1+\tanh\p{\frac{L\p{m^\circ}\abs{\mathscr{X}}^2-\gamma^\circ}{2}}}^2},\label{f4}\\
&\rho_3^\circ = \frac{1}{4}\bE\ppp{\pp{1+\tanh\p{\frac{L\p{m^\circ}\abs{\mathscr{X}}^2-\gamma^\circ}{2}}}^2\abs{\mathscr{X}}^2},\label{f5}
\end{align}
where for \eqref{f1}-\eqref{f5} the following choices of $\phi$ have been used
\begin{align}
\phi(x) &= \frac{1}{2}\pp{1+\tanh\p{\frac{L(m^\circ)x-\gamma^\circ}{2}}}\frac{\partial L\p{m^\circ}}{\partial m^\circ}x\label{fundd1}\\
\phi(x) &= \frac{1}{2}\pp{1+\tanh\p{\frac{L(m^\circ)x-\gamma^\circ}{2}}}\label{fundd2}\\
\phi(x) &= \frac{1}{2}\pp{1+\tanh\p{\frac{L(m^\circ)x-\gamma^\circ}{2}}}x\label{fundd3}\\
\phi(x) &= \frac{1}{4}\pp{1+\tanh\p{\frac{L(m^\circ)x-\gamma^\circ}{2}}}^2\label{fundd4}
\end{align}
and
\begin{align}
\phi(x) = \frac{1}{4}\pp{1+\tanh\p{\frac{L(m^\circ)x-\gamma^\circ}{2}}}^2x\label{fundd5},
\end{align}
respectively. Indeed, the convergence of $\rho_i^\circ$ for $i=1,2,3$, in \eqref{saddlepoo3}-\eqref{saddlepoo5}, follows directly by considering the choices in \eqref{fundd3}-\eqref{fundd5}, and using \eqref{AsymReSad}, respectively. However, the convergence of \eqref{saddlepoo1} and \eqref{saddlepoo2} is more delicate. Specifically, consider, for example, the convergence of \eqref{saddlepoo2} (the convergence of \eqref{saddlepoo1} is handled in a similar manner), and let $m_n^\circ$ designate the solution of \eqref{saddlepoo2} for a fixed $n$ (now we emphasize the dependency of the saddle point on $n$), that is,
\begin{align}
m_n^\circ = \frac{1}{2n}\sum_{i=1}^n\pp{1+\tanh\p{\frac{L\p{m_n^\circ}\abs{\by^T\bh_i}^2-\tilde{\gamma}}{2}}} \define \phi_n(m_n^\circ),
\end{align}
for any $\tilde{\gamma}$. We already saw that for a fixed $x$, $\phi_n(x)\to \phi_\infty(x)$ a.s. pointwise. Now, we wish to show that the sequence of random variables $\ppp{m_n^\circ}$ converges to the solution of $m^\circ = \phi_\infty(m^\circ)$. To this end, note that the sequence $\ppp{m_n^\circ}$ is bounded\footnote{Letting $\gamma_n^\circ$ designate the solution of \eqref{saddlepoo1} for a fixed $n$, the boundedness is, essentially, guaranteed by definition. Alternatively, it can be shown that the set of vectors $\ppp{\by,\ppp{\bh_i}}$ for which $\ppp{\gamma_n^\circ}$ is bounded, is of probability 1, for large $k$ and $n$.} in a compact set, and thus, by Bolzano-Weierstrass theorem, there must exist a converging subsequence $\ppp{m_{n_l}^\circ}$ along this sequence. Denoting its limit by $m_\infty^\circ$, we get
\begin{align}
m_\infty^\circ = \lim_{n\to\infty}\phi_n(m_{n_l}^\circ).
\end{align}
However, due to the fact that $\phi_n(\cdot)$ is continuous, we have that
\begin{align}
m_\infty^\circ = \phi_\infty(m_\infty^\circ).\label{equation}
\end{align}
Finally, we show the existence of a solution to \eqref{saddlepoo2}. This is equivalent to showing that there exists a solution $x_0\in\pp{0,1}$ to the equation $x = \frac{1}{2}\pp{1+\tanh(f(x))}$. This follows from the fact that $y(x) = \frac{1}{2}\pp{1+\tanh(f(x))}$ is a bounded between zero and one and thus must have an intersection with the linear function $y(x) = x$ within the interval $\pp{0,1}$. 
\end{proof}

\section{Mathematical Tools}\label{sec:mathtoll}
\begin{lemma}[\cite{SilversteinBai}][Matrix Inversion Lemma]\label{aplem:1}
Let $\bU$ be an $N \times N$ invertible matrix and $\bx \in \mathbb{C}^N$, $c \in\mathbb{C}$ for which $\bU + c\bx\bx^H$ is invertible. Then
\begin{align}
\bx^H\p{\bU+c\bx\bx^H}^{-1} = \frac{\bx^H\bU^{-1}}{1+c\bx^H\bU^{-1}\bx}.
\end{align}
\end{lemma}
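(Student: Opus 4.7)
The plan is to verify the identity directly by right-multiplying both sides by the invertible matrix $\bU + c\bx\bx^H$. This reduces the claim to the algebraic identity
\begin{align}
\bx^H = \frac{\bx^H\bU^{-1}\p{\bU + c\bx\bx^H}}{1 + c\bx^H\bU^{-1}\bx},
\end{align}
which we can check by expanding the numerator as $\bx^H\bU^{-1}\bU + c\bx^H\bU^{-1}\bx\bx^H = \p{1 + c\bx^H\bU^{-1}\bx}\bx^H$, after which the scalar prefactor cancels with the denominator. This takes one line and no machinery beyond distributivity.

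An equivalent route, slightly more informative but also slightly longer, would be to invoke the full Sherman--Morrison formula $\p{\bU + c\bx\bx^H}^{-1} = \bU^{-1} - \frac{c\,\bU^{-1}\bx\bx^H\bU^{-1}}{1 + c\bx^H\bU^{-1}\bx}$, left-multiply by $\bx^H$, and combine the two resulting terms over a common denominator. Either path yields the stated expression.

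The only genuine subtlety is the well-definedness of the scalar $1 + c\bx^H\bU^{-1}\bx$ in the denominator. Since $\bU + c\bx\bx^H$ is assumed invertible, the matrix determinant lemma gives $\det\p{\bU + c\bx\bx^H} = \det(\bU)\p{1 + c\bx^H\bU^{-1}\bx}$, so that factor is nonzero automatically; no separate case analysis is required. I do not expect any real obstacle here, as this is a rank-one version of Sherman--Morrison restricted to the row $\bx^H$.
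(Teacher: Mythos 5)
Your proof is correct. The paper does not actually prove this lemma --- it is stated in the Mathematical Tools appendix with a citation to the random-matrix-theory literature --- so there is no internal argument to compare against; your direct verification (right-multiplying by $\bU+c\bx\bx^H$ and cancelling the scalar $1+c\bx^H\bU^{-1}\bx$) is the standard one-line check, and your observation that $\det\p{\bU+c\bx\bx^H}=\det\p{\bU}\p{1+c\bx^H\bU^{-1}\bx}$ guarantees the denominator is nonzero closes the only point that could otherwise be questioned.
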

\begin{lemma}[Matrix Inversion Lemma 2]\label{aplem:2}
Under the assumptions of Lemma \ref{aplem:1},
\begin{align}
\p{\bU+c\bx\bx^H}^{-1} = \bU^{-1}-\frac{\bU^{-1}c\bx\bx^H\bU^{-1}}{1+c\bx^H\bU^{-1}\bx}.
\end{align}
\end{lemma}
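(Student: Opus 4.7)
The plan is to verify the claimed identity directly, by multiplying the proposed right-hand side by $\bU+c\bx\bx^H$ and checking that the product reduces to the identity matrix $\bI_N$. Since both matrices are $N\times N$, this one-sided check suffices (an invertible square matrix has a unique two-sided inverse). Let me denote $\bV\triangleq\bU^{-1}-\frac{c\bU^{-1}\bx\bx^H\bU^{-1}}{1+c\bx^H\bU^{-1}\bx}$, the conjectured inverse. I would then form $(\bU+c\bx\bx^H)\bV$, expand, and collect like terms.

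The expansion yields four terms: $\bU\bU^{-1}=\bI_N$; the cross-term $-\frac{c\bU\bU^{-1}\bx\bx^H\bU^{-1}}{1+c\bx^H\bU^{-1}\bx}=-\frac{c\bx\bx^H\bU^{-1}}{1+c\bx^H\bU^{-1}\bx}$; the term $c\bx\bx^H\bU^{-1}$; and finally $-\frac{c^2\bx\bx^H\bU^{-1}\bx\bx^H\bU^{-1}}{1+c\bx^H\bU^{-1}\bx}$. In the last term, the scalar $\bx^H\bU^{-1}\bx$ can be pulled out, so the three non-identity terms combine into
\begin{align*}
c\bx\bx^H\bU^{-1}\left[1-\frac{1}{1+c\bx^H\bU^{-1}\bx}-\frac{c\bx^H\bU^{-1}\bx}{1+c\bx^H\bU^{-1}\bx}\right],
\end{align*}
and the bracket equals zero because $1-\frac{1+c\bx^H\bU^{-1}\bx}{1+c\bx^H\bU^{-1}\bx}=0$. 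Hence $(\bU+c\bx\bx^H)\bV=\bI_N$, as desired.

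A minor preliminary point is that the scalar $1+c\bx^H\bU^{-1}\bx$ must be nonzero, but this is guaranteed by the assumption (inherited from Lemma \ref{aplem:1}) that $\bU+c\bx\bx^H$ is invertible, together with the matrix determinant lemma $\det(\bU+c\bx\bx^H)=\det(\bU)\cdot(1+c\bx^H\bU^{-1}\bx)$; alternatively, one can cite Lemma \ref{aplem:1} itself, whose conclusion presupposes the scalar in its denominator is nonzero. There is no real obstacle here, this is just the Sherman--Morrison identity; the only thing to be careful about is bookkeeping the scalar factor $1+c\bx^H\bU^{-1}\bx$ so that the three terms telescope cleanly.
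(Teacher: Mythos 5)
Your verification is correct: the paper itself gives no proof of this lemma, treating it (like Lemma \ref{aplem:1}) as a standard random-matrix-theory tool, so there is nothing to compare against; your direct check that $(\bU+c\bx\bx^H)\bV=\bI_N$ via the telescoping of the scalar factor $1+c\bx^H\bU^{-1}\bx$ is the canonical Sherman--Morrison argument and is complete. Your side remark on why $1+c\bx^H\bU^{-1}\bx\neq 0$ (via the matrix determinant lemma and the assumed invertibility of $\bU+c\bx\bx^H$) correctly closes the only potential gap.
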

\begin{lemma}[Resolvent Identity]\label{aplem:3}
Let $\bU$ and $\bV$ be two invertible complex matrices of size $N\times N$. Then
\begin{align}
\bU^{-1}-\bV^{-1} = -\bU^{-1}\p{\bU-\bV}\bV^{-1}.
\end{align}
\end{lemma}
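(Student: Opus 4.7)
The plan is to verify this identity by a direct algebraic manipulation using only the definition of the matrix inverse and distributivity of matrix multiplication. Since both $\bU$ and $\bV$ are assumed invertible, the expressions $\bU^{-1}$, $\bV^{-1}$, and their products with $(\bU-\bV)$ are all well-defined, so no existence issues arise and the argument is purely formal.

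The main step is to expand the right-hand side and cancel. Starting from $-\bU^{-1}(\bU-\bV)\bV^{-1}$, I would distribute the inner factor across the difference to obtain $-\bU^{-1}\bU\bV^{-1} + \bU^{-1}\bV\bV^{-1}$. Using $\bU^{-1}\bU = \bI$ on the first summand and $\bV\bV^{-1} = \bI$ on the second yields $-\bV^{-1} + \bU^{-1}$, which is exactly $\bU^{-1} - \bV^{-1}$. This gives the claim in one short display.

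A complementary sanity check (and an alternative route if one prefers to proceed from the left-hand side) is to multiply $\bU^{-1} - \bV^{-1}$ on the left by $\bU$ and on the right by $\bV$. This yields $\bV - \bU = -(\bU - \bV)$, so that $\bU(\bU^{-1}-\bV^{-1})\bV = -(\bU-\bV)$, and multiplying by $\bU^{-1}$ and $\bV^{-1}$ on the appropriate sides recovers the stated identity. Either direction relies only on associativity of matrix multiplication and the defining property of the inverse.

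There is essentially no obstacle here; the identity holds over any (not necessarily commutative) ring in which $\bU$ and $\bV$ are units, and no analytic or dimensional hypotheses beyond invertibility enter the argument. The only point to note is that the order of factors matters, so care must be taken to keep $\bU^{-1}$ on the left and $\bV^{-1}$ on the right throughout the expansion.
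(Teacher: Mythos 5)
Your proof is correct: expanding $-\bU^{-1}(\bU-\bV)\bV^{-1}=-\bU^{-1}\bU\bV^{-1}+\bU^{-1}\bV\bV^{-1}=\bU^{-1}-\bV^{-1}$ is exactly the standard one-line verification, and the paper itself states this lemma without proof precisely because this is the only step needed. Your additional sanity check and the remark about validity over any ring are fine but not required.
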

The following lemma is a powerful tool which is widely used in RMT with many versions and extensions.

\begin{lemma}[\cite{baisilbook,coulbook}]\label{aplem:trace}
Let $\bA_N \in\mathbb{C}^{N\times N}$ be a sequence of deterministic matrices, and let $\bx_N\in\mathbb{C}^N$ have i.i.d. complex entries with zero mean, variance $1/N$, and bounded $l$th order moment $\bE\abs{\sqrt{N}X_i}^l\leq\nu_l$. Then, for any $p \geq 1$
\begin{align}
\bE\abs{\bx_N^H\bA_N\bx_N-\frac{1}{N}\tr\bA_N}^p\leq\frac{C_p}{N^{p/2}}\p{\frac{1}{N}\tr\bA_N\bA_N^H}^{p/2}\pp{\nu_4^{p/2}+\nu_{2p}}
\end{align}
where $C_p$ is a constant depending only on $p$. Also, if $\by_N\in\mathbb{C}^N$ is another random vector with i.i.d. complex entries with zero mean, variance $1/N$, bounded $l$th order moment $\bE\abs{\sqrt{N}Y_i}^l\leq\nu_l$, and independent of $\bx_N$, then:
\begin{align}
\bE\abs{\bx_N^H\p{\bA_N-\frac{1}{N}\tr\bA_N}\by_N}^p\leq\frac{C_p}{N^{p/2}}\p{\frac{1}{N}\tr\bA_N\bA_N^H}^{p/2}\pp{\nu_2^{p}+\nu^2_{p}}.
\end{align}
\end{lemma}

\begin{lemma}[\cite{coulbook,Sebastian}][Trace Lemma]\label{aplem:4}
Let $\p{\bA_N}_{N\geq1}$, $\bA_N\in\mathbb{C}^{N\times N}$, be a sequence of random matrices and $\p{\bx_N}_{N\geq1} = \pp{X_{1,N},\ldots,X_{N,N}}^T\in\mathbb{C}^{N}$, a sequence of random vectors of i.i.d. entries, statistically independent of $\p{\bA_N}_{N\geq1}$. Assume that $\bE\ppp{X_{i,j}}=0$, $\bE\ppp{\abs{X_{i,j}}^2}=1$, $\bE\ppp{\abs{X_{i,j}}^8}<\infty$, and that $\bA$ has bounded spectral norm (in the a.s. sense). Then, a.s.,
\begin{align}
\frac{1}{N}\bx_N^H\bA_N\bx_N-\frac{1}{N}\tr\bA_N\to0.
\end{align}
\end{lemma}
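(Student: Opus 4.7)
}
The plan is to reduce the a.s.\ convergence to the finite-sample moment bound in Lemma~\ref{aplem:trace} and then apply a Borel--Cantelli argument, with a truncation to handle the fact that $\bA_N$ is random and only a.s.\ bounded in spectral norm. The obvious approach---take expectations of Lemma~\ref{aplem:trace} with $p=4$ and sum over $N$---fails directly, because the resulting bound involves $\bE\big[\p{\tfrac{1}{N}\tr\bA_N\bA_N^H}^{p/2}\big]$, and the hypothesis only guarantees that $\sup_N \norm{\bA_N}<\infty$ \emph{almost surely}, not in $L^p$. A truncation over the spectral norm is therefore needed.

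First I would exploit the independence of $\bx_N$ and $\bA_N$ by conditioning. Given $\bA_N$, Lemma~\ref{aplem:trace} (applied with $p=4$, which is legitimate because $\bE|X_{i,j}|^8<\infty$ by hypothesis, so $\nu_4$ and $\nu_8$ are finite) yields
\begin{align}
\bE\!\ppp{\abs{\tfrac{1}{N}\bx_N^H\bA_N\bx_N-\tfrac{1}{N}\tr\bA_N}^{4}\Bigm\vert\bA_N}\leq\frac{C_4}{N^{2}}\p{\tfrac{1}{N}\tr\bA_N\bA_N^H}^{2}\pp{\nu_4^{2}+\nu_8}
\end{align}
almost surely, where I have absorbed the leading $1/N^4$ from restating the lemma for $\tfrac{1}{N}\bx^H\bA\bx$ versus $\bx^H\bA\bx/\sqrt{N}$. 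Using $\tfrac{1}{N}\tr\bA_N\bA_N^H\le\norm{\bA_N}^{2}$, this becomes $\le C\,N^{-2}\,\norm{\bA_N}^{4}$ a.s.

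Next I would truncate on the spectral norm. For each positive integer $K$, define $E_K\triangleq\ppp{\sup_{N\geq1}\norm{\bA_N}\leq K}$. By the assumption that $\limsup_N \norm{\bA_N}<\infty$ a.s., there exists a finite (random) $M$ with $\norm{\bA_N}\leq M$ for all $N$ large enough, and by discarding a null set we get $\pr\p{\bigcup_{K\in\mathbb{N}}E_K}=1$. On $E_K$, Markov's inequality combined with the conditional moment bound gives, for any fixed $\delta>0$,
\begin{align}
\pr\!\ppp{\abs{\tfrac{1}{N}\bx_N^H\bA_N\bx_N-\tfrac{1}{N}\tr\bA_N}>\delta,\;E_K}\leq\frac{C\,K^{4}}{\delta^{4}\,N^{2}},
\end{align}
which is summable in $N$. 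By the Borel--Cantelli lemma, the event $\ppp{\abs{\tfrac{1}{N}\bx_N^H\bA_N\bx_N-\tfrac{1}{N}\tr\bA_N}>\delta\ \text{i.o.}}\cap E_K$ has probability zero. Letting $K\to\infty$ and using $\pr\p{E_K^{c}}\to0$, this probability is zero unconditionally. Taking a countable union over rational $\delta>0$ concludes that $\tfrac{1}{N}\bx_N^H\bA_N\bx_N-\tfrac{1}{N}\tr\bA_N\to0$ almost surely.

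The main (minor) obstacle is precisely the interplay between the a.s.\ spectral-norm bound on $\bA_N$ and the $L^p$-type estimate coming from Lemma~\ref{aplem:trace}: one cannot take unconditional expectations directly without an integrability hypothesis on $\norm{\bA_N}$. The truncation over $E_K$ sidesteps this cleanly, and the choice $p=4$ is exactly matched to the assumed eighth-moment hypothesis on the entries of $\bx_N$ (smaller $p$ would leave a non-summable tail, while larger $p$ would require higher moments of $X_{i,j}$).
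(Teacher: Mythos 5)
Your proof is correct. Note, however, that the paper does not actually prove Lemma \ref{aplem:4}: it is imported from the cited references and stated without proof, so there is no in-paper argument to compare against. Your derivation --- conditioning on $\bA_N$ to invoke the fourth-moment bound of Lemma \ref{aplem:trace} (legitimate, since the eighth-moment hypothesis supplies $\nu_8<\infty$), bounding $\frac{1}{N}\tr\bA_N\bA_N^H$ by $\norm{\bA_N}^2$, truncating on the events $E_K=\{\sup_N\norm{\bA_N}\le K\}$ to convert the almost-sure spectral bound into a usable deterministic one, and finishing with Markov plus Borel--Cantelli --- is exactly the standard route by which this trace lemma is established in the random-matrix literature, and it correctly isolates the only delicate point, namely that $\norm{\bA_N}$ is bounded only almost surely and not in $L^p$, so unconditional expectations cannot be taken directly. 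The choice $p=4$ is indeed the minimal exponent making the tail summable under the stated moment assumption. One cosmetic remark: the rescaling $\bx_N\mapsto\bx_N/\sqrt{N}$ maps the unit-variance setting of Lemma \ref{aplem:4} exactly onto the variance-$1/N$ setting of Lemma \ref{aplem:trace} with the same normalized moments $\nu_l$, so there is no residual factor of $1/N^4$ to ``absorb''; the displayed conditional bound is precisely what the lemma gives after this substitution.
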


\begin{lemma}[\cite{Peacock}]\label{aplem:5}
Let $\p{a_n}_{n\geq1},\p{b_n}_{n\geq1},\p{\bar{a}_n}_{n\geq1},\p{\bar{b}_n}_{n\geq1}$ be four infinite sequences of complex random variables. Assume that $a_n\asymp\bar{a}_n$ and $b_n\asymp\bar{b}_n$ in the a.s. sense. 
\begin{itemize}
\item If $\abs{a_n}$, $\abs{\bar{b}_n}$ and/or $\abs{\bar{a}_n}$, $\abs{b_n}$ are a.s. bounded, then a.s.,
$$
a_nb_n\asymp \bar{a}_n\bar{b}_n.
$$
\item If $\abs{a_n}$, $\abs{\bar{b}_n}^{-1}$ and/or $\abs{\bar{a}_n}$, $\abs{b_n}^{-1}$ are a.s. bounded, then a.s., 
$$
a_n/b_n\asymp \bar{a}_n/\bar{b}_n.
$$ 
\end{itemize}
\end{lemma}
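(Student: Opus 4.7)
The proof is essentially algebraic bookkeeping around two telescoping identities. The plan is to fix a full-measure event $\Omega_0$ on which both $a_n-\bar a_n\to 0$ and $b_n-\bar b_n\to 0$ hold (the intersection of the two a.s.\ events has full measure), and work pointwise on $\Omega_0$ from that point on, so all statements below are understood to hold a.s.

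For the product claim, I would start from the identity
\[
a_n b_n - \bar{a}_n \bar{b}_n \;=\; a_n(b_n - \bar{b}_n) \,+\, \bar{b}_n(a_n - \bar{a}_n).
\]
If $\abs{a_n}$ and $\abs{\bar{b}_n}$ are a.s.\ bounded, each summand is an a.s.\ bounded sequence times a sequence tending to $0$ a.s., so the right-hand side tends to $0$ a.s. The dual decomposition
\[
a_n b_n - \bar{a}_n \bar{b}_n \;=\; \bar{a}_n(b_n - \bar{b}_n) \,+\, b_n(a_n - \bar{a}_n)
\]
handles the ``or'' clause under the alternative boundedness pair $\abs{\bar a_n}$, $\abs{b_n}$. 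This already proves the first bullet.

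For the quotient, the first step is a transfer of boundedness: if $\abs{\bar b_n}^{-1}$ is a.s.\ bounded by some $M$, then a.s.\ $\abs{\bar b_n}\ge 1/M$ for all large $n$, and since $b_n-\bar b_n\to 0$ a.s., a.s.\ $\abs{b_n}\ge 1/(2M)$ for all large $n$, so $\abs{b_n}^{-1}$ is a.s.\ bounded too; the symmetric argument works starting from $\abs{b_n}^{-1}$. Hence under either hypothesis, both $\abs{b_n}^{-1}$ and $\abs{\bar b_n}^{-1}$ are a.s.\ bounded, and thus so is $\abs{b_n\bar b_n}^{-1}$. Then the identity
\[
\frac{a_n}{b_n} - \frac{\bar{a}_n}{\bar{b}_n} \;=\; \frac{\bar{b}_n(a_n-\bar a_n) \,-\, \bar{a}_n(b_n-\bar b_n)}{b_n \bar{b}_n}
\]
reduces the claim to the product case already proved: under $\abs{a_n}, \abs{\bar b_n}^{-1}$ bounded we get $\abs{\bar a_n}$ bounded (as $a_n-\bar a_n\to 0$), so $\bar a_n(b_n-\bar b_n)\to 0$; and $\bar b_n(a_n-\bar a_n)\to 0$ since $a_n - \bar{a}_n \to 0$ and $\bar b_n$ is bounded on $\Omega_0$ (again by $a_n$ bounded and $a_n-\bar a_n\to 0$, combined with the assumed boundedness; alternatively, one can simply divide numerator and denominator by $\bar b_n$ to avoid needing $\bar b_n$ bounded at all). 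Dividing by the a.s.\ bounded $(b_n\bar b_n)^{-1}$ preserves the a.s.\ vanishing.

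There is no real obstacle; the only subtlety worth flagging is the transfer step $\abs{\bar b_n}^{-1}$ bounded $\Rightarrow \abs{b_n}^{-1}$ bounded, which is what lets the denominator in the quotient identity be handled. Everything else is a product of ``bounded $\times$ vanishing'' on a single full-measure event.
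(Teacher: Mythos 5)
The paper does not actually prove this lemma: it is imported verbatim from \cite{Peacock} as an off-the-shelf tool in the ``Mathematical Tools'' appendix, so there is no in-paper argument to compare yours against. Judged on its own, your proof is essentially correct and is the natural telescoping argument. The product case is clean: both decompositions are algebraically valid, and each summand is an a.s.\ bounded sequence times an a.s.\ vanishing one, all on a single full-measure event. For the quotient case, the transfer step --- from $\abs{\bar b_n}^{-1}$ a.s.\ bounded and $b_n-\bar b_n\to 0$ to $\abs{b_n}^{-1}$ a.s.\ bounded for all large $n$ --- is precisely the point that needs to be made, and you make it correctly.

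One justification in the quotient step is wrong as written, though your own parenthetical already contains the repair. You assert that $\bar b_n(a_n-\bar a_n)\to 0$ because ``$\bar b_n$ is bounded\ldots by $a_n$ bounded and $a_n-\bar a_n\to 0$, combined with the assumed boundedness.'' That inference yields boundedness of $\bar a_n$, not of $\bar b_n$; the hypothesis that $\abs{\bar b_n}^{-1}$ is bounded only keeps $\bar b_n$ away from zero, and $\bar b_n$ may well tend to infinity, in which case the numerator of $\bigl[\bar b_n(a_n-\bar a_n)-\bar a_n(b_n-\bar b_n)\bigr]/(b_n\bar b_n)$ cannot be handled term by term. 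Your alternative --- divide numerator and denominator by $\bar b_n$ to get
\begin{align}
\frac{a_n}{b_n}-\frac{\bar a_n}{\bar b_n} \;=\; \frac{a_n-\bar a_n}{b_n} \;-\; \frac{\bar a_n}{\bar b_n}\cdot\frac{b_n-\bar b_n}{b_n}
\end{align}
--- is the correct route: $\abs{\bar a_n/\bar b_n}$ and $\abs{b_n}^{-1}$ are a.s.\ bounded by the hypotheses together with the transfer step, and the remaining factors vanish a.s. Promote that identity from an aside to the main argument and the proof is complete.
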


\begin{lemma}[\cite{coulbook,Sebastian}]\label{aplem:6}
Let $\p{\bA_N}_{N\geq1}$, $\bA_N\in\mathbb{C}^{N\times N}$, be a sequence of matrices with uniformly bounded spectral norm, and $\p{\bB_N}_{N\geq1}$, $\bB_N\in\mathbb{C}^{N\times N}$ be random Hermitian,
with eigenvalues $\lambda_1\leq\ldots\leq\lambda_N$ such that, with probability one, there exist $\epsilon>0$ for which $\lambda_1>\epsilon$ for all large $N$. Then, for $\bv_N\in\mathbb{C}^N$,
\begin{align}
\frac{1}{N}\tr\bA_N\bB_N^{-1}-\frac{1}{N}\tr\bA_N\p{\bB_N+\bv_N\bv_N^H}^{-1}\to0
\end{align}
a.s. as $N\to\infty$, where $\bB_N^{-1}$ and $\p{\bB_N+\bv\bv^H}^{-1}$ are assumed to exist with probability 1. 
\end{lemma}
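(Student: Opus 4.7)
The plan is to attack the statement by an explicit rank-one identity followed by crude operator-norm bounds, leveraging the fact that the suspicious denominator term automatically absorbs any growth in $\|\bv_N\|$. First I would apply Lemma \ref{aplem:2} (the Sherman-Morrison form of the matrix inversion lemma) to write
\begin{align}
\p{\bB_N+\bv_N\bv_N^H}^{-1} = \bB_N^{-1} - \frac{\bB_N^{-1}\bv_N\bv_N^H\bB_N^{-1}}{1+\bv_N^H\bB_N^{-1}\bv_N},
\end{align}
which is legal because, by hypothesis, with probability one $\bB_N$ is Hermitian with $\lambda_1(\bB_N) > \epsilon$ for all large $N$, so $\bB_N^{-1}$ exists and is positive definite, hence $\bv_N^H\bB_N^{-1}\bv_N \ge 0$ and the scalar denominator is at least one. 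Substituting this identity and using the cyclic property of the trace yields, for all large $N$ almost surely,
\begin{align}
\frac{1}{N}\tr\bA_N\bB_N^{-1} - \frac{1}{N}\tr\bA_N\p{\bB_N+\bv_N\bv_N^H}^{-1} = \frac{1}{N}\,\frac{\bv_N^H\bB_N^{-1}\bA_N\bB_N^{-1}\bv_N}{1+\bv_N^H\bB_N^{-1}\bv_N}.
\end{align}

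Next I would bound the right-hand side purely by spectral norms, avoiding any trace lemma. Using $|\bv^H\bM\bv| \le \|\bM\|\cdot\|\bv\|^2$ and submultiplicativity, the numerator satisfies
\begin{align}
|\bv_N^H\bB_N^{-1}\bA_N\bB_N^{-1}\bv_N| \le \|\bA_N\|\cdot\|\bB_N^{-1}\bv_N\|^2 \le \|\bA_N\|\cdot\|\bB_N^{-1}\|\cdot\bv_N^H\bB_N^{-1}\bv_N,
\end{align}
where the last step uses $\|\bB_N^{-1}\bv_N\|^2 = \bv_N^H\bB_N^{-2}\bv_N \le \|\bB_N^{-1}\|\cdot\bv_N^H\bB_N^{-1}\bv_N$, valid since $\bB_N^{-1}$ is positive definite. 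Hence
\begin{align}
\left|\frac{1}{N}\tr\bA_N\bB_N^{-1} - \frac{1}{N}\tr\bA_N\p{\bB_N+\bv_N\bv_N^H}^{-1}\right| \le \frac{\|\bA_N\|\cdot\|\bB_N^{-1}\|}{N}\cdot\frac{\bv_N^H\bB_N^{-1}\bv_N}{1+\bv_N^H\bB_N^{-1}\bv_N} \le \frac{\|\bA_N\|\cdot\|\bB_N^{-1}\|}{N}.
\end{align}
The elegant cancellation here is the heart of the argument: the fraction $x/(1+x) \le 1$ for $x\ge0$ erases any dependence on $\|\bv_N\|$.

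Finally I would invoke the hypotheses to conclude. By assumption $\sup_N\|\bA_N\| < \infty$, and for a.e.\ realization there is an $N_0$ such that $\|\bB_N^{-1}\| \le \epsilon^{-1}$ for all $N\ge N_0$. Thus the right-hand side is $O(1/N)$ almost surely and vanishes as $N\to\infty$. The main obstacle I anticipated, namely controlling potentially unbounded $\bv_N$, is in fact no obstacle at all once the matrix inversion lemma is applied in the right form; the geometric cancellation $x/(1+x)\le 1$ makes the bound entirely independent of $\bv_N$, so the result only requires the bounded spectral norm of $\bA_N$ and the a.s.\ lower bound on the eigenvalues of $\bB_N$.
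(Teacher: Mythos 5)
Your proof is correct and complete. The paper itself offers no proof of this lemma --- it is imported verbatim from the cited references --- so there is nothing internal to compare against; what you have written is the standard self-contained argument. Your chain (Sherman--Morrison, cyclic trace, the bound $\bv_N^H\bB_N^{-2}\bv_N\le\norm{\bB_N^{-1}}\,\bv_N^H\bB_N^{-1}\bv_N$ valid by positive definiteness of $\bB_N^{-1}$, and finally $x/(1+x)\le 1$) is exactly the mechanism behind the quantitative rank-1 perturbation bound that the paper states separately as Lemma~\ref{aplem:tracedif}, namely $\abs{\tr\p{\p{\bB-z\bI_N}^{-1}-\p{\bB+\bx\bx^H-z\bI_N}^{-1}}\bA}\leq\norm{\bA}/\text{dist}\p{z,\mathbb{R}^+}$; writing $\bB_N=(\bB_N-\epsilon\bI_N)-(-\epsilon)\bI_N$ with $\bB_N-\epsilon\bI_N\succeq 0$ and $z=-\epsilon$, that lemma divided by $N$ yields the present statement immediately, so an alternative one-line route was available. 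The one feature worth highlighting is that your derivation makes explicit why no hypothesis whatsoever is needed on $\bv_N$: the factor $\bv_N^H\bB_N^{-1}\bv_N/(1+\bv_N^H\bB_N^{-1}\bv_N)$ is bounded by one uniformly, so the error is $\calO(1/N)$ pathwise on the almost-sure event where $\sup_N\norm{\bA_N}<\infty$ and $\lambda_1(\bB_N)>\epsilon$ eventually. No gaps.
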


\begin{lemma}[\cite{Silverstein12}][Rank-1 Perturbation Lemma]\label{aplem:tracedif}
Let $z\in\mathbb{C}\setminus\mathbb{R}^+$, $\bA\in\mathbb{C}^{N\times N}$ and $\bB\in\mathbb{C}^{N\times N}$ where $\bB$ is Hermitian nonnegative definite, and $\bx\in\mathbb{C}^N$. Then,
\begin{align}
\abs{\tr\p{\p{\bB-z\bI_N}^{-1}-\p{\bB+\bx\bx^H-z\bI_N}^{-1}}\bA}\leq\frac{\norm{\bA}}{\text{dist}\p{z,\mathbb{R}^+}}
\end{align}
where $\text{dist}\p{\cdot,\cdot}$ denotes the Euclidean distance.
\end{lemma}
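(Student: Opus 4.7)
\textbf{Proof proposal for the Rank-1 Perturbation Lemma (Lemma~\ref{aplem:tracedif}).}

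The plan is to first reduce the left-hand side to a rank-one quantity using the Sherman--Morrison identity (Lemma~\ref{aplem:2}), and then bound that quantity via Cauchy--Schwarz plus a spectral argument on the Hermitian nonnegative matrix $\bB$. Concretely, applying Lemma~\ref{aplem:2} to $\bU=\bB-z\bI_N$ with rank-one update $\bx\bx^H$ gives
\begin{equation*}
(\bB-z\bI_N)^{-1}-(\bB+\bx\bx^H-z\bI_N)^{-1}=\frac{(\bB-z\bI_N)^{-1}\bx\bx^H(\bB-z\bI_N)^{-1}}{1+\bx^H(\bB-z\bI_N)^{-1}\bx}.
\end{equation*}
Multiplying by $\bA$, taking the trace and exploiting the cyclic property yields
\begin{equation*}
\tr\bigl[\bigl((\bB-z\bI_N)^{-1}-(\bB+\bx\bx^H-z\bI_N)^{-1}\bigr)\bA\bigr]=\frac{\bx^H(\bB-z\bI_N)^{-1}\bA(\bB-z\bI_N)^{-1}\bx}{1+\bx^H(\bB-z\bI_N)^{-1}\bx}.
\end{equation*}
Applying Cauchy--Schwarz (or, equivalently, the inequality $|\bv^H\bA\bu|\leq\|\bA\|\,\|\bu\|\,\|\bv\|$) to the numerator, together with the Hermiticity identity $\bigl((\bB-z\bI_N)^{-1}\bigr)^H=(\bB-\bar z\bI_N)^{-1}$, upper bounds the absolute value of the above by $\|\bA\|\cdot s\,\big/\,|1+\bx^H(\bB-z\bI_N)^{-1}\bx|$, where $s\triangleq\bx^H(\bB-\bar z\bI_N)^{-1}(\bB-z\bI_N)^{-1}\bx\geq 0$.

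It then remains to show the purely scalar inequality $s\leq d\cdot |1+\bx^H(\bB-z\bI_N)^{-1}\bx|^{-1}\cdot d^{-1}$... rather, $s/|1+\bx^H(\bB-z\bI_N)^{-1}\bx|\leq 1/d$ with $d\triangleq\mathrm{dist}(z,\mathbb{R}^+)$. Diagonalizing $\bB=\sum_i\lambda_i\bu_i\bu_i^H$ (with $\lambda_i\geq 0$) and setting $\alpha_i\triangleq|\bu_i^H\bx|^2\geq 0$, this reduces to
\begin{equation*}
\frac{\sum_i\alpha_i/|\lambda_i-z|^2}{\bigl|1+\sum_i\alpha_i/(\lambda_i-z)\bigr|}\leq\frac{1}{d}.
\end{equation*}
Write $z=u+iv$ and split into two cases depending on the sign of $u=\mathrm{Re}(z)$. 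When $u>0$ (so necessarily $v\neq 0$ and $d=|v|$), the imaginary part of the denominator equals $v\sum_i\alpha_i/|\lambda_i-z|^2$, so the triangle inequality for complex numbers gives $|1+\sum_i\alpha_i/(\lambda_i-z)|\geq |v|\sum_i\alpha_i/|\lambda_i-z|^2$ and the claim follows. When $u\leq 0$, one has $d=\sqrt{u^2+v^2}$ and $\lambda_i-u\geq |u|$, so a direct computation of $|1+\sum_i\alpha_i/(\lambda_i-z)|^2$ as (real part)$^2$ + (imaginary part)$^2$ gives $|1+\sum\cdot|^2\geq(1+|u|s)^2+v^2s^2\geq d^2s^2$.

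The main work is the scalar inequality in the last step; the case $u\leq 0$ with $v\neq 0$ is the most delicate, because neither the imaginary-part bound alone nor the real-part bound alone suffices, and one must combine both via $|c|^2=(\mathrm{Re}\,c)^2+(\mathrm{Im}\,c)^2$ to pick up the full distance $d=\sqrt{u^2+v^2}$. Every other step is a direct application of the Sherman--Morrison formula and Cauchy--Schwarz, so no further technical difficulty is expected.
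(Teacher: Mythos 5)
Your proof is correct. Note that the paper itself does not prove this lemma at all---it is imported verbatim from the cited reference (Silverstein's work), so there is no in-paper argument to compare against; what you have written is essentially the standard proof from that literature. The two steps are sound: Sherman--Morrison (Lemma~\ref{aplem:2}) collapses the resolvent difference to a rank-one quadratic form, Cauchy--Schwarz with $\bigl((\bB-z\bI_N)^{-1}\bigr)^H=(\bB-\bar z\bI_N)^{-1}$ bounds the numerator by $\norm{\bA}\cdot s$ (both factors $\norm{(\bB-z\bI_N)^{-1}\bx}$ and $\norm{(\bB-\bar z\bI_N)^{-1}\bx}$ equal $\sqrt{s}$ because resolvents of the Hermitian $\bB$ commute), and your two-case scalar estimate is exactly right: for $\re z>0$ the imaginary part of the denominator alone gives $|v|s=d\,s$, while for $\re z\le 0$ one needs $\lambda_i-u\ge|u|$ (this is where nonnegative definiteness of $\bB$ enters) so that $\bigl|1+\bx^H(\bB-z\bI_N)^{-1}\bx\bigr|^2\ge(1+|u|s)^2+v^2s^2\ge(u^2+v^2)s^2=d^2s^2$. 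The degenerate case $s=0$ forces $\bx=\bze$, for which the left-hand side vanishes, so the Sherman--Morrison denominator is never zero when it matters. No gaps.
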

The following result can be found in \cite[Th. 2.3]{contimap}.
\begin{lemma}[The continuous mapping theorem]\label{continuouslemma} 
Let $\Phi:\mathbb{R}\to\mathbb{R}$ be an almost-everywhere continuous mapping, and let $\ppp{J_i}$ be a sequence of real-valued random variables that converges weakly to a real-valued random variable $J$. Then, $\ppp{\Phi\p{J_i}}$ converges weakly to the real-valued random variable $\Phi\p{J}$. 
\end{lemma}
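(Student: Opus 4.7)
The plan is to reduce the statement to an almost sure convergence argument via Skorohod's representation theorem, then exploit the almost-everywhere continuity of $\Phi$ pointwise, and finally transfer back to weak convergence of the composed sequence.

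First, I would invoke Skorohod's representation theorem for the real line: since $J_i$ converges weakly to $J$ on the Polish space $\mathbb{R}$, there exists a common probability space $(\tilde{\Omega},\tilde{\calF},\tilde{\pr})$ and random variables $\ppp{\tilde{J}_i}$, $\tilde{J}$ defined on it, such that $\tilde{J}_i$ has the same law as $J_i$, $\tilde{J}$ has the same law as $J$, and $\tilde{J}_i(\tilde{\omega})\to\tilde{J}(\tilde{\omega})$ for every $\tilde{\omega}$ in a set $\tilde{A}$ with $\tilde{\pr}(\tilde{A})=1$.

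Second, let $D_\Phi\subset\mathbb{R}$ denote the set of discontinuity points of $\Phi$. By the almost-everywhere-continuity hypothesis (interpreted, as is standard in this setting, as $\pr\p{J\in D_\Phi}=0$; this is automatic in the application to Lemma \ref{continuouslemma} inside the paper, since the limit law $\mathscr{Y}+RX$ is absolutely continuous), we have $\tilde{\pr}\p{\tilde{J}\in D_\Phi}=0$. Therefore, for $\tilde{\omega}$ in a set of full probability, $\tilde{J}(\tilde{\omega})\notin D_\Phi$ and $\tilde{J}_i(\tilde{\omega})\to\tilde{J}(\tilde{\omega})$, so the continuity of $\Phi$ at $\tilde{J}(\tilde{\omega})$ yields $\Phi\p{\tilde{J}_i(\tilde{\omega})}\to\Phi\p{\tilde{J}(\tilde{\omega})}$. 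Hence $\Phi\p{\tilde{J}_i}\to\Phi\p{\tilde{J}}$ almost surely on $\tilde{\Omega}$.

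Third, almost sure convergence implies weak convergence, so $\Phi\p{\tilde{J}_i}$ converges weakly to $\Phi\p{\tilde{J}}$. Because $\Phi$ is (Borel) measurable, $\Phi\p{\tilde{J}_i}$ has the same distribution as $\Phi\p{J_i}$ and $\Phi\p{\tilde{J}}$ has the same distribution as $\Phi\p{J}$; weak convergence depends only on distributions, so $\Phi\p{J_i}$ converges weakly to $\Phi\p{J}$, as claimed. The main subtlety is the null-set condition on $D_\Phi$: one must verify, in each invocation, that the limit law charges no discontinuity, which in the paper's usage is guaranteed by the absolute continuity of the Gaussian-mixture limit distribution.
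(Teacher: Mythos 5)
Your proof is correct, but note that the paper itself does not prove this lemma at all: it is quoted verbatim from the cited reference (\cite[Th.~2.3]{contimap}), so there is no in-paper argument to compare against. Your Skorohod-representation route is one of the two standard proofs of the continuous mapping theorem (the other goes through the portmanteau characterization with open sets, showing $\liminf_i \pr\p{\Phi(J_i)\in G}\geq \pr\p{\Phi(J)\in G}$ for open $G$ by relating $\Phi^{-1}(G)$ to its interior minus the discontinuity set), and all three steps you give --- representation on a common space, pointwise convergence off the null set $\ppp{\tilde{J}\in D_\Phi}$, and transfer back since weak convergence depends only on laws --- are sound. Your most valuable contribution is flagging the genuine imprecision in the lemma as stated: ``almost-everywhere continuous'' must be read as $\pr\p{J\in D_\Phi}=0$ (continuity $\pr_J$-a.e., not merely Lebesgue-a.e.) for the conclusion to hold, and you correctly verify that this is harmless in the paper's actual invocation, where the limit law of $\mathscr{Y}+RX_i$ is a Gaussian mixture, hence absolutely continuous, and the relevant maps $\phi$ in \eqref{fundd1}--\eqref{fundd5} are in fact continuous everywhere. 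Two small technical remarks: the discontinuity set $D_\Phi$ is automatically an $F_\sigma$, hence Borel, so $\pr\p{J\in D_\Phi}$ is well defined; and one should assume $\Phi$ Borel measurable (as you do parenthetically) so that $\Phi(J_i)$ is a random variable --- Lebesgue-a.e.\ continuity alone does not guarantee Borel measurability.
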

The following result can be found in \cite[Theorem 2.20]{contimap}.
\begin{lemma}[Portmanteau's lemma (extended version)]\label{aplem:DCT2}
Suppose that $\p{X_n}_n$ is a sequence of nonnegative random variables for which $X_n\to X_\infty$ a.s. as $n\to\infty$, where $\bE X_\infty<\infty$. Then, $\bE X_n\to\bE X_\infty$ as $n\to\infty$ if and only if $\p{X_n}_n$ is uniformly integrable, that is, if, for each $\varepsilon>0$, there exists $c = c(\varepsilon)$ such that 
\begin{align}
\limsup_{n\to\infty}\bE\ppp{\abs{X_n}\Ind_{\ppp{\abs{X_n}\geq c}}}<\varepsilon.
\end{align}
\end{lemma}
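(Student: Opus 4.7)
The statement is Vitali's convergence theorem specialized to nonnegative random variables with an almost sure (rather than in-probability) convergence hypothesis. The proof splits naturally into the two directions of the biconditional, and in both directions the central device is a truncation of $X_n$ at a level $c$, chosen so that the bounded convergence theorem can be applied to the truncated part while the tail is controlled either by the uniform integrability hypothesis or by the convergence of expectations. I plan to present the two directions separately.

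For the sufficiency direction ($\Leftarrow$), I would fix $c>0$ and decompose
\begin{align*}
\bE X_n \;=\; \bE\!\left[X_n\wedge c\right] + \bE\!\left[(X_n-c)_+\right].
\end{align*}
Since $X_n\wedge c\to X_\infty\wedge c$ almost surely and the sequence is uniformly bounded by $c$, bounded convergence yields $\bE[X_n\wedge c]\to \bE[X_\infty\wedge c]$. The tail satisfies $(X_n-c)_+\leq X_n\Ind_{\{X_n\geq c\}}$, so uniform integrability furnishes, for any $\varepsilon>0$, a threshold $c=c(\varepsilon)$ with $\limsup_n \bE[(X_n-c)_+]<\varepsilon$. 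The analogous bound for the limit, $\bE X_\infty - \bE[X_\infty\wedge c]\to 0$ as $c\to\infty$, follows from monotone or dominated convergence using the integrability of $X_\infty$. Combining these three facts gives $\limsup_n|\bE X_n-\bE X_\infty|<2\varepsilon$ for every $\varepsilon>0$, hence the required convergence.

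For the necessity direction ($\Rightarrow$), I would replace the hard truncation by a smooth cutoff to sidestep potential discontinuities of $\Ind_{\{x\geq c\}}$ at an atom of the law of $X_\infty$. Namely, let $\varphi_c:[0,\infty)\to[0,\infty)$ be the continuous piecewise linear function that equals $x$ on $[0,c]$, equals $0$ on $[c+1,\infty)$, and interpolates linearly in between, so that $0\leq \varphi_c(x)\leq x$ and $\varphi_c$ is bounded. Continuity and boundedness give $\varphi_c(X_n)\to\varphi_c(X_\infty)$ almost surely, and bounded convergence yields $\bE\varphi_c(X_n)\to \bE\varphi_c(X_\infty)$. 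Since $X_n\Ind_{\{X_n\geq c+1\}}\leq X_n-\varphi_c(X_n)$, the hypothesis $\bE X_n\to\bE X_\infty$ then entails
\begin{align*}
\limsup_{n\to\infty}\bE\!\left[X_n\Ind_{\{X_n\geq c+1\}}\right] \;\leq\; \bE X_\infty - \bE\varphi_c(X_\infty).
\end{align*}
As $c\to\infty$, $\varphi_c(X_\infty)\uparrow X_\infty$ pointwise and is dominated by the integrable $X_\infty$, so dominated convergence forces the right-hand side to $0$; picking $c$ large enough yields the uniform integrability condition with any prescribed $\varepsilon>0$.

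The only genuinely subtle point is the risk that the hard truncation $X_n\wedge c$ fails to converge at the threshold when $P(X_\infty=c)>0$; I circumvent this in the necessity direction by using the continuous cutoff $\varphi_c$, which avoids the need to choose $c$ outside the (at most countable) set of atoms of $X_\infty$. In the sufficiency direction the threshold issue does not actually bite, because $x\mapsto x\wedge c$ is already continuous, so the Portmanteau/bounded convergence argument applies without modification. No appeal to any deeper machinery beyond Fatou, bounded, and dominated convergence is needed.
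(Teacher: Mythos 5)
Your proof is correct. Note, however, that the paper does not actually prove this lemma: it is stated purely as an imported auxiliary result, cited as Theorem 2.20 of the reference \cite{contimap} (van der Vaart's uniform-integrability characterization of convergence of moments), and is then invoked twice in Appendix C to pass from almost-sure convergence of the normalized conditional variance $g_n$ (and of the empirical sums $\frac{1}{n}\sum_i\phi(|\bh_i^T\bY|^2)$) to convergence of their expectations. So there is no in-paper argument to compare against; what you have supplied is a self-contained proof of the standard Vitali-type theorem. Both directions of your argument are sound: the decomposition $\bE X_n=\bE[X_n\wedge c]+\bE[(X_n-c)_+]$ together with the domination $(X_n-c)_+\le X_n\Ind_{\{X_n\ge c\}}$ handles sufficiency (and the monotonicity of $c\mapsto\bE[X_n\Ind_{\{X_n\ge c\}}]$ lets you enlarge $c$ so that the same threshold also controls $\bE X_\infty-\bE[X_\infty\wedge c]$), while the continuous cutoff $\varphi_c$ in the necessity direction correctly sidesteps the atom issue and yields $\limsup_n\bE[X_n\Ind_{\{X_n\ge c+1\}}]\le\bE X_\infty-\bE\varphi_c(X_\infty)\to0$. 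Your observation that the hard truncation is harmless in the sufficiency direction (because $x\mapsto x\wedge c$ is continuous) is also accurate. The only cosmetic remark is that the lemma as stated uses the asymptotic ($\limsup_n$) form of uniform integrability rather than the $\sup_n$ form; your proof is consistent with that weaker form, which is all the paper needs.
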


\begin{lemma}\label{aplem:finite}
Let $\bx_N\in\mathbb{C}^N$ be a random vector with i.i.d. entries each with zero mean and unit variance, and let $\bA_N\in\mathbb{C}^{N\times N}$ such that $\tr\pp{\p{\bA_N^H\bA_N}^{1/2}}$ is uniformly bounded for all $N$. Then, for any finite $p$,
\begin{align}
\bE\abs{\bx_N^H\bA_N\bx_N}^p<\infty
\end{align}
for all $N$.
\end{lemma}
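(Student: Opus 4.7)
The plan is to reduce $|\bx_N^H\bA_N\bx_N|^p$ to a product of a deterministic factor (controlled by the hypothesis on $\bA_N$) times a purely random factor of the form $\|\bx_N\|^{2p}$, and then to bound each factor separately. Although the lemma statement only mentions zero mean and unit variance, the standing assumption of the paper, $\bE(\sqrt{n}H_{i,j})^l\leq\upsilon_l<\infty$ for all $l\in\mathbb{N}$, supplies finiteness of all moments of the entries of $\bx_N$; this will be used freely below.

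First, I would invoke the Cauchy--Schwarz inequality to obtain
\begin{equation}
|\bx_N^H\bA_N\bx_N|\;\leq\;\|\bx_N\|\cdot\|\bA_N\bx_N\|\;\leq\;\|\bA_N\|_{\mathrm{op}}\,\|\bx_N\|^2,
\end{equation}
where $\|\bA_N\|_{\mathrm{op}}=\sigma_{\max}(\bA_N)$ is the spectral norm. Since the largest singular value is bounded by the sum of all singular values, $\|\bA_N\|_{\mathrm{op}}\leq\sum_i\sigma_i(\bA_N)=\tr[(\bA_N^H\bA_N)^{1/2}]$, which by hypothesis is bounded uniformly in $N$ by some finite constant $M$. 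Raising to the $p$-th power yields the deterministic bound
\begin{equation}
|\bx_N^H\bA_N\bx_N|^p\;\leq\;M^p\,\|\bx_N\|^{2p}.
\end{equation}

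Second, I would estimate $\bE\|\bx_N\|^{2p}$ for the fixed (finite) dimension $N$. Writing $\|\bx_N\|^{2p}=\bigl(\sum_{i=1}^N|x_i|^2\bigr)^p$ and applying the power-mean (Jensen) inequality gives $\|\bx_N\|^{2p}\leq N^{p-1}\sum_{i=1}^N|x_i|^{2p}$. Taking expectations and using the i.i.d.\ assumption,
\begin{equation}
\bE\|\bx_N\|^{2p}\;\leq\;N^{p}\,\bE|x_1|^{2p}\;<\;\infty,
\end{equation}
where finiteness follows from the moment hypothesis mentioned above. Combining the two bounds gives $\bE|\bx_N^H\bA_N\bx_N|^p\leq M^p N^p\,\bE|x_1|^{2p}<\infty$, as claimed.

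There is no serious technical obstacle here: the only subtlety is the mild mismatch between what the lemma literally states about $\bx_N$ (zero mean, unit variance) and what is actually needed (finite $2p$-th moment), which is covered by the paper's global assumption on the normalized moments of $\bH$. A slightly tighter route would be to bound $\|\bA_N\|_{\mathrm{op}}$ by the Schatten-$1$ norm (nuclear norm) without passing through Cauchy--Schwarz on $\|\bA_N\bx_N\|$, using $|\bx_N^H\bA_N\bx_N|=|\tr(\bA_N\bx_N\bx_N^H)|\leq\|\bA_N\|_*\|\bx_N\|^2$, but this yields the same bound.
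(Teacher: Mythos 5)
Your proof establishes the statement as literally written, but by a genuinely different route than the paper, and the difference is not cosmetic. The paper centers the quadratic form at its trace: it writes $\bx_N^H\bA_N\bx_N=\p{\bx_N^H\bA_N\bx_N-\tr\bA_N}+\tr\bA_N$, applies $\abs{a+b}^p\le 2^{p-1}\p{\abs{a}^p+\abs{b}^p}$, controls the fluctuation term with the quadratic-form moment inequality of Lemma \ref{aplem:trace} (whose bound is governed by $\p{\tr\bA_N\bA_N^H}^{p/2}\le\p{\tr\pp{\p{\bA_N^H\bA_N}^{1/2}}}^{p}$), and bounds the deterministic term via $\abs{\tr\bA_N}\le\tr\pp{\p{\bA_N^H\bA_N}^{1/2}}$. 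The resulting bound is uniform in $N$. Your chain $\abs{\bx_N^H\bA_N\bx_N}\le\norm{\bA_N}_{\mathrm{op}}\norm{\bx_N}^2$ followed by $\bE\norm{\bx_N}^{2p}\le N^p\bE\abs{x_1}^{2p}$ gives a bound of order $N^p$, and this loss is intrinsic to the decoupling: since the entries have unit variance, $\bE\norm{\bx_N}^{2p}\ge\p{\bE\norm{\bx_N}^2}^p=N^p$, so no estimate that separates $\bA_N$ from $\bx_N$ through the operator norm can avoid it. The whole point of centering at the trace is that $\bx_N^H\bA_N\bx_N$ concentrates near $\tr\bA_N=\calO(1)$ even though $\norm{\bx_N}^2\sim N$.

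This matters because of how the lemma is used: in the proof of \eqref{showthis3} it is invoked to assert that $\bE\abs{\beta\bzti_j^T\calHrii\bQsr^T\bz_j}^{2p}$ is \emph{bounded} as $n\to\infty$, so that after Cauchy--Schwarz the product with the $\calO\p{n^{-p/2}}$ factor coming from Lemma \ref{aplem:trace} still vanishes. With a bound growing polynomially in $n$ in place of $\calO(1)$, that product no longer tends to zero and the surrounding argument collapses. So while you have proved ``finite for each $N$,'' the uniform-in-$N$ version is what the paper actually needs, and your approach cannot deliver it. (Your side remark about the hypotheses is fair: both proofs tacitly use finiteness of the $2p$-th moments of the entries, which the lemma statement omits but the paper's global assumption $\bE(\sqrt{n}H_{i,j})^l\leq\upsilon_l$ supplies; the paper's own proof needs it too, through Lemma \ref{aplem:trace}.)
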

\begin{proof}
By Jensen's inequality we may write that
\begin{align}
\bE\abs{\bx_N^H\bA_N\bx_N}^p&\leq 2^{p-1}\p{\bE\abs{\bx_N^H\bA_N\bx_N-\tr\bA_N}^p+\abs{\tr\bA_N}^p}<\infty\nonumber
\end{align}
where the second inequality follows from the facts that: the first term in the r.h.s. is bounded by Lemma \ref{aplem:trace}, and the second term is bounded by assumption due to the fact that $\abs{\tr\bA_N}\leq\tr\pp{\p{\bA_N^H\bA_N}^{1/2}}$.
\end{proof}

\ifCLASSOPTIONcaptionsoff
  \newpage
\fi
\bibliographystyle{IEEEtran}
\bibliography{strings}
\end{document}